\title{Deterministic and Las Vegas Algorithms\\ for Sparse Nonnegative Convolution}
\author{Karl Bringmann}
\author{Nick Fischer}
\author{Vasileios Nakos}
\affil{Saarland University and Max Planck Institute for Informatics,\\Saarland Informatics Campus, Saarbrücken, Germany}
\begin{document}
\maketitle

\begin{abstract}
Computing the convolution $A \star B$ of two length-$n$ integer vectors $A, B$ is a core problem in several disciplines. It frequently comes up as a subroutine in various problem domains, e.g.\ in algorithms for Knapsack, $k$-SUM, All-Pairs Shortest Paths, and string pattern matching problems. For these applications it typically suffices to compute convolutions of \emph{nonnegative} vectors. This problem can be classically solved in time $O(n \log n)$ using the Fast Fourier Transform.

However, in many applications the involved vectors are \emph{sparse} and hence one could hope for \emph{output-sensitive} algorithms to compute nonnegative convolutions. This question was raised by Muthukrishnan and solved by Cole and Hariharan (STOC~'02) by a randomized algorithm running in near-linear time in the (unknown) output-size $t$. Chan and Lewenstein (STOC~'15) presented a deterministic algorithm with a $2^{\Order(\sqrt{\log t\cdot  \log\log n})}$ overhead in running time and the additional assumption that a small superset of the output is given; this assumption was later removed by Bringmann and Nakos~(ICALP~'21).

In this paper we present the first deterministic near-linear-time algorithm for computing sparse nonnegative convolutions. This immediately gives improved deterministic algorithms for the state-of-the-art of output-sensitive Subset Sum, block-mass pattern matching, $N$-fold Boolean convolution, and others, matching up to $\log$-factors the fastest known randomized algorithms for these problems. Our algorithm is a blend of algebraic and combinatorial ideas and techniques.

Additionally, we provide two fast \emph{Las Vegas} algorithms for computing sparse nonnegative convolutions. In particular, we present a simple  $O(t \log^2 t)$ time algorithm, which is an accessible alternative to Cole and Hariharan's algorithm. Subsequently, we further refine this new algorithm to run in Las Vegas time $O(t \log t \cdot \log \log t)$, which matches the running time of the dense case apart from the $\log \log t$ factor.
\end{abstract}

\begin{funding}
This work is part of the project TIPEA that has received funding from the European Research Council (ERC) under the European Unions Horizon 2020 research and innovation programme (grant agreement No.~850979).
\end{funding}

\section{Introduction} \label{sec:introduction}
The \emph{convolution} of two integer vectors $A, B$ is the vector $A \conv B$ which is defined coordinate-wise by~$(A \conv B)_k = \sum_{i+j=k} A_i \cdot B_j$. Computing convolutions of integer vectors $A, B$ is a fundamental computational primitive, which arises in several disciplines of science and engineering. It has been a vital component in fields like signal processing, deep learning (convolutional neural networks) and computer vision. Inside traditional algorithm design it is crucially used as a subroutine in $k$-SUM~\cite{ChanL15}, Subset Sum~\cite{Bringmann17,KoiliarisX19,BringmannN20,BringmannN21b} and various string problems~\cite{FischerP74,Indyk98,ColeH02}, to name a few.

The aforementioned applications of interest within theoretical computer science typically come in the form of \emph{nonnegative convolution}, where the vectors $A, B$ have nonnegative entries. In fact, for many applications it suffices to solve the simpler \emph{Boolean convolution} problem---here, the vectors $A, B$ have $0$--$1$ entries and the task is to compute the vector~$A \ostar B$ with entries $(A \ostar B)_k = \bigvee_{i+j=k} A_i \land B_j$. This problem is equivalent to the computation of sumsets~$X+ Y = \{x + y : x \in X,\allowbreak y \in Y\}$ and this interpretation shows up very often in $k$-SUM and Subset Sum algorithms.

Moreover, Boolean (or nonnegative) convolutions form an essential ingredient to the \emph{partition-and-convolve} design paradigm. The typical task for a problem approachable by a partition-and-convolve algorithm is to check for solutions of \emph{all prescribed sizes $k$}. The general idea is to partition the search space into (usually) two parts, each of which is solved recursively. In that way, a size-$k$ solution to the original problem is split into two parts of sizes $i, j$ such that \mbox{$i + j = k$}. Therefore, to check whether there exists a size-$k$ solution to the original problem we recombine the recursive computations by a Boolean convolution. This approach is very flexible and can also be applied to other convolution-type problems; for instance, using nonnegative convolutions in place of Boolean convolutions corresponds to \emph{counting} solutions of all prescribed sizes. 

Classically, the convolution of length-$n$ vectors can be computed in deterministic time $\Order(n \log n)$ using the Fast Fourier Transform (FFT). It is widely conjectured that this algorithm is optimal  but the evidence is scarce~\cite{Ailon13,AfshaniFKL19} and this remains an important open problem. Furthermore, it is known that nonnegative convolution, general convolution (where entries can also be negative or complex) and the computation of Discrete Fourier transforms (DFT) are computationally equivalent, as each one can be reduced to the other.\footnote{For dense vectors, the nonnegativity assumption can be removed by appropriately increasing all entries. For the equivalence of computing convolutions and DFTs we remark that it is standard to express convolutions using DFT and inverse DFT, and the reverse direction is known as well~\cite{Bluestein70} (assuming complex exponentials can be evaluated in constant time).}

The situation is less clear for \emph{sparse convolutions} in all regards. Here the goal is to achieve \emph{output-sensitive} algorithms where we analyze the running time in terms of $t$, the combined number of nonzero entries in $A$, $B$ and $A \conv B$ (in the Boolean and nonnegative cases, $t$ is dominated by the number of nonzero entries in $A \conv B$). The need for such a primitive appears in many situations, e.g.~\cite{ColeH02,HassaniehAKI12,ChanL15,BringmannN20,BringmannN21}, as one may often be interested in algorithms that run in time proportional to the actual complexity of the output rather the space the output points live in. These type of problems have been investigated by different communities, including fine-grained complexity~\cite{ChanL15} and string algorithms~\cite{ColeH02}, computer algebra~\cite{Roche18} and compressed sensing~\cite{HassaniehIKP12,FoucartR13}, and they are very closely related to the famous sparse recovery problem, see e.g.~\cite{GilbertI10,FoucartR13}. Subsequently, we review the most relevant literature on sparse convolutions.

\paragraph*{Randomized Algorithms for Sparse Convolution}
A large body of work addresses this problem~\cite{Muthukrishnan95,ColeH02,Roche08,MonaganP09,VanDerHoevenL12,ArnoldR15,ChanL15,Roche18,Nakos20,GiorgiGC20,BringmannFN21}. The first breakthrough was a randomized Las Vegas algorithm for sparse nonnegative convolution in time $\Order(t \log^2 n)$, obtained by Cole and Hariharan~\cite{ColeH02}. Subsequent work improved upon this result in two directions: On the one hand, the nonnegativity assumption can be removed by a Monte Carlo algorithm with the same running time $\Order(t \log^2 n)$~\cite{Nakos20}, or with bit-complexity $\widetilde\Order(t \log n)$~\cite{GiorgiGC20}. On the other hand, there exists an improved Monte Carlo algorithm for sparse nonnegative convolution in time $\Order(t \log t + \polylog n)$~\cite{BringmannFN21}, which is optimal assuming that the dense problem requires FFT time $\Theta(n \log n)$. Most of these algorithms rely on a hashing-based approach.

Another more algebraic avenue to sparse convolution algorithms is via polynomial evaluation and interpolation. At the heart of this approach lies an old algorithm called \emph{Prony's method}~\cite{Prony1795} which allows to efficiently interpolate a sparse polynomial. Unfortunately, this algorithm involves heavy algebraic computations (see~\cite{Roche18} for a detailed survey) and the currently only known way to achieve near-linear running time in $t$ uses randomization~\cite{Roche18}. Thus, near-linear-time sparse convolution algorithms resulting from this approach are randomized as well. We compare our work against Prony's method in more detail in \cref{sec:tech}.

\bigskip\noindent
Note that in many applications of interest a sparse convolution algorithm is called many times with varying output size and thus it is desirable to have deterministic (or Las Vegas) algorithms for performing such a task. Additionally, the fastest known algorithm in the dense case (FFT) is deterministic, and thus it is natural to wonder to what extent randomization is necessary in the sparse case.

\paragraph*{Deterministic Algorithms for Sparse Convolution}
The first nontrivial deterministic result for sparse nonnegative convolution is a data structure that, after preprocessing one of the vectors in time $\Theta(t^2)$, computes the convolution with any given query vector in near-linear time $\Order(t \log^3 t)$~\cite{AmirKP07}. Later, Chan and Lewenstein~\cite{ChanL15} devised a deterministic algorithm running in time $t \cdot 2^{\Order(\sqrt{\log t \log\log n})}$, without preprocessing. Their algorithm is limited in the sense that it expects as an additional input the support (i.e., the set of nonzero coordinates) of $A \conv B$. This assumption can be removed as shown by Bringmann and Nakos~\cite{BringmannN21}; in \cref{sec:tech} we provide more details. In summary, the state-of-the-art deterministic algorithms for computing sparse nonnegative convolutions either require heavy precomputations or fail to achieve near-linear time $\Order(t \polylog n)$. Our driving question is therefore:
\begin{center}
    \medskip
    \emph{Can sparse nonnegative convolutions be computed\\in deterministic time $\Order(t \polylog n)$?}
\end{center}

\subsection{Our Results} \label{sec:results}
Our main result is an affirmative answer to our driving question.

\begin{theorem}[Deterministic] \label{thm:deterministic}
There is a deterministic algorithm to compute the convolution of two nonnegative vectors $A, B \in \Nat^n$ in time $\Order(t \polylog(n\Delta))$, where $t = \norm{A \star B}_0$ denotes the number of nonzero entries in $A \conv B$ and $\Delta = \norm{A \star B}_\infty$ is the maximum entry size.
\end{theorem}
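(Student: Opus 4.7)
The plan is to design a deterministic version of the standard randomized hashing paradigm for sparse convolution (as in Cole--Hariharan). At a high level, I aim to reduce sparse convolution to a logarithmic number of dense convolutions on vectors of length $\widetilde O(t)$, which can be computed deterministically via FFT, and then to \emph{decode} the desired output from these compressed convolutions. The first simplifying reduction is to dispense with the assumption that $t$ is known: I would guess $t \in \{1, 2, 4, \ldots\}$ by doubling, verify each guess, and halt at the first successful verification. Similarly, $\Delta$ is controlled by $n \cdot \|A\|_\infty \cdot \|B\|_\infty$, so no separate unknown is introduced.

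For a fixed sparsity guess~$t^\ast$, the central primitive is modular hashing: for a prime $p = \Theta(t^\ast \polylog(n\Delta))$, define the hashed vectors $\widehat A^{(p)}_b = \sum_{j\equiv b \pmod p} A_j$ and analogously $\widehat B^{(p)}$, and compute $\widehat A^{(p)} \star \widehat B^{(p)}$ in time $\widetilde O(t^\ast)$ by dense FFT. Along with this I would compute a \emph{weighted} variant where each entry is scaled by its index; if an output bucket is \emph{isolated} (contains a unique nonzero coordinate of $A\star B$), the ratio of the weighted to unweighted bucket recovers the coordinate, and the unweighted value is the coordinate's value. The algorithm then collects the recovered coordinates, subtracts them from both sides, and iterates on the residual vectors. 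The key combinatorial fact I would invoke is that, for any support $S$ of size $\le t^\ast$, only an $O(t^\ast/p)$-fraction of primes in a suitable range can fail to isolate a given element; therefore a deterministically chosen family of $O(\polylog(n\Delta))$ primes isolates every support element, yielding a deterministic analog of the randomized peeling step.

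The main obstacle, and where the algebraic/combinatorial blend is needed, is the \emph{adaptivity} problem: the good primes depend on the unknown support $S$, so we cannot simply precompute them once and for all. I would overcome this by bootstrapping---either using the deterministic near-linear-time subroutine of Bringmann--Nakos (which removed the ``support given'' assumption of Chan--Lewenstein) to obtain a coarse approximation of $S$ on which to then certify prime quality, or by interleaving prime selection with partial recovery so that already-recovered coordinates guide the next choice of primes. Iterated $O(\log n)$ many times, the residual support shrinks geometrically, since each iteration isolates a constant fraction of the remaining mass; the total work sums to $\widetilde O(t)$.

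Finally I would need a deterministic \emph{verification} step to certify the candidate output $C$ against $A\star B$. The idea is to exploit that $A\star B - C$ is a sparse polynomial and two distinct sparse polynomials cannot agree on sufficiently many evaluation points; concretely, evaluating $A\star B - C$ at $O(t \polylog(n\Delta))$ deterministically chosen points (e.g.\ powers corresponding to DFTs of lengths equal to the primes already used) and checking that every evaluation agrees with the same evaluations of $A$ and $B$ (which can be derived from the hashed FFTs already computed) suffices. I expect the hardest step to be the adaptive prime-selection argument: it must guarantee that the residual sparsity strictly decreases in each round without relying on randomness, and must do so while handling the possibility that spurious ``ghost'' coordinates are temporarily introduced by non-isolated buckets before being canceled in later rounds.
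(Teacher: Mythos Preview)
Your proposal has a genuine gap at exactly the point you flag as hardest: the adaptive prime-selection argument. The claim that ``a deterministically chosen family of $O(\polylog(n\Delta))$ primes isolates every support element'' does not follow from the counting you sketch. What is true is that for a \emph{fixed} pair $z \neq z'$ in the support, only $O(\log n)$ primes in a given range fail to separate them; but the support has $\Theta(t^2)$ pairs, so a priori $\Theta(t)$ primes may be needed, and \emph{which} primes work depends on the unknown support $S$. Your two proposed remedies do not close this: invoking the Bringmann--Nakos subroutine to approximate $S$ reintroduces the $2^{O(\sqrt{\log t \cdot \log\log n})}$ overhead you are trying to eliminate, and ``interleaving prime selection with partial recovery'' faces the same chicken-and-egg issue---you cannot certify a prime is good for the residual support without already knowing that support. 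Even with a $1$-sparsity tester to suppress ghost coordinates, a deterministically chosen prime may isolate \emph{nothing}, and you have no mechanism to guarantee geometric decrease of the residual without randomness.

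The paper's proof abandons the hashing/peeling paradigm entirely. It first computes a size-$O(t)$ superset $T \supseteq \supp(A\star B)$ by a recursive folding trick: halve $A,B$ to $A',B'$ of length $n/2$, recursively compute $A'\star B'$, and read off $T$ from $\supp(A'\star B')$ shifted by multiples of $n/2$. Nonnegativity is essential here, since it rules out cancellations in the folded convolution. Given $T$, the algorithm evaluates $A,B$ at powers $\omega^0,\dots,\omega^{|T|-1}$ of an element $\omega$ of multiplicative order at least $n$ in a finite field $\Field_{p^{p-1}}$ with $p=\Theta(\log n)$ (such $\omega$ is constructed deterministically via results of Cheng), multiplies pointwise, and interpolates $A\star B$ supported on $T$ by solving a transposed Vandermonde system in $O(t\log^2 t)$ field operations. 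Repeating over $O(\log(n\Delta))$ such fields and combining via CRT recovers the integer entries. There is no isolation, no peeling, and no adaptive choice of moduli.
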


\noindent
This result improves the previously best known time $t \cdot 2^{\Order(\sqrt{\log t \cdot \log\log n})}$ obtained by~\cite{ChanL15,BringmannN21}.

As a corollary we can efficiently derandomize known algorithms for several problems which use sparse nonnegative convolution as a subroutine. For all these applications, we can simply replace the former randomized algorithms with our deterministic one in a black-box manner. Of course, for the same derandomization we could alternatively use the \mbox{$t \cdot 2^{\Order(\sqrt{\log t \log\log n})} = t \cdot n^{\order(1)}$}-time algorithm from~\cite{ChanL15,BringmannN21} and therefore our contribution can alternatively be seen as improving the best deterministic time from $T^{1+\order(1)}$ to $\widetilde\Order(T)$. Specifically, we obtain improvements for the following problems:
\begin{itemize}
\itemdesc{Output-Sensitive Subset Sum:} Given a set $X$ of integers and a threshold $\tau$, compute the set $S$ of all numbers less than $\tau$ which can be expressed as a subset sum of $X$. The best-known randomized algorithm runs in time $\widetilde\Order(|S|^{4/3})$~\cite{BringmannN20}, and it can be derandomized in same running time.
\itemdesc{$N$-fold Boolean Convolution:} Given $N$ Boolean vectors $A_1, \dots, A_N$, compute the Boolean convolution $A_1 \ostar \dots \ostar A_N$ (with or without wrap-around) in input- plus output-sensitive time. It was recently shown that this problem can be solved in randomized near-linear time $\Order(t \polylog n)$~\cite{BringmannN21}. Our derandomization achieves the same running time. This yields a new deterministic near-linear-time algorithm for Modular Subset Sum which is rather different than the known ones~\cite{AxiotisABJNTW21}, as discussed in~\cite{BringmannN21}.
\itemdesc{Block-Mass Pattern Matching:} Given a length-$n$ text $T$ and a length-$m$ pattern $P$ over the alphabet $\Nat$, the task is to output all possible indices $0 \leq k_0 \leq \dots \leq k_m \leq n$ such that \raisebox{0pt}[0pt][0pt]{$P_i = \sum_{k_i \leq j < k_{i+1}} T_j$} for all positions $i \in [m]$. Building on the data structure from~\cite{AmirKP07}, this problem is known to be solvable in deterministic time $\widetilde\Order(n + m)$ after preprocessing the text in time $\Order(n^2)$~\cite{AmirBP14}. The preprocessing time was later reduced to~$\Order(n^{1+\epsilon})$, for any $\varepsilon > 0$~\cite{ChanL15}. We entirely remove the necessity to precompute and thereby reduce the total running time to $\widetilde\Order(n + m)$.
\itemdesc{3SUM in Special Cases:} In a breakthrough paper, Chan and Lewenstein~\cite{ChanL15} use sophisticated techniques to obtain randomized and deterministic subquadratic algorithms for a variety of problems related to $3$-SUM, such as bounded monotone two-dimensional 3SUM, bounded monotone $(\min,+)$-convolution, clustered integer $3$-SUM, etc. The precise running time of their deterministic algorithm for these problems is $\Order(n^{1.864})$. Here we remove an~$\order(1)$ overhead in the exponent which is invisible due to rounding the constant in the exponent.
\end{itemize}

\noindent
In addition to our new deterministic algorithm, we improve the state-of-the-art \emph{Las Vegas} algorithms for sparse nonnegative convolutions in two regards: \emph{simplicity} and \emph{efficiency}. In fact, to the best of our knowledge the only known Las Vegas algorithm is due to Cole and Hariharan~\cite{ColeH02}; all randomized algorithms published later have only Monte Carlo guarantees~\cite{Roche08,ArnoldR15,Roche18,Nakos20,GiorgiGC20,BringmannFN21}. The expected running time of~\cite{ColeH02} is $\Order(t \log^2 n)$ and moreover, they prove the additional guarantee that their algorithm terminates in time $\Order(t \log^2 n)$ with high probability $1 - \frac1n$. However, the algorithm is very complicated, involves various string problems as subtasks and the precomputation of a large prime number. We provide an accessible alternative with the same theoretical guarantees; the simplest version can be summarized in 13 lines of pseudocode (\cref{alg:las-vegas} on \cpageref{alg:las-vegas}).

\begin{theorem}[Simple Las Vegas] \label{thm:las-vegas}
Given nonnegative vectors $A, B \in \Nat^n$, there exists an algorithm to compute their convolution $A \conv B$ in expected time $\Order(t \log^2 t)$, where $t = \norm{A \star B}_0$. Moreover, with probability $1 - \delta$ the running time is bounded by $\Order(t \log^2(t / \delta))$.
\end{theorem}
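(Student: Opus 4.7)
}
I would combine a random modular-hashing reduction to a small cyclic convolution with a weighted-hash position-recovery trick, iterative peeling, and a verification step to obtain the Las Vegas guarantee. Since $t$ is unknown in advance, the whole procedure is wrapped in a doubling search over guesses $\bar t = 1, 2, 4, \ldots$, returning the first answer that passes verification, which loses only a constant factor.

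Given a guess $\bar t \geq t$, the core subroutine samples a random prime $p = \Theta(\bar t)$ and builds the folded vectors $A'_b = \sum_{i \equiv b \pmod p} A_i$ together with the weighted variant $A''_b = \sum_{i \equiv b \pmod p} i \cdot A_i$, and analogously $B', B''$. It then computes the cyclic convolutions $C' = A' \star B' \bmod (x^p - 1)$ and $C'' = A'' \star B' + A' \star B'' \bmod (x^p - 1)$ by FFT in time $\Order(p \log p) = \Order(\bar t \log \bar t)$. The central observation is that for any bucket $b$ to which exactly one coordinate $k$ of the support of $A \star B$ maps, one has $C'_b = (A \star B)_k$ and $C''_b = k \cdot (A \star B)_k$, so the integer quotient $C''_b / C'_b$ reveals the position and the value simultaneously. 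A random prime of the right magnitude isolates a constant fraction of the support per round.

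The outer structure iterates this round with fresh randomness, adding recovered pairs to a running candidate vector $V$ and recomputing a residual. After $\Order(\log(t/\delta))$ rounds every coordinate is recovered with probability at least $1 - \delta$ by a standard Chernoff-style analysis, giving a total running time of $\Order(\bar t \log \bar t \cdot \log(t/\delta)) = \Order(t \log^2(t/\delta))$ with probability $1 - \delta$, and $\Order(t \log^2 t)$ in expectation. A final verification pass---computing $D' = A' \star B' \bmod (x^q - 1)$ for a freshly sampled prime $q = \Theta(\bar t)$ and comparing against the fold of $V$---filters incorrect outputs and triggers a restart from scratch when a mismatch is detected.

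The main obstacle I anticipate is the correctness of the isolation test: a non-isolated bucket may accidentally satisfy $C''_b / C'_b \in \Nat \cap [0, 2n-2]$, emitting a spurious pair $(k, v)$ that, if uncorrected, would yield an incorrect output. My plan is to treat each emission as tentative, subtract it from the folded vectors, and require that across one or more independent verification hashes the residual is everywhere zero before committing. Getting the correct balance between verification strength (ensuring the Las Vegas zero-error guarantee) and amortized cost (keeping within the $\log^2$ time bound, and in particular absorbing the doubling-search overhead and the expected $\Order(1)$ restarts into a single $\log$ factor) is the technical heart of the argument.
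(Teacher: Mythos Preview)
Your plan has two genuine gaps that the paper's argument specifically addresses.

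\textbf{First, the Las Vegas guarantee.} You correctly identify that a non-isolated bucket can produce a spurious pair $(k,v)$, and you propose to catch this by comparing folds of $V$ and $A\star B$ under a fresh random prime $q$. But that verification is itself only Monte Carlo: with nonzero probability the spurious entries cancel modulo $q$ and the algorithm commits to a wrong output. Repeating the check with more primes shrinks the error probability but never drives it to zero, so you do not get a Las Vegas algorithm. The paper avoids this entirely by computing a \emph{second} derivative and using the Cauchy--Schwarz identity $\norm{\partial V}_1^2 \le \norm{V}_1 \cdot \norm{\partial^2 V}_1$, which is tight \emph{if and only if} $V$ is $1$-sparse (\cref{lem:1-sparse}). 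The test ``$Y_k^2 = X_k Z_k$'' therefore \emph{deterministically} rejects every non-isolated bucket, so the recovered vector $R$ always satisfies $R \le A\star B$ coordinate-wise. Given that invariant, the verification ``$\norm{C}_1 = \norm{A}_1 \cdot \norm{B}_1$'' is exact and deterministic, not probabilistic. This is the technical heart of the Las Vegas guarantee, and your two-derivative scheme $(C',C'')$ simply cannot provide it.

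\textbf{Second, the running time.} With $h(x)=x\bmod p$ for a random prime $p\in[\bar t,2\bar t]$, two fixed keys $z\neq z'$ collide with probability $\Order(\log n / p)$, not $\Order(1/p)$, because $z-z'$ can have up to $\Order(\log n/\log p)$ prime divisors in that range. Consequently, with $p=\Theta(\bar t)$ the expected number of colliders of a fixed support element is $\Theta(\log n)$, not $\Theta(1)$, and you do \emph{not} isolate a constant fraction per round; your time bound acquires an unwanted dependence on $n$. The paper instead uses a \emph{linear} hash family $h(x)=(ax\bmod N)\bmod m$ (\cref{lem:hashing}), which is $\Order(1)$-universal and almost-additive, so $m=\Theta(t)$ suffices to isolate each element with constant probability. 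Prime hashing is used only later, in the accelerated algorithm, and only once the residual has been driven down to sparsity $t/\log^2 n$ so that the $\log n$ loss is affordable.
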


\noindent
In comparison to Cole and Hariharan's algorithm, our algorithm runs slightly faster in expectation (at least if~$t \ll n$) and achieves the same high-probability guarantee (indeed, by setting $\delta = \frac1n$ the running time is bounded by $\Order(t \log^2 n)$ with probability at least $1 - \frac1n$). We further show how to reduce the expected running time, achieving optimality up to a $\log \log$ factor.

\begin{theorem}[Fast Las Vegas] \label{thm:fast-las-vegas}
Given nonnegative vectors $A, B \in \Nat^n$, there exists an algorithm to compute their convolution $A \conv B$ in expected time $\Order(t \log t \log\log t)$, where $t = \norm{A \conv B}_0$.
\end{theorem}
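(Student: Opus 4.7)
The plan is to build on the simple Las Vegas algorithm of \cref{thm:las-vegas}, whose $\Order(t \log^2 t)$ running time decomposes as $\Order(\log t)$ hashing rounds, each costing $\Order(t \log t)$ because of an FFT on $\Theta(t)$ buckets. To shave a factor of $\log t / \log\log t$, I would replace the uniform $\Order(\log t)$-round scheme with an iterative residual-sparsification loop. Maintain a partial answer $C'$ satisfying $C' \le A \star B$ entrywise, with current residual support $S_i = \{k : C'_k < (A \star B)_k\}$ of size $s_i$, where $s_0 = t$. The goal in iteration $i$ is to recover a $1 - \log^{-\Omega(1)} t$ fraction of $S_i$, so that only $\Order(\log t / \log\log t)$ iterations are needed.

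In iteration $i$ I would sample $k = \Theta(\log\log t)$ \emph{independent} hash functions of bucket size $\Theta(s_i)$, compute the $k$ hashed convolutions of $A$ and $B$ by FFT, and subtract the hashed contribution of the already-recovered $C'$ from each bucket so that bucket values now reflect only the residual problem. An entry $k \in S_i$ is declared recovered as soon as it is isolated under at least one of the $k$ hashes; by nonnegativity this is detected and verified unambiguously using the standard position-weighted companion hashes from the proof of \cref{thm:las-vegas}. The probability that a fixed residual entry fails to be isolated in all $k$ hashes is $2^{-\Omega(k)} = \log^{-\Omega(1)} t$, so in expectation $s_{i+1} \le s_i / \log^{\Omega(1)} t$, giving the desired shrinkage. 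For the Las Vegas guarantee, I would run a final certifying hash at the end: since $A \star B - C' \ge 0$ throughout, if all buckets of the residual hash evaluate to zero then $C' = A \star B$ exactly, and otherwise we restart with fresh randomness.

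The per-iteration cost splits into $\Order(k t) = \Order(t \log\log t)$ for hashing the inputs (whose support can be $\Theta(t)$ throughout) and $\Order(k s_i \log s_i) = \Order(s_i \log s_i \log\log t)$ for the FFTs on $\Theta(s_i)$ buckets. Summed over $\Order(\log t / \log\log t)$ iterations, the input-hashing term contributes $\Order(t \log t)$ and the FFT term, by the geometric shrinkage of $s_i$, contributes $\Order(t \log t \log\log t)$. Since each attempt succeeds with constant probability, the expected total cost is $\Order(t \log t \log\log t)$, as required.

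The main obstacle I anticipate is implementing the bucket-subtraction step cleanly: the subtracted $C'$ must not create spurious ``isolated'' buckets (i.e., buckets that look isolated only because a larger contribution was subtracted away), which requires a careful invariant that $C'$ never overshoots $A \star B$ and that verification uses a hash independent of the one driving subtraction. A secondary issue is that $s_i$ is not known in advance, so the bucket size $\Theta(s_i)$ must be set by a doubling/overestimation scheme with rejection and retry whenever the recovered fraction falls short of its expectation by more than a constant factor. Both of these are standard hashing-scheme tweaks and should not break the $\Order(t \log t \log\log t)$ budget, but verifying that they fit inside the Las Vegas framework is where most of the technical care will go.
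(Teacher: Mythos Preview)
Your residual-sparsification plan has the right high-level shape, but the step ``subtract the hashed contribution of the already-recovered $C'$'' hides the actual difficulty, and as stated it does not work with the hash family used in the simple algorithm. The linear family of \cref{lem:hashing} is only \emph{almost}-additive: for each pair $(x,y)$ we have $h(x)+h(y)\equiv h(x+y)+\phi$ for some $\phi\in\Phi$ that depends on $(x,y)$, not just on $z=x+y$. Consequently the mass $(A\star B)_z=\sum_{x+y=z}A_xB_y$ lands in \emph{several} buckets $h(z)+\phi$, split according to the unknown decomposition into $A_xB_y$ terms. You know $C'_z$ but not this split, so there is no way to subtract $h(C')$ from $h(A)\star_m h(B)$ and obtain a hash of the residual; your worry about ``spurious isolated buckets'' is in fact unavoidable here, not a minor implementation issue. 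And if you switch to the truly additive family $h(x)=x\bmod p$ so that subtraction is clean, that family is only $O(\log n)$-universal, so bucket size $\Theta(s_i)$ gives no isolation guarantee; pushing the bucket size up to $\Theta(s_i\log n)$ makes the very first iteration cost $\Omega(t\log n\log t)$, blowing the budget.

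The paper resolves exactly this tension by using \emph{both} families, but in separate phases rather than in an interleaved shrinkage loop. First it runs $O(\log\log n)$ rounds of the linear-hashing subroutine at bucket size $\Theta(t)$, taking only coordinate-wise maxima (no subtraction), which already drives the residual below $t/\log^2 n$. At that point it switches to the additive mod-$p$ family, where subtraction is legitimate; because the residual is now so sparse, bucket size $\Theta(t/\log n)$ suffices for constant isolation probability, and $O(\log t)$ rounds finish the job. This gives $O(t\log t\log\log n)$. The final $\log\log n\to\log\log t$ step is a separate Las Vegas length reduction: hash the whole instance to universe size $\poly(t)$, verify injectivity on $\supp(A\star B)$ via the same $1$-sparsity test, and only then run the two-phase algorithm. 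Your proposal silently assumes $\log\log t$ from the start, but without such a reduction the $\log n$-dependence of the mod-$p$ family (or of whatever you use for subtraction) will reappear.
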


\noindent
Assuming that FFT-time $\Order(n \log n)$ is best-possible for computing dense convolutions, the best-possible algorithm for computing sparse convolutions requires time $\Omega(t \log t)$. Hence, our algorithm is likely optimal, up to the $\log\log t$ factor.

\subsection{Technical Overview} \label{sec:tech}
In this section we briefly outline the ideas behind \cref{thm:deterministic,thm:las-vegas,thm:fast-las-vegas}.

\subsubsection{Deterministic Algorithm}
The key machinery powering our deterministic algorithm (\cref{thm:deterministic}) is a basic result from structured linear algebra which can be viewed as efficiently evaluating and interpolating \emph{sparse} polynomials---under certain conditions. We present the algorithm by first explaining that key part (Part~1) and the assumptions it requires. In Parts~2 and~3 we then remove these assumptions by appropriate precomputations.

\paragraph*{Part 1: Evaluation \& Interpolation}
The high-level approach follows the typical evaluation and interpolation pattern. Any vector $V$ can be viewed as a polynomial $V(X) = \sum_{i=0}^{n-1} V_i X^i$. In that analogy, computing the convolution $A \conv B$ of two vectors $A, B$ corresponds to computing the product of their respective polynomials $A(X) \cdot B(X)$. The idea is to:
\begin{enumerate}
\item Evaluate $A(X)$ and $B(X)$ at some carefully chosen points $\omega^0, \omega^1, \dots, \omega^{t-1}$,
\item Compute the product $A(\omega^i) \cdot B(\omega^i)$ for all $i = 0, \dots, t-1$,
\item Interpolate the (hopefully unique) polynomial $C(X)$ with evaluations $C(\omega^i) = A(\omega^i) \cdot B(\omega^i)$.
\end{enumerate}
In this way, we have reduced the task to the evaluation and interpolation of sparse polynomials. Let us start with the evaluation problem: Note that computing $V(\omega^0), \dots, V(\omega^{t-1})$ is equivalent to computing the following matrix-vector product:
\begin{equation}
    \begin{bmatrix}
        V(\omega^0) \\ V(\omega^1) \\ \vdots \\ V(\omega^{t-1})
    \end{bmatrix}
    =
    \begin{bmatrix}
        1 & 1 & \cdots & 1 \\
        \omega^{x_1} & \omega^{x_2} & \cdots & \omega^{x_t} \\
        \vdots & \vdots & \ddots & \vdots \\
        \omega^{x_1 (t-1)} & \omega^{x_2 (t-1)} & \cdots & \omega^{x_t (t-1)}
    \end{bmatrix}
    \begin{bmatrix}
        V_{x_1} \\ V_{x_2} \\ \vdots \\ V_{x_t}
    \end{bmatrix}. \tag{\textasteriskcentered}\label{eq:vandermonde}
\end{equation}
This matrix has a very special form: It is the transpose of a Vandermonde matrix. It is known that performing linear algebra operations (such as computing matrix-vector products, or solving linear systems) with transposed Vandermonde matrices can be implemented in $\Order(t \log^2 t)$ field operations~\cite{KaltofenL88,Li00,Pan01}. Thus, by viewing the integer vector $V$ as a vector over some appropriately large finite field, we can evaluate $V(\omega^0), \dots, V(\omega^{t-1})$ in near-linear time. (We remark that this algorithm is numerically unstable, so using complex arithmetic is not an option.)

To perform the inverse task of interpolating the coefficients $V_{x_1}, \dots, V_{x_t}$ given the evaluations $V(\omega^0), \dots, V(\omega^{t-1})$, we view \eqref{eq:vandermonde} as a system of linear equations with indeterminates $V_{x_1}, \dots, V_{x_t}$. As mentioned before this problem can be also solved in time $\Order(t \log^2 t)$. This nearly yields the algorithm, however, there are two major obstacles. First, in order to obtain a unique solution, the equation system should be nonsingular. It is easy to see that this is equivalent to the condition that~$\omega^{x_1}, \dots, \omega^{x_t}$ are pairwise distinct. A reasonable way to achieve this is to let~$\omega$ be a finite field element with multiplicative order at least $n \geq \deg(V)$. In Part~2 we explain how to obtain such an element. Second, in order to write down the equation system we have to know the indices~$x_1, \dots, x_t$, i.e., the support of $V$. Concretely, in the algorithm we call the sparse interpolation problem for $V = A \conv B$ and we therefore need to know $\supp(A \conv B)$ in advance. In Part~3 we discuss a recursive ``scaling trick'' to precompute a small superset of $\supp(A \conv B)$.

\paragraph*{Part 2: Finding Large-Order Elements}
In this part we care about finding an element~$\omega$ with multiplicative order at least $n$ in a finite field of size $\gg n$. There is a simple randomized algorithm: Pick a random element. Unfortunately, the best-known \emph{deterministic} algorithms for finding a large-order element in a given prime field $\Field_p$ require time polynomial in $p$~\cite{Cheng05}. Thus it seems intractable to work over a finite field $\Field_p$ with $p \geq n$ as originally intended.

Fortunately, in a finite field~$\Field_q = \Field_{p^m}$ with prime power order, it is possible to find large-order elements in time $\poly(p, m)$~\cite{Cheng07,Shoup90,Shparlinski92}. Specifically, setting $p, m = \polylog(n)$ we can find an element $\omega$ with order at least $n$ in time $\polylog(n)$~\cite{Cheng07}. Working over a finite field with small characteristic $p \leq \polylog(n)$ has another drawback though: We cannot recover the entries of the vector $A \conv B$ (which can have size up to $n$, even if $A$ and $B$ are bit-vectors to begin with). We remedy this problem by computing the convolution $A \conv B$ over \emph{several} finite fields $\Field_{q_1}, \Field_{q_2}, \dots$, and use the Chinese Remainder Theorem to identify the correct integer solution afterwards.

\paragraph*{Part 3: Recursively Computing the Support}
We finally discuss how to precompute the support $\supp(A \conv B)$. In fact, it suffices to compute a superset $T \supseteq \supp(A \conv B)$ with small size $|T| \leq \Order(t)$. We exploit a trick which was first applied to the context of convolutions in~\cite{BringmannN21}; see also~\cite{BringmannN20,BringmannFHSW20}. Construct smaller vectors $A', B'$ of length~$\frac n2$ by $A'_i = A_i + A_{i+n/2}$ and $B'_j = B_j + B_{j+n/2}$ (that is, we fold $A, B$ in half). We can recursively compute the convolution $C' = A' \conv B'$. Then we extract $T$ as
\begin{equation*}
    T = \big\{\, k, k + \tfrac n2, k + n : k \in \supp(C') \,\big\}.
\end{equation*}
This choice is correct: Clearly $|T| \leq 3t$, and it is easy to verify that $T$ is indeed a superset of $\supp(A \conv B)$. The recursion only reaches depth $\log n$, and thus incurs a logarithmic overhead in the running time.

\bigskip\noindent
By combining Parts~2 and~3 we overcome both obstacles outlined in Part~1, and solve sparse nonnegative convolution in deterministic time $\Order(t \polylog(n \Delta))$.

\paragraph*{Comparison to Prony's Method}
Note that our main contribution can also be viewed as a deterministic near-linear-time algorithm to interpolate a univariate sparse polynomial \emph{with nonnegative coefficients}. The classical approach to the sparse interpolation problem is by Prony's method---an old algorithm first discovered by Prony in 1795~\cite{Prony1795}, and rediscovered later by Ben-Or and Tiwari~\cite{BenOrT88}; see~\cite{Roche18} for a detailed survey. Prony's method involves heavy algebraic computations such as finding the minimal solution to a linear recurrence, polynomial root finding, computing discrete logarithms and linear algebra with (transposed) Vandermonde systems. These computations can be carried out in near-linear time, but only using randomization~\cite{Kaltofen10,Roche18}. Our algorithm is similar to Prony's method with two essential modifications: We replace the computationally expensive parts using the combinatorial trick (Part~3, here we critically use that the vectors are nonnegative) and derandomize the remaining steps (Parts~1 and~2) using classical methods.

\subsubsection{Las Vegas Algorithms}
\begin{algorithm}[t]
\caption{} \label{alg:las-vegas}
\begin{algorithmic}[1]
\Input{Nonnegative vectors $A, B \in \Nat^n$}
\Output{$C = A \conv B$}
\smallskip
\For{$m \gets 1, 2, 4, \dots, \infty$} \label{alg:las-vegas:line:guess}
    \RepeatTimes{$2\log m$} \label{alg:las-vegas:line:repeat}
        \State Sample a linear hash function $h : [n] \to [m]$ \label{alg:las-vegas:line:hash}
        \State Compute $X \gets h(A) \conv_m h(B)$
        \State Compute $Y \gets h(\partial A) \conv_m h(B) + h(A) \conv_m h(\partial B)$
        \State Compute $Z \gets h(\partial^2 A) \conv_m h(B) + 2 h(\partial A) \conv_m h(\partial B) + h(A) \conv_m h(\partial^2 B)$
        \State Initialize $R \gets (0, \dots, 0)$
        \ForEach{$k \in [m]$} \label{alg:las-vegas:line:buckets}
            \If{$X_k \neq 0$ \AND{} $Y_k^2 = X_k \cdot Z_k$} \label{alg:las-vegas:line:condition}
                \State $z \gets Y_k / X_k$ \label{alg:las-vegas:line:position}
                \State $R_z \gets R_z + X_k$ \label{alg:las-vegas:line:update}
            \EndIf
        \EndForEach
    \EndRepeatTimes
    \State Let $C$ be the coordinate-wise maximum of all vectors $R$ \label{alg:las-vegas:line:max}
    \If{$\norm C_1 = \norm A_1 \cdot \norm B_1$} \Return $C$ \label{alg:las-vegas:line:return}
    \EndIf
\EndFor
\end{algorithmic}
\end{algorithm}

Next, we outline the idea behind proving \cref{thm:las-vegas}. \cref{alg:las-vegas} is a simple Las Vegas algorithm with expected running time $\Order(t \log^2 t)$ as claimed in \cref{thm:las-vegas}; however, to obtain the tail bound on the running time one has to slightly refine \cref{alg:las-vegas}. We provide this refinement along with a detailed analysis in \cref{sec:las-vegas}; for the rest of the overview we will analyze the simple version in \cref{alg:las-vegas}.

To understand the pseudocode, we first clarify some notation: For a vector $A$, we denote by $\partial A$ its \emph{derivative} defined coordinate-wise as $(\partial A)_i = i \cdot A_i$. More generally, we denote by $\partial^d A$ its \emph{$d$-th derivative} with $(\partial^d A)_i = i^d \cdot A_i$. This definition is in slight dissonance with the analogous definition for polynomials (which would require the derivative vector to be scaled and \emph{shifted}), but we prefer this version as it leads to a slightly simpler algorithm.

Moreover, we define hashing for vectors: For a hash function $h : [n] \to [m]$ and a length\=/$n$ vector $A$, define the length-$m$ vector $h(A)$ via $h(A)_j = \sum_{i : h(i) = j} A_i$. The operator~$\conv_m$ denotes convolution with wrap-around (see \cref{sec:preliminaries} for details).

Let us outline the high-level idea of \cref{alg:las-vegas}. The outer loop (\cref{alg:las-vegas:line:guess}) guesses the correct sparsity, i.e., as soon as the outer loop reaches a value $m \geq \Omega(t)$ we expect the algorithm to terminate. Each iteration of the repeat-loop (\cref{alg:las-vegas:line:repeat}) is supposed to produce a vector $R$ which closely approximates $A \conv B$. More specifically, we prove that $R$ satisfies the following two properties:
\begin{enumerate}
\item It always holds that $R \leq A \conv B$ (coordinate-wise).
\item Equality is achieved at any coordinate with constant probability (provided that the outer loop has reached a sufficiently large value $m \geq \Omega(t)$).
\end{enumerate}
It follows that $C$, the coordinate-wise maximum of several vectors $R$, also always satisfies $C \leq A \conv B$. Hence, the algorithm never outputs an incorrect solution. Indeed, since $C$ and $A \conv B$ are nonnegative vectors, the vector $C = A \conv B$ is the only one simultaneously satisfying $C \leq A \conv B$ and $\norm C_1 = \norm{A \conv B}_1 = \norm A_1 \cdot \norm B_1$. To see that \cref{alg:las-vegas} terminates fast, note that the repeat-loop runs for $\Omega(\log m)$ iterations and thus, using the second claim we correctly assign \emph{all} coordinates with high probability.

The crucial part is to prove that $R$ satisfies the claims~1 and~2. Intuitively, $R$ consists of all nonzero entries from $A \conv B$ which did not suffer from a collision with another nonzero entry. For a more formal argument, we analyze the inner-most loop (\cref{alg:las-vegas:line:buckets}). For starters, focus on an iteration $k \in [m]$ and suppose that there is only a single nonzero entry in $A \conv B$, say at $z$, which is hashed to the bucket~$k$.\footnote{Strictly speaking, that condition is not sufficient because linear hashing is only ``almost'' additive. We ignore this technical issue in the overview and give the full analysis in \cref{sec:las-vegas}.} In this case we have $X_k = (A \conv B)_z$, $Y_k = z \cdot (A \conv B)_z$ and $Z_k = z^2 \cdot (A \conv B)_z$. As a consequence, the conditions ``$X_k \neq 0$'' and ``$Y_k^2 = X_k \cdot Z_k$'' in \cref{alg:las-vegas:line:condition} are satisfied. The algorithm then correctly identifies~$z$ in \cref{alg:las-vegas:line:position} and updates ``$R_z \gets R_z + (A \conv B)_z$'' as intended.

However, to prove claim~1 (which is ultimately responsible for the Las Vegas guarantee), we have to be certain that \cref{alg:las-vegas:line:position,alg:las-vegas:line:update} are only executed if there is a single entry hashed to the $k$-th bucket (otherwise, the index $z$ computed in \cref{alg:las-vegas:line:position} is likely to be nonsense). The key insight is that the simple test ``$Y_k^2 = X_k \cdot Z_k$'' in \cref{alg:las-vegas:line:condition} suffices, as can be proven by the following lemma (see \cref{sec:las-vegas} for a proof). 

\begin{restatable}[Testing $1$-Sparsity]{lemma}{lemmaonesparsity} \label{lem:1-sparse}
If $V$ be a nonnegative vector, then $\norm{\partial V}_1^2 \leq \norm V_1 \cdot \norm{\partial^2 V}_1$. This inequality is tight if and only if $\norm V_0 \leq 1$.
\end{restatable}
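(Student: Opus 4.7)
The plan is to recognize this inequality as nothing more than a Cauchy--Schwarz inequality in disguise, since $V$ being nonnegative lets us take square roots of its entries. Concretely, I would define auxiliary vectors $u$ and $v$ indexed by $i \in [n]$ by setting $u_i = \sqrt{V_i}$ and $v_i = i\sqrt{V_i}$. Then the three quantities appearing in the statement unfold as
\begin{equation*}
    \langle u, v\rangle = \sum_i i\, V_i = \norm{\partial V}_1, \qquad \norm{u}_2^2 = \sum_i V_i = \norm{V}_1, \qquad \norm{v}_2^2 = \sum_i i^2\, V_i = \norm{\partial^2 V}_1,
\end{equation*}
where I used that the entries of $V$ (and hence of $\partial V$ and $\partial^2 V$) are nonnegative so that the $\ell_1$-norm equals the plain sum. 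The desired inequality $\norm{\partial V}_1^2 \leq \norm{V}_1 \cdot \norm{\partial^2 V}_1$ is then precisely the Cauchy--Schwarz inequality $\langle u, v\rangle^2 \leq \norm{u}_2^2 \cdot \norm{v}_2^2$ applied to $u$ and $v$.

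For the equality characterisation, I would invoke the standard Cauchy--Schwarz equality condition: $\langle u,v\rangle^2 = \norm{u}_2^2 \cdot \norm{v}_2^2$ holds if and only if $u$ and $v$ are linearly dependent. If $u = 0$, then $V = 0$ and $\norm{V}_0 = 0 \leq 1$. Otherwise there exists $\lambda$ with $v = \lambda u$, which translates into $(i-\lambda)\sqrt{V_i} = 0$ for every $i$. Hence $V_i = 0$ for all $i \neq \lambda$, which leaves at most one nonzero coordinate and gives $\norm{V}_0 \leq 1$. The converse direction is immediate: if $V$ is either zero or supported on a single index $i_0$, then $v = i_0\, u$ and equality holds.

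I do not anticipate any genuine obstacle here, since the argument is a one-line application of Cauchy--Schwarz once the right substitution $u_i = \sqrt{V_i}$, $v_i = i\sqrt{V_i}$ is made; the only point requiring a little care is checking that the equality case really corresponds to $\norm{V}_0 \leq 1$ rather than something weaker, and this is handled cleanly by the linear-dependence criterion as above.
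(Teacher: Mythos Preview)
Your proposal is correct and is essentially identical to the paper's own proof: both recognize the inequality as Cauchy--Schwarz applied to the vectors $(\sqrt{V_i})_i$ and $(i\sqrt{V_i})_i$, and both derive the equality characterisation from the linear-dependence condition. If anything, your treatment of the equality case is slightly more explicit than the paper's.
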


\noindent
This new tester is one of the reasons why we can achieve the claimed Las Vegas running time simplifying (and slightly improving) upon Cole and Hariharan's algorithm. This concludes the overview of our simple Las Vegas algorithm (\cref{thm:las-vegas}).

The insight behind our accelerated Las Vegas algorithm (\cref{thm:fast-las-vegas}) is that \cref{alg:las-vegas} already reaches a very good approximation after much less than $\Order(\log m)$ iterations of the inner loop. Indeed, after only $\Order(\log\log n)$ iterations we expect that algorithm has already recovered $A \conv B$ correctly up to a $(\log n)^{-\Omega(1)}$ fraction of the entries. At this point it becomes more efficient to switch to another recovery approach which exploits that $A \conv B - C$ is already quite sparse, as in~\cite{BringmannFN21}. In particular, since $A \conv B - C$ is a \emph{nonnegative} vector and its sparsity is at most $t / \log n$, say, we can use the hash function $h(x) = x \bmod p$ for~$p$ being a random prime in~$[t,2t]$. This family of hash functions (1) satisfies that $h(A) \conv_m h(B) - h(C) = h(A \conv B - C)$ (and thus preserves all cancellations) and (2) isolates a constant fraction of elements in $A \conv B - C$ with constant probability to clear up the rest of the elements. Note that it is important that $A \conv B-C$ is $t/\log n$ sparse instead of $t$ sparse for (2) to hold, because $h$ is only $\Order(\log n)$-universal. Choosing $O( \log t)$ different random primes and using the $1$-sparsity testing we arrive at our desired algorithm. For the sparsity test we require that the vector~$A \conv B - C$ is nonnegative.

One catch is that this approach only gives a $O(t \log t \cdot \log \log n)$-time algorithm (instead of the desired time with $\log\log t$ in place of $\log\log n$) due to the fact that $h(x)$ is $O(\log n)$-universal and hence the random prime must be chosen in an interval that is also dependent on $n$ rather than solely on $t$. To address this issue we apply the following precomputation: We hash to a $\poly(t)$-size universe and verify that this hashing was successful in Las Vegas randomized time, again using our $1$-sparsity tester. The details of this step appear in \cref{sec:las-vegas:sec:length-reduction}.

\subsection{Discussion and Open Problems}
Our work raises several questions.

\paragraph*{Better Deterministic Algorithms?}
By a closer inspection of the time analysis, our deterministic algorithm computes the convolution of sparse nonnegative vectors in time $\Order(t \log^5 (n\Delta) \polyloglog(n\Delta))$.
\begin{enumerate}
\item Can the running time be improved? In particular, is it possible to reduce the number of log factors or can we omit the dependence on $n$ or $\Delta$?
\item Can the restriction to nonnegative vectors be removed, or equivalently, is it possible to achieve \emph{sparse polynomial multiplication in deterministic near-linear time?} In our algorithm the only step which exploits nonnegativity is the recursive support computation (Part~3).
\end{enumerate}

\paragraph*{Better Las Vegas Algorithms?}
We proved that sparse nonnegative convolution is in Las Vegas time $\Order(t \log t \log\log t)$.

\begin{enumerate}[resume]
\item Is the restriction to nonnegative vectors necessary? This seems like a difficult question because we are not aware of algorithms that run in even slightly subquadratic time in $t$ (without using heavy pre-computation).
\item Can one achieve $\Order(t \log t)$ Las Vegas running time matching the running time of the dense case? More specifically, can the sparsity-testing technique which lead to our Las Vegas algorithms be extended and incorporated to obtain the optimal running time? Recall that the key step in the analysis is the application of \cref{lem:1-sparse}. This lemma can be generalized as follows: A nonnegative vector $V$ is at most $s$-sparse if and only if the following positive-semidefinite matrix is nonsingular:
\begin{equation*}
    \begin{bmatrix}
        \norm{\partial^0 V}_1 & \norm{\partial^1 V}_1 & \cdots & \norm{\partial^s V}_1 \\
        \norm{\partial^1 V}_1 & \norm{\partial^2 V}_1 & \cdots & \norm{\partial^{s+1} V}_1 \\
        \vdots & \vdots & \ddots & \vdots \\
        \norm{\partial^s V}_1 & \norm{\partial^{s+1} V}_1 & \cdots & \norm{\partial^{2s} V}_1 \\
    \end{bmatrix};
\end{equation*}
see for instance~\cite[Theorem 3A]{Lindsay89} for a proof. One approach for an improved Las Vegas algorithm would be to hash to $t/ \log t$ buckets using a linear hash function, recover each bucket as in~\cite{BringmannFN21} in $\Order(t \log t)$ time and, using the generalized sparsity-testing technique, verify that most buckets indeed have sparsity $\Order(\log t)$, which in turn means that all but a $1/\log t$-fraction of $A \conv B$ has been successfully recovered; then one can continue and recover the rest with $h(x) = x \bmod p$. Although promising, this approach suffers from precision issues (when implementing the $\Order(\log t)$-tester the numbers get too large) and hence does not lead to the desired $O(t \log t)$ time. It would be very interesting to find a way to circumvent this obstacle and obtain the ideal $O(t \log t)$ Las Vegas running time.
\end{enumerate}
\section{Preliminaries} \label{sec:preliminaries}

\paragraph*{Machine Model}
Throughout this paper we work over the Word RAM model. In particular, logical and arithmetic operations on machine words take constant time. Concerning the sparse convolution problem, we assume that both the indices and entries of the given vectors fit into a constant number of machine words.

\paragraph*{Notation}
Let $\Int$ and $\Nat$ denote the integers and nonnegative integers, respectively. For a prime power~$q$, let $\Field_q$ denote the finite field with $q$ elements. We set $[n] = \{0, \dots, n-1\}$. We write $\poly(n) = n^{\Order(1)}$, $\polylog(n) = (\log n)^{\Order(1)}$ and $\polyloglog(n) = (\log \log n)^{\Order(1)}$.

We mostly denote vectors by $A, B, C$ with $A_i$ referring to the $i$-th coordinate in $A$. We define the \emph{convolution} of two length-$n$ vectors $A$ and $B$ as the vector $A \conv B$ of length $2n-1$ with
\begin{equation*}
    (A \conv B)_k = \sum_{\substack{i, j \in [n]\\i+j=k}} A_i \cdot B_j.
\end{equation*}
The \emph{cyclic convolution} $A \conv_m B$ is the length-$m$ vector with
\begin{equation*}
    (A \conv_m B)_k = \sum_{\substack{i, j \in [n]\\i+j \equiv k \mod m}} A_i \cdot B_j.
\end{equation*}
We refer to $\supp(A) = \{ i \in [n] : A_i \neq 0\}$ as the \emph{support} of $A$, we set $\norm A_0 = |\supp(A)|$ and say that $A$ is \emph{$s$-sparse} if $\norm A_0 \leq s$. If~$A$ is a vector with real entries, then we also define $\norm A_1 = \sum_i |A_i|$ and $\norm A_\infty = \max_i |A_i|$ in the usual way, and we say that $A$ is \emph{nonnegative} if all of its entries are nonnegative. We often hash length-$n$ vectors $A$ using an arbitrary hash function $h : [n] \to [m]$ to shorter length-$m$ vectors~$h(A)$ defined by
\begin{equation*}
    h(A)_j = \sum_{\substack{i \in [n]\\h(i) = j}} A_i.
\end{equation*}

\paragraph*{Finite Field Arithmetic}
Let $q = p^m$ be a prime power. Recall that the prime field $\Field_p$ can be represented as $\Int / p \Int$, the integers modulo~$p$. The field $\Field_q$ can be represented as $\Field_p[X] / \ideal{f}$ where $f \in \Field_p[X]$ is an arbitrary irreducible degree-$m$ polynomial. There is a deterministic algorithm to precompute such an irreducible polynomial $f \in \Field_p$ in time $\poly(p, m)$~\cite{Shoup88}; we will point out this step in our algorithms. Having precomputed $f$, we can perform the basic field operations in~$\Field_q$ using polynomial arithmetic in time $\widetilde\Order(\log q)$~\cite{vonzurGathenG13}.

Let us quickly recall some definitions from field theory. The \emph{multiplicative order} of an element $x$ is the smallest positive integer $i$ such that $x^i = 1$; we also call $x$ an \emph{$i$-th root of unity}. The \emph{minimal polynomial} of a field element $x \in \Field$ is defined as the smallest-degree monic polynomial (i.e., with leading coefficient $1$) over $\Field$ which vanishes at~$x$. We say that two field elements $x, y$ are \emph{conjugate} if their minimal polynomials coincide.

\section{Deterministic Algorithm} \label{sec:deterministic}
In this section we prove \cref{thm:deterministic}. We proceed in three steps, as outlined before.

\subsection{The Key Step: Evaluation \& Interpolation} \label{sec:deterministic:sec:interpolation}
The main algebraic ingredient to the algorithm is the following result about efficient computations with transposed Vandermonde matrices. For a proof, see e.g.~\cite{KaltofenL88,Li00,Pan01}.

\begin{theorem}[Transposed Vandermonde Systems] \label{thm:vandermonde}
Let $\Field$ be a field. Given pairwise distinct elements $a_0, \dots, a_{n-1} \in \Field$ and a vector $x \in \Field^n$, let
\begin{equation*}
    M = \begin{bmatrix} 1 & 1 & \cdots & 1 \\ a_0 & a_1 & \cdots & a_{n-1} \\ a_0^2 & a_1^2 & \cdots & a_{n-1}^2 \\ \vdots & \vdots & \ddots & \vdots \\ a_0^{n-1} & a_1^{n-1} & \cdots & a_{n-1}^{n-1} \end{bmatrix}\!.
\end{equation*}
Both $M x$ and $M^{-1} x$ can be computed in deterministic time $\Order(n \log^2 n)$ using $\Order(n \log^2 n)$ field operations.
\end{theorem}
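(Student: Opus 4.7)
The plan is to reduce both matrix-vector products to standard problems about univariate polynomials of degree less than $n$, namely multipoint evaluation and related computations over a subproduct tree. Each of these admits a deterministic $O(n \log^2 n)$ field-operation algorithm built from fast polynomial multiplication, Newton-iteration power-series inversion, and balanced divide-and-conquer; see for instance \cite{vonzurGathenG13}.

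For computing $Mx$, I would introduce the formal generating function $F(Y) = \sum_{i \geq 0}(Mx)_i Y^i$. A direct expansion gives
\begin{equation*}
    F(Y) = \sum_{j=0}^{n-1} \frac{x_j}{1 - a_j Y} = \frac{N(Y)}{D(Y)},
\end{equation*}
where $D(Y) = \prod_j (1 - a_j Y)$ and $N(Y) = \sum_j x_j \prod_{k \neq j}(1 - a_k Y)$ has degree less than $n$. The standard balanced subproduct tree for the $n$ linear factors $(1 - a_j Y)$ computes $D(Y)$ in $O(n \log^2 n)$ field operations, and a symmetric bottom-up pass along the same tree combining the scalars $x_j$ produces $N(Y)$ within the same budget. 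One $O(n \log n)$ Newton-iteration power-series inversion then yields $F(Y) \equiv N(Y) D(Y)^{-1} \pmod{Y^n}$, whose first $n$ coefficients are exactly $Mx$.

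For $M^{-1}x$, the task reverses: given $\sum_i x_i Y^i \equiv F(Y) \pmod{Y^n}$, recover the partial-fraction coefficients $y_j$. I would first compute $N(Y) \equiv D(Y) F(Y) \pmod{Y^n}$ in $O(n \log n)$ time using the same $D$, and then apply the residue formula $y_j = -a_j N(1/a_j)/D'(1/a_j)$, which one verifies by multiplying $F(Y)$ by $(1 - a_j Y)$ and specializing at $Y = 1/a_j$. Multipoint evaluation of $N$ and $D'$ at the $n$ points $\{1/a_j\}$ takes $O(n \log^2 n)$ and finishes the computation; a possible $a_j = 0$ is handled separately since the corresponding column of $M$ is the first standard basis vector, after which we recurse on the remaining $n-1$ distinct nonzero nodes. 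The main obstacle I would watch for is correctness over an arbitrary field $\Field$, since no analytic notions are available; however, every identity above is a purely algebraic manipulation in the formal power series ring $\Field[[Y]]$ and is valid because $D(0) = 1$ is invertible and the pairwise-distinctness hypothesis ensures $D$ has simple roots, making $D'(1/a_j) \neq 0$ and $M^{-1}$ well-defined.
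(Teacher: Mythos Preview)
Your proposal is correct and follows the standard power-series/subproduct-tree approach to transposed Vandermonde computations. Note, however, that the paper does not actually prove this theorem: it merely states the result and defers to the references \cite{KaltofenL88,Li00,Pan01} for a proof. Your write-up therefore supplies strictly more detail than the paper, and the method you sketch is essentially the one developed in those references (and in \cite{vonzurGathenG13}): build the subproduct tree of the linear factors $(1-a_jY)$, compute the numerator $N$ by a linear-combination pass up the same tree, and recover the entries of $Mx$ as the first $n$ power-series coefficients of $N/D$; for $M^{-1}x$, recover $N$ from $DF \bmod Y^n$ and read off the partial-fraction coefficients via multipoint evaluation of $N$ and $D'$.

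One minor point worth tightening: your treatment of the case $a_j=0$ is slightly informal. Saying ``the corresponding column is the first standard basis vector, after which we recurse'' hides the fact that removing that column does not directly yield a smaller transposed Vandermonde system on rows $0,\dots,n-2$; you need to shift rows (use rows $1,\dots,n-1$, factor out one power of each $a_k$) or, more cleanly, avoid the inversion altogether by working with $P(Z)=\prod_j(Z-a_j)$ instead of $D(Y)=\prod_j(1-a_jY)$ and evaluating $N^{\mathrm{rev}}$ and $P'$ at the $a_j$ themselves. Either fix is routine and does not affect the $O(n\log^2 n)$ bound.
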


\noindent
We remark that the transposed Vandermonde matrix $M$ is nonsingular if and only if the elements $a_0, \dots, a_{n-1}$ are pairwise distinct. The next lemma reinterprets this result in terms of multi-point evaluation and interpolation of sparse polynomials. In analogy to the vector notation, we denote by $\supp(A)$ the set of exponents $i$ for which $X^i$ has a nonzero coefficient in $A$, and we say that $A$ is \emph{$t$-sparse} if $|\supp(A)| \leq t$.

\begin{lemma}[Sparse Evaluation and Interpolation] \label{lem:evaluation-interpolation}
Let $\Field$ be a field and let $\omega \in \Field$ have multiplicative order at least $n$. The following two computational problems can be solved in deterministic time $\Order(t \log^2 t + t \log n)$:
\begin{enumerate}
\itemdesc{Evaluation:} Given a $t$-sparse degree-$n$ polynomial $A$, evaluate $A(\omega^0), \dots, A(\omega^{t-1})$.
\itemdesc{Interpolation:} Given $a_0, \dots, a_{t-1} \in \Field$ and a size-$t$ set $T \subseteq [n]$, interpolate the unique polynomial $A$ with evaluations $A(\omega^i) = a_i$ for all $i \in [t]$ and $\supp(A) \subseteq T$.
\end{enumerate}
\end{lemma}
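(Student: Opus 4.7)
The plan is to cast both tasks as operations with a transposed Vandermonde system whose nodes are the powers $\omega^{x}$ for $x$ in the relevant support, so that \cref{thm:vandermonde} can be applied directly. Let $\{x_0, \dots, x_{t-1}\} \subseteq [n]$ denote $\supp(A)$ in the evaluation task (with $A_{x_k}$ its nonzero coefficients), or the given set $T$ in the interpolation task (with $A_{x_k}$ the unknown coefficients, possibly zero if $\supp(A) \subsetneq T$). Setting $b_k := \omega^{x_k}$, the identity
\begin{equation*}
A(\omega^i) \;=\; \sum_{k=0}^{t-1} A_{x_k}\,(\omega^{x_k})^i \;=\; \sum_{k=0}^{t-1} b_k^{\,i}\cdot A_{x_k} \qquad (i \in [t])
\end{equation*}
expresses the vector $\bigl(A(\omega^0),\dots,A(\omega^{t-1})\bigr)^\top$ as $M\cdot(A_{x_0},\dots,A_{x_{t-1}})^\top$, where $M$ is the $t \times t$ transposed Vandermonde matrix built from the nodes $b_0,\dots,b_{t-1}$ as in \cref{thm:vandermonde}. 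The conceptual move here is to index the system by the support rather than by $[n]$, so that the matrix has dimension $t$ (the sparsity) rather than $n$ (the degree bound)---this is what makes a near-linear-in-$t$ runtime possible at all.

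Next I would check that the nodes $b_k$ are pairwise distinct, so that $M$ is nonsingular. If $b_k = b_{k'}$, then $\omega^{x_k - x_{k'}} = 1$, and combined with $\mathrm{ord}(\omega) \geq n$ and $x_k, x_{k'} \in [n]$ this forces $x_k = x_{k'}$. Hence $M$ is invertible and both algorithms follow at once. For evaluation, apply the ``$Mx$'' primitive of \cref{thm:vandermonde} to the coefficient vector. For interpolation, the system $M\cdot(A_{x_0},\dots,A_{x_{t-1}})^\top = (a_0,\dots,a_{t-1})^\top$ is nonsingular, which simultaneously proves uniqueness of the claimed polynomial and lets us recover its coefficients via the ``$M^{-1}x$'' primitive.

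For the running time, each invocation of \cref{thm:vandermonde} costs $\Order(t \log^2 t)$ field operations. The only remaining work is assembling the nodes themselves: each $b_k = \omega^{x_k}$ is computed in $\Order(\log n)$ time by repeated squaring in $\Field$, contributing $\Order(t \log n)$ overall. Summing yields the claimed $\Order(t \log^2 t + t \log n)$ bound. I do not anticipate a substantial obstacle; the only ideas one needs are (i) reindexing the linear system by the support to keep the dimension at $t$, and (ii) observing that the large-order hypothesis on $\omega$ is exactly what guarantees distinctness of the nodes (hence invertibility of $M$) even when $\omega$ lives in a field whose size is much smaller than what one would naively expect from the degree bound $n$.
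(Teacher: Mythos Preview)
Your proposal is correct and follows essentially the same approach as the paper: both set up the $t\times t$ transposed Vandermonde system on the nodes $\omega^{x_k}$, use the order hypothesis on $\omega$ to verify distinctness (hence nonsingularity), invoke \cref{thm:vandermonde} for the $Mx$ and $M^{-1}x$ primitives, and account for the $\Order(t\log n)$ cost of computing the nodes by repeated squaring.
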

\begin{proof}
\begin{enumerate}
\itemdesc{Evaluation:} Assume that $A$ has the form $A(X) = \sum_{i=1}^t A_{x_i} X^{x_i}$. We precompute the powers $\omega^{x_1}, \dots, \omega^{x_t}$ by repeated squaring in time $\Order(t \log n)$. We can then compute the evaluations $A(\omega^0), \dots, A(\omega^{t-1})$ by computing the following transposed Vandermonde matrix-vector product:
\begin{equation*}
    \begin{bmatrix}
        A(\omega^0) \\ A(\omega^1) \\ \vdots \\ A(\omega^{t-1})
    \end{bmatrix}
    =
    \begin{bmatrix}
        1 & 1 & \cdots & 1 \\
        \omega^{x_1} & \omega^{x_2} & \cdots & \omega^{x_t} \\
        \vdots & \vdots & \ddots & \vdots \\
        \omega^{x_1 (t-1)} & \omega^{x_2 (t-1)} & \cdots & \omega^{x_t (t-1)}
    \end{bmatrix}
    \begin{bmatrix}
        A_{x_1} \\ A_{x_2} \\ \vdots \\ A_{x_t}
    \end{bmatrix}.
\end{equation*}
Since $\omega$ has order at least $n$, the elements $\omega^{x_1}, \dots, \omega^{x_t}$ are pairwise distinct. Therefore, this matrix is nonsingular and we may apply \cref{thm:vandermonde} to efficiently evaluate the product in time $\Order(t \log^2 t)$.

\itemdesc{Interpolation:}
Let $x_1, \dots, x_t$ denote the elements in $T$. We similarly prepare $\omega^{x_1}, \dots, \omega^{x_t}$ via repeated squaring. To interpolate $A$, we solve the following transposed Vandermonde equation system with indeterminates $A_{x_1}, \dots, A_{x_t}$:
\begin{equation*}
\begin{bmatrix}
    a_0 \\ a_1 \\ \vdots \\ a_{t-1}
\end{bmatrix}
=
\begin{bmatrix}
    1 & 1 & \cdots & 1 \\
    \omega^{x_1} & \omega^{x_2} & \cdots & \omega^{x_t} \\
    \vdots & \vdots & \ddots & \vdots \\
    \omega^{x_1 (t-1)} & \omega^{x_2 (t-1)} & \cdots & \omega^{x_t (t-1)}
\end{bmatrix}
\begin{bmatrix}
    A_{x_1} \\ A_{x_2} \\ \vdots \\ A_{x_t}
\end{bmatrix}.
\end{equation*}
Again, this matrix is nonsingular and thus \cref{thm:vandermonde} applies to compute a solution in time $\Order(t \log^2 t)$. Setting $A = \sum_{i=1}^t A_{x_i} X^{x_i}$, we have clearly reconstructed a polynomial with the correct evaluations $A(\omega^i) = a_i$ and support set $\supp(A) \subseteq T$. Moreover, $A$ is the only polynomial satisfying these conditions since the equation system is nonsingular and therefore $A$ is uniquely determined. \qedhere
\end{enumerate}
\end{proof}

\noindent
Using \cref{lem:evaluation-interpolation} we obtain our main result assuming we know a superset of the support and an appropriate element $\omega$.

\begin{lemma}[Sparse Convolution over a Large Field] \label{lem:field-sparse-conv}
Let $\Field$ be a field. Given $A, B \in \Field^n$, a set $T \supseteq \supp(A \conv B)$ and an element $\omega \in \Field$ with multiplicative order at least $n$, we can compute $A \conv B$ in deterministic time $\Order(t \log^2 t + t \log n)$ using $\Order(t \log^2 t + t \log n)$ field operations. Here, $t = \norm A_0 + \norm B_0 + |T|$.
\end{lemma}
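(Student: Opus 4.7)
The plan is to execute the evaluation-interpolation template described in the technical overview, with all heavy lifting delegated to \cref{lem:evaluation-interpolation}. I would interpret $A$ and $B$ as univariate polynomials over $\Field$; since $t = \norm A_0 + \norm B_0 + |T|$, both are $t$-sparse, and both have degree strictly less than $n$. Moreover $A \conv B$ has support contained in $T$ with $|T| \le t$, so it is also $t$-sparse.

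I would then carry out three phases. First, apply the Evaluation part of \cref{lem:evaluation-interpolation} to each of $A$ and $B$ to compute $A(\omega^0), \dots, A(\omega^{t-1})$ and $B(\omega^0), \dots, B(\omega^{t-1})$; each invocation costs $\Order(t \log^2 t + t \log n)$ time and equally many field operations. Second, form the coordinate-wise products $c_i \gets A(\omega^i) \cdot B(\omega^i)$ for $i = 0, \dots, t-1$ in $\Order(t)$ field operations. Third, invoke the Interpolation part of \cref{lem:evaluation-interpolation} with support hint $T$ and target values $c_0, \dots, c_{t-1}$ to obtain, in time $\Order(t \log^2 t + t \log n)$, the unique polynomial $C$ satisfying $\supp(C) \subseteq T$ and $C(\omega^i) = c_i$ for every $i \in [t]$.

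The remaining task is the correctness argument, which is where the real content lies. Setting $C^\star := A \conv B$, the hypothesis $T \supseteq \supp(A \conv B)$ gives $\supp(C^\star) \subseteq T$, and the homomorphism property of polynomial evaluation gives $C^\star(\omega^i) = A(\omega^i) \cdot B(\omega^i) = c_i$ for every $i \in [t]$. Thus $C^\star$ meets both defining conditions of the output of the Interpolation subroutine, and the uniqueness clause of \cref{lem:evaluation-interpolation} forces $C = C^\star$. The one point I would double-check carefully is that the order condition on $\omega$ really suffices for that uniqueness, i.e.\ that the elements $\{\omega^x : x \in T\}$ are pairwise distinct; since $T$ can contain indices up to roughly $2n$, one may need to apply \cref{lem:evaluation-interpolation} with the degree parameter inflated by a constant factor, which does not affect the asymptotics. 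Summing the running times and field-operation counts over the three phases yields the claimed bound $\Order(t \log^2 t + t \log n)$.
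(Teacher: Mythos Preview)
Your proposal is correct and follows essentially the same evaluation--interpolation approach as the paper's proof: evaluate $A$ and $B$ at $\omega^0,\dots,\omega^{t-1}$ via \cref{lem:evaluation-interpolation}, multiply pointwise, and interpolate using the support hint $T$, with uniqueness forcing the output to equal $A\conv B$. Your extra remark about the order condition is a valid observation that the paper glosses over (indices in $T$ can reach $2n-2$, so one really wants order $\geq 2n$ rather than $\geq n$); note, though, that the fix is not to inflate the degree parameter in the call to \cref{lem:evaluation-interpolation} but to strengthen the hypothesis on $\omega$, which is harmless since the callers of this lemma in the paper already supply an $\omega$ of much larger order.
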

\begin{proof}
We follow an evaluation--interpolation approach. Let us identify vectors with polynomials via $A(X) = \sum_{i=0}^{n-1} A_i X^i$. In this correspondence, taking convolutions $A \conv B$ corresponds to multiplying polynomials $A(X) \cdot B(X)$.

We first evaluate $A(\omega^0), \dots, A(\omega^{t-1})$ and $B(\omega^0), \dots, B(\omega^{t-1})$ using \cref{lem:evaluation-interpolation}. We then apply \cref{lem:evaluation-interpolation} again to interpolate a polynomial $C(X)$ with $\supp(C) \subseteq T$ and $C(\omega^i) = A(\omega^i) \cdot B(\omega^i)$ for all $i \in [t]$. One solution is the correct polynomial $C(X) = A(X) \cdot B(X)$, and \cref{lem:evaluation-interpolation} guarantees that this is the unique solution. The running time is $\Order(t \log^2 t + t \log n)$ as claimed.
\end{proof}

\noindent
It remains to construct $\omega$ (see \cref{sec:deterministic:sec:field}) and to find a superset of the support (see \cref{sec:deterministic:sec:support}).

\subsection{Finding Large-Order Elements} \label{sec:deterministic:sec:field}
We next solve the sparse convolution problem for integer vectors $A, B$ assuming that we know the support of $A \conv B$, using what we have established in the last section. We start with the following two lemmas due to Cheng~\cite{Cheng07}; for completeness we include short proofs.

\begin{lemma}[{{{\cite{Cheng07}}}}] \label{lem:irreducible}
Let $\beta \in \Field_p$ be primitive. Then $X^{p-1} - \beta \in \Field_p[X]$ is irreducible.
\end{lemma}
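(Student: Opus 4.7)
}

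The plan is to pass to the splitting field and bound the degree of a root of $X^{p-1} - \beta$ over $\Field_p$ from below, by computing its multiplicative order. Let $\alpha$ be a root of $X^{p-1} - \beta$ in an algebraic closure of $\Field_p$, so that $\alpha^{p-1} = \beta$. It suffices to show that $[\Field_p(\alpha) : \Field_p] = p-1$: since the minimal polynomial of $\alpha$ then has degree $p-1$ and divides the degree-$(p-1)$ polynomial $X^{p-1} - \beta$, the two must coincide (both are monic), giving irreducibility.

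First I would determine the multiplicative order $d$ of $\alpha$ and argue that $d = (p-1)^2$. The upper bound $d \mid (p-1)^2$ is immediate from $\alpha^{(p-1)^2} = \beta^{p-1} = 1$, using that $\beta$ is primitive of order $p-1$. For the matching lower bound, note that the order of $\alpha^{p-1} = \beta$ equals $d/\gcd(d,p-1)$, and this must be $p-1$. Writing $g = \gcd(d, p-1)$, this gives $d = (p-1)g$ with $g \mid (p-1)$; plugging back yields $\gcd((p-1)g, p-1) = p-1$ (as $(p-1) \mid (p-1)g$), which forces $g = p-1$ and hence $d = (p-1)^2$.

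Next I would identify the smallest $k$ with $\alpha \in \Field_{p^k}$, equivalently the smallest $k \geq 1$ with $(p-1)^2 \mid p^k - 1$. Factoring $p^k - 1 = (p-1)\bigl(1 + p + p^2 + \cdots + p^{k-1}\bigr)$ and reducing the second factor modulo $p-1$ (where each $p^i \equiv 1$) yields $1 + p + \cdots + p^{k-1} \equiv k \pmod{p-1}$. Therefore $(p-1)^2 \mid p^k - 1$ if and only if $(p-1) \mid k$, so the minimal such $k$ is $p-1$. This shows $\Field_p(\alpha) = \Field_{p^{p-1}}$ and thus $[\Field_p(\alpha) : \Field_p] = p-1$, completing the proof.

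The only subtle step is computing $d$ exactly; once $d = (p-1)^2$ is established, the divisibility calculation $(p-1)^2 \mid p^k - 1 \iff (p-1) \mid k$ is the elegant observation that pins down the degree of $\alpha$ and thus the irreducibility of $X^{p-1} - \beta$.
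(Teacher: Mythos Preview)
Your proof is correct and takes a genuinely different route from the paper's. The paper argues structurally: it factors $f = X^{p-1} - \beta$ into monic irreducibles $f_1 \cdots f_m$, observes that the roots of $f$ are exactly $\{x\alpha : x \in \Field_p^\times\}$ (by Fermat's Little Theorem), and uses this to show that all $f_i$ have the same degree $d = (p-1)/m$; then it shows $\alpha_1^d \in \Field_p$ (via the constant coefficient of $f_1$ and the fact that any two roots differ by an $\Field_p^\times$-scalar), so that $\beta = \alpha_1^{p-1} = (\alpha_1^d)^m$ would be an $m$-th power in $\Field_p^\times$, contradicting primitivity unless $m = 1$. Your approach instead pins down the multiplicative order of a single root as $(p-1)^2$ and then reads off the degree of $\Field_p(\alpha)$ from the divisibility condition $(p-1)^2 \mid p^k - 1 \iff (p-1) \mid k$. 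Your argument is the more standard textbook route for irreducibility of binomials over finite fields and generalizes cleanly to $X^n - \beta$ whenever $n \mid p-1$; the paper's argument is more self-contained (it avoids the order formula for powers and the subfield-order characterization) and highlights the role of the $\Field_p^\times$-action on the roots, which resonates with how the same root structure is exploited in the subsequent \cref{lem:large-order}.
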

\begin{proof}
Let $f = X^{p-1} - \beta$ and let $f = f_1 \dots f_m$ denote its factorization into monic irreducibles. We first prove that all factors have the same degree. Let $\alpha$ be a root of $f$ (in a field extension). Then $\{ x \alpha : x \in \Field_p^\times \}$ must be the full set of roots of~$f$. Indeed, $(x\alpha)^{p-1} - \beta = x^{p-1} \beta - \beta = 0$ by Fermat's Little Theorem, and there cannot be other roots since~$f$ has degree $p-1$. Pick arbitrary distinct indices $1 \leq i, j \leq m$; we prove that $\deg(f_i) \leq \deg(f_j)$. Let $x, y \in \Field_p^\times$ be such that $x \alpha$ is a root of $f_i$ and $y \alpha$ is a root of $f_j$. We can construct a polynomial $f_j'(X) = f_j(y x^{-1} X)$, which by construction has degree $\deg(f_j)$ and has~$x \alpha$ as a root. But recall that $f_i$ is irreducible (and monic) and therefore the minimal polynomial of~$x \alpha$. It follows that $\deg(f_i) \leq \deg(f_j') = \deg(f_j)$. Since $i, j$ were arbitrary we conclude that all polynomials $f_1, \dots, f_m$ must have common degree \raisebox{0pt}[0pt][0pt]{$d = \frac{p-1}m$}.

Next, we prove that $m = 1$. Let $\alpha_1, \dots, \alpha_d$ denote the roots of $f_1$ (in a field extension). As observed before, we have that $\alpha_i \alpha_j^{-1} \in \Field_p$ for all $i, j$. Moreover, $\prod_i \alpha_i$ is the constant coefficient of $f_1$ and thus $\prod_i \alpha_i \in \Field_p$. It follows that $\alpha_1^d = \prod_{i=1}^d \alpha_1 \alpha_i^{-1} \alpha_i$ is an element of~$\Field_p$. Recall that $\alpha_1$ is a root of $f$ and hence \raisebox{0pt}[0pt][0pt]{$\alpha_1^{p-1} = (\alpha_1^d)^m = \beta$}. Finally, any value $m > 1$ would contradict the primitivity of~$\beta$.
\end{proof}

\begin{lemma}[{{{\cite{Cheng07}}}}] \label{lem:large-order}
Let $f = X^{p-1} - \beta \in \Field_p[X]$ be an irreducible polynomial. Then $X + 1$ has multiplicative order at least $2^p$ in $\Field_p[X] / \ideal f$ provided that $p \geq 7$.
\end{lemma}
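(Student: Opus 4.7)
The plan is to bound the order of $\alpha+1 := X+1 \bmod f$ from below by exhibiting many distinct powers of it inside $K := \Field_p[X]/\ideal f$. Let $\alpha$ denote the image of $X$, so $K \cong \Field_{p^{p-1}}$ and $\alpha^{p-1} = \beta$. The starting observation exploits the Frobenius automorphism $\sigma \colon y \mapsto y^p$: computing $\sigma(\alpha) = \alpha \cdot \alpha^{p-1} = \beta\alpha$ gives $\sigma^k(\alpha+1) = \beta^k \alpha + 1 = (\alpha+1)^{p^k}$, and since $\beta$ is a primitive root, as $k$ ranges over $\{0, 1, \dots, p-2\}$ the scalars $\beta^k$ enumerate all of $\Field_p^\times$. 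Thus every element $c\alpha + 1$ with $c \in \Field_p^\times$ is visibly a power of $\alpha+1$ and in particular lies in $\langle \alpha + 1 \rangle$.

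The main counting step is to show that the $2^{p-1}$ products $\prod_{c \in \Field_p^\times}(c\alpha+1)^{a_c}$ with $(a_c) \in \{0, 1\}^{p-1}$ are pairwise distinct; since each is a power of $\alpha+1$, this gives the base bound $\mathrm{ord}(\alpha+1) \geq 2^{p-1}$. If two products coincide, set $b_c = a_c - a_c' \in \{-1, 0, 1\}$ and consider the auxiliary polynomial
\begin{equation*}
    P(X) := \prod_{b_c > 0}(cX+1) - \prod_{b_c < 0}(cX+1) \in \Field_p[X],
\end{equation*}
which satisfies $P(\alpha) = 0$ and therefore -- since $f$ is the minimal polynomial of $\alpha$ -- is divisible by $f$. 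As $\deg P \leq p-1 = \deg f$, either $P = 0$, in which case unique factorisation in $\Field_p[X]$ together with the disjointness of the index sets $\{b_c > 0\}$ and $\{b_c < 0\}$ forces both to be empty, or $P = \lambda f$ with $\lambda \neq 0$, forcing $\deg P = p-1$ and thus one of the two index sets to be all of $\Field_p^\times$. Combined with the identity $\prod_{c \in \Field_p^\times}(cX+1) = 1 - X^{p-1}$ -- a consequence of Wilson's theorem $\prod_c c = -1$ and $\prod_c (X - c) = X^{p-1} - 1$ -- the latter case yields $\lambda(X^{p-1} - \beta) = \pm(1 - X^{p-1})$, forcing $\beta = 1$, which is impossible for a primitive root.

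To promote the bound from $2^{p-1}$ to $2^p$, the plan is to additionally allow signed products $\pm\prod_c (c\alpha+1)^{a_c}$, provided that $-1 \in \langle\alpha+1\rangle$ (which follows from $\mathrm{ord}(\alpha+1)$ being even and can be verified directly from the structure above). The distinctness analysis is then augmented to rule out ``signed'' coincidences $\prod_c (c\alpha+1)^{b_c} = -1$; the analogous polynomial, now with a $+$ sign between the two products, is handled in the same way, but with a delicate corner in which one arrives at $P(X) = 2 - X^{p-1}$ and the constraint $\beta \equiv 2 \pmod p$. The main obstacle is exactly this corner: the hypothesis $p \geq 7$ is needed to eliminate it, either by arranging a priori that $\beta \neq 2$ in Part~2 of the algorithm (using $\varphi(p-1) \geq 2$ for $p \geq 5$, so that multiple primitive roots are available), or, as a fallback, by switching entirely to the enlarged exponent range $a_c \in \{0, 1, 2\}$, which provides $3^{p-1} \geq 2^p$ candidates for $p \geq 7$ at the price of a more intricate divisibility argument: the associated polynomial $P$ may have degree up to $2(p-1)$ and must be analysed as $P = g \cdot f$ with $\deg g \leq p-1$, exploiting the two-term sparse structure of $f = X^{p-1} - \beta$ to constrain the possible shapes of $g$.
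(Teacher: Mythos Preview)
Your core idea matches the paper's: observe that every $c\alpha+1$ with $c\in\Field_p^\times$ is Galois-conjugate to (hence a power of) $\alpha+1$, and then lower-bound the order of $\alpha+1$ by counting distinct monomials $\prod_c (c\alpha+1)^{e_c}$. Your argument for $\{0,1\}$-exponents is correct and gives $2^{p-1}$.

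The gap is in the step from $2^{p-1}$ to $2^p$. The sign-doubling trick needs $-1\in\langle\alpha+1\rangle$, which you assert ``follows from $\mathrm{ord}(\alpha+1)$ being even'' without justifying why the order is even; this is not automatic (even order of the ambient group $\Field_{p^{p-1}}^\times$ does not force it). Your first fallback---choosing $\beta\neq 2$---changes the hypothesis of the lemma rather than proving it. Your second fallback---exponents in $\{0,1,2\}$---is plausible, but you only sketch it and concede that the auxiliary polynomial now has degree up to $2(p-1)$, so the clean ``degree $<\deg f$'' reduction is lost and a genuinely new case analysis is required; this is not carried out.

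The paper avoids all of this with one clean choice of index set: take exponent vectors $(e_1,\dots,e_{p-1})\in\Nat^{p-1}$ with $\sum_i e_i = p-2$. Then every product $\prod_i(iX+1)^{e_i}$ has degree at most $p-2<\deg f$, so equality in $\Field_p[X]/\langle f\rangle$ is equality of polynomials, and unique factorisation immediately gives injectivity. The count of such vectors is $\binom{2p-4}{p-2}$, which is $\geq 2^p$ precisely for $p\geq 7$---this is where the hypothesis enters, cleanly and without corner cases.
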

\begin{proof}
Let $\Field_{p^{p-1}}$ denote the field $\Field_p[X] / \ideal f$. Let $s$ denote the order of $X + 1 \in \Field_{p^{p-1}}$ and let $S$ denote the set of $s$-th roots of unity in $\Field_{p^{p-1}}$ (that is, $S$ is the set of all polynomials $g \in \Field_p[X] / \ideal f$ such that $g^s = 1 \mod f$). We show that $S$ must be large. We clearly have $X + 1 \in S$. More generally, for any $i \in \Field_p^\times$ we also have $i X + 1 \in S$ since~$X + 1$ and $i X + 1$ are conjugate over $\Field_p$. Furthermore, $S$ is closed under multiplication.

Let $E \subseteq \Nat^{p-1}$ be the set of all sequences $e = (e_1, \dots, e_{p-1})$ with entry sum $\sum_i e_i = p-2$. For any such sequence $e \in E$, we define $\phi(e) = \prod_{i=1}^{p-1} (i X + 1)^{e_i} \in \Field_{p^{p-1}}$. By the previous paragraph, $\phi$ is a map $\phi : E \to S$. We claim that $\phi$ is injective. If $\phi(e) = \phi(e')$ for distinct $e, e' \in E$, then by definition
\begin{equation*}
    \prod_{i=1}^{p-1} (i X + 1)^{e_i} = \prod_{i=1}^{p-1} (i X + 1)^{e_i'} \mod f.
\end{equation*}
Recall that $f$ has degree $p-1$, but $\sum_i e_i = \sum_i e_i' < p-1$. It follows that the equation remains true even without computing modulo $f$:
\begin{equation*}
    \prod_{i=1}^{p-1} (i X + 1)^{e_i} = \prod_{i=1}^{p-1} (i X + 1)^{e_i'}.
\end{equation*}
However, this identity contradicts unique factorization in $\Field_p[X]$. It follows that $\phi$ is injective and therefore $s \geq |S| \geq |E|$. Finally, by a simple counting argument one can show that $|E| = \binom{2p-4}{p-2} \geq 2^p$, for all $p \geq 7$.
\end{proof}

\noindent
For the rest of this section, we will analyze \cref{alg:integer-sparse-conv}.

\begin{lemma}[Correctness of \cref{alg:integer-sparse-conv}] \label{lem:integer-sparse-conv-correct}
Given integer vectors $A, B$ and an arbitrary set $T \supseteq \supp(A \conv B)$, \cref{alg:integer-sparse-conv} correctly returns $C = A \conv B$.
\end{lemma}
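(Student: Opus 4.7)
The plan is to analyze \cref{alg:integer-sparse-conv} as a Chinese Remainder Theorem wrapper around \cref{lem:field-sparse-conv}. I expect the algorithm to select a small collection of primes $p_1, \ldots, p_\ell$, each of size $\Theta(\log(n\Delta))$, with $\prod_i p_i > \Delta$, and for each $p_i$ to perform the following steps: (i)~find a primitive element $\beta_i \in \Field_{p_i}$ by brute force in $\poly(p_i)$ time; (ii)~invoke \cref{lem:irreducible} to conclude that $f_i = X^{p_i-1} - \beta_i \in \Field_{p_i}[X]$ is irreducible, giving the field representation $\Field_{p_i^{p_i-1}} \cong \Field_{p_i}[X]/\ideal{f_i}$; (iii)~set $\omega_i := X + 1$, whose multiplicative order exceeds $2^{p_i} \geq n$ by \cref{lem:large-order} provided $p_i \geq \max\{7, \log n\}$; and (iv)~invoke \cref{lem:field-sparse-conv} on the reductions $A \bmod p_i$, $B \bmod p_i$, the set $T$, and $\omega_i$. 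The final step is to recombine the residues into an integer vector $C$ via CRT.

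The correctness argument then splits into a field-level and an integer-level claim. On the field level, I need to verify that the hypotheses of \cref{lem:field-sparse-conv} are satisfied for each $p_i$: the large-order condition on $\omega_i$ follows from chaining \cref{lem:irreducible,lem:large-order}, and because convolution commutes with coefficient-wise reduction modulo $p_i$, we have
\begin{equation*}
    \supp((A \bmod p_i) \conv (B \bmod p_i)) \;\subseteq\; \supp(A \conv B) \;\subseteq\; T.
\end{equation*}
The uniqueness clause of \cref{lem:field-sparse-conv} then pins the returned polynomial down to be exactly $(A \conv B) \bmod p_i$, represented as an element of $\Field_{p_i^{p_i-1}}$ whose coefficients happen to lie in the prime subfield $\Field_{p_i}$. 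On the integer level, the reconstruction is a routine CRT argument: since $\prod_i p_i > \Delta \geq \norm{A \conv B}_\infty$ and all entries are nonnegative, every coordinate of $A \conv B$ is uniquely determined by its residues modulo the $p_i$, and coordinates $k \notin T$ are automatically assigned $0$, which is correct because $T \supseteq \supp(A \conv B)$.

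The one subtlety I anticipate is that $\Delta = \norm{A \conv B}_\infty$ is not directly available before the computation begins. I expect the algorithm to sidestep this either by using the a priori bound $\Delta \leq n \cdot \norm A_\infty \cdot \norm B_\infty$ to fix $\ell$ upfront, or by iteratively doubling $\ell$ until a sanity check (e.g.\ that adjoining further primes does not change any reconstructed entry) passes; either way correctness is immediate once enough primes are in play. The remaining verifications --- that coefficient-wise reduction commutes with $\conv$, that $X+1$ makes sense inside the quotient ring, and that CRT reconstruction behaves as expected --- are completely standard, so the bulk of the proof is simply chaining together the three previously established ingredients \cref{lem:irreducible}, \cref{lem:large-order}, and \cref{lem:field-sparse-conv}.
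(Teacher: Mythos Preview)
Your proposal is correct and follows essentially the same approach as the paper: chain \cref{lem:irreducible} and \cref{lem:large-order} to certify that $\omega = X+1$ has order $\geq 2^{p_i} \geq n$ in $\Field_{p_i}[X]/\ideal{X^{p_i-1}-\beta}$, verify the support-containment hypothesis of \cref{lem:field-sparse-conv} via $\supp((A \bmod p_i)\conv(B \bmod p_i)) \subseteq \supp(A\conv B) \subseteq T$, and then CRT-recombine using the a~priori bound $\norm{A\conv B}_\infty \leq n\,\norm A_\infty\,\norm B_\infty$ (the paper indeed fixes $k = \lceil\log(n\,\norm A_\infty\,\norm B_\infty)\rceil$ primes upfront rather than doubling). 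One small remark: the lemma is stated for integer (not necessarily nonnegative) vectors, so your CRT step should not appeal to nonnegativity; the paper handles this the same way you would, by making the modulus exceed the magnitude bound.
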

\begin{proof}
First, focus on an arbitrary iteration $i$ of the loop in \crefrange{alg:integer-sparse-conv:line:loop}{alg:integer-sparse-conv:line:convolve}. We prove that the algorithm computes the vector $C^i \in \Field_{p_i}$ which is obtained from $C = A \conv B$ by reducing all coefficients modulo~$p_i$. The polynomial $X^{p_i-1} - \beta$ computed in \cref{alg:integer-sparse-conv:line:primitive,alg:integer-sparse-conv:line:field} is indeed irreducible by \cref{lem:irreducible}, so we can represent $\Field_{q_i}$ as $\Field_{p_i} / \ideal{X^{p_i-1} - \beta}$ as claimed. Moreover, the element $\omega \in \Field_{q_i}$ constructed in \cref{alg:integer-sparse-conv:line:omega} has multiplicative order at least $2^{p_i} \geq n$ by \cref{lem:large-order}. The preconditions of \cref{lem:field-sparse-conv} are satisfied ($T \supseteq \supp(A \conv B) \supseteq \supp(A^i \conv B^i)$ and $\omega$ has order at least $n$), hence we correctly compute $C^i = A^i \conv B^i$ in \cref{alg:integer-sparse-conv:line:convolve}. Note that although we carry out the computations over the extension field $\Field_{q_i}$, the vector $C^i$ is guaranteed to have coefficients in $\Field_{p_i}$.

We finally use the Chinese Remainder Theorem to recover $C$ from its images modulo $p_1, \dots, p_k$. As $\prod_{i=1}^k p_i \geq 2^k \geq n \norm A_\infty \norm B_\infty$ exceeds the maximum coefficient in $C$, this recovery step correctly identifies $C = A \conv B$.
\end{proof}

\begin{lemma}[Running Time of \cref{alg:integer-sparse-conv}] \label{lem:integer-sparse-conv-time}
The running time of \cref{alg:integer-sparse-conv} is bounded by $\Order(t \log^4 n \polyloglog n)$ where $t = \norm A_0 + \norm B_0 + |T|$, assuming that $\norm A_\infty, \norm B_\infty \leq \poly(n)$.
\end{lemma}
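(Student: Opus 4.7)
The plan is to identify the relevant parameters of \cref{alg:integer-sparse-conv} and track the cost of each part. Since $\norm A_\infty, \norm B_\infty \leq \poly(n)$, every entry of $C = A \conv B$ is bounded by $n \cdot \poly(n) = \poly(n)$, so the CRT step requires $\prod_{i=1}^k p_i = \poly(n)$, which is ensured by choosing $k = O(\log n)$. Moreover, in order for \cref{lem:large-order} to yield an element of order at least $n$, each prime must satisfy $p_i \geq \log n$; on the other hand, it suffices to take all $p_i = \Theta(\log n)$ (e.g., the first $k$ distinct primes above $\log n$, found deterministically in $\polylog(n)$ time).

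With this choice, the field $\Field_{q_i}$ used in the $i$-th iteration has size $q_i = p_i^{p_i - 1}$ and hence $\log q_i = O(\log n \cdot \log\log n)$, so each arithmetic operation in $\Field_{q_i}$ costs $\widetilde\Order(\log q_i) = \Order(\log n \cdot \polyloglog n)$ via standard polynomial arithmetic. The per-iteration preprocessing (finding a primitive $\beta \in \Field_{p_i}$, constructing $f = X^{p_i - 1} - \beta$ via \cref{lem:irreducible}, and defining $\omega = X + 1$ via \cref{lem:large-order}) runs in $\poly(p_i) = \polylog(n)$ time by standard deterministic field-theoretic algorithms. The dominant cost per iteration is the invocation of \cref{lem:field-sparse-conv} at \cref{alg:integer-sparse-conv:line:convolve}, which performs $\Order(t \log^2 t + t \log n) = \Order(t \log^2 n)$ field operations and hence takes $\Order(t \log^3 n \cdot \polyloglog n)$ time.

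Summing over the $k = \Order(\log n)$ iterations gives $\Order(t \log^4 n \cdot \polyloglog n)$. It remains to handle the final CRT reconstruction: at most $t$ output entries must be recombined from their residues modulo $p_1, \dots, p_k$, and each can be reconstructed in $\widetilde\Order(\log n)$ time (e.g., by a standard product-tree approach), contributing only $\widetilde\Order(t \log n)$, which is absorbed into the bound above.

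The main subtlety I anticipate is correctly accounting for the $\polyloglog n$ overhead: one must notice that choosing $p_i = \Theta(\log n)$ (forced by \cref{lem:large-order}) causes $\log q_i$ to blow up to $\Theta(\log n \log\log n)$ rather than the $\Theta(\log n)$ one might first hope for, so the per-operation cost inside $\Field_{q_i}$ is $\Order(\log n \cdot \polyloglog n)$ rather than $\widetilde\Order(\log n)$. Aside from this bookkeeping point, the analysis is a direct multiplication of the number of iterations, the number of field operations per iteration, and the cost per field operation.
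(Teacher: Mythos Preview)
Your proposal is correct and follows essentially the same approach as the paper: bound $k = \Order(\log n)$ from the entry-size assumption, observe that $p_i = \Theta(\log n)$ so that $\log q_i = \Order(\log n\,\log\log n)$ and each field operation costs $\Order(\log n\,\polyloglog n)$, multiply by the $\Order(t\log^2 n)$ field operations from \cref{lem:field-sparse-conv} and the $k$ iterations, and then note that the CRT and preprocessing costs are dominated. The only cosmetic difference is that the paper cites a $\Order(\log^2 n\,\polyloglog n)$ bound per CRT reconstruction rather than $\widetilde\Order(\log n)$, but both are absorbed into the final bound.
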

\begin{proof}
Assuming that $\norm A_\infty, \norm B_\infty \leq \poly(n)$, we have $k = \ceil{\log(n \norm A_\infty \norm B_\infty)} \leq \Order(\log n)$. Note that $p_1, \dots, p_k \leq \widetilde\Order(\log n)$ by the Prime Number Theorem, and therefore computing these primes in \cref{alg:integer-sparse-conv:line:sieve} takes time $\widetilde\Order(\log n)$, using for instance Eratosthenes' sieve. Finding a primitive element $\beta \in \Field_{p_i}$ in \cref{alg:integer-sparse-conv:line:primitive} takes time $\widetilde\Order(\log n)$ as well and \crefrange{alg:integer-sparse-conv:line:field}{alg:integer-sparse-conv:line:fold} have negligible costs. Per iteration, running the convolution algorithm in \cref{alg:integer-sparse-conv:line:convolve} takes time $\Order(t \log^2 t + t \log n) = \Order(t \log^2 n)$ and requires the computation of at most $\Order(t \log^2 n)$ field operations in $\Field_{q_i}$, thus amounting for time $\Order(t \log^3 n \polyloglog n)$. In total the loop in \cref{alg:integer-sparse-conv:line:loop} takes time $\Order(t \log^4 n \polyloglog n)$. Finally, each call to the algorithmic Chinese Remainder Theorem in \cref{alg:integer-sparse-conv:line:recover} takes time $\Order(\log^2(\prod_{i=1}^k p_i)) = \Order(\log^2 n \polyloglog n)$~\cite{vonzurGathenG13}.
\end{proof}

\begin{algorithm}[t]
\caption{} \label{alg:integer-sparse-conv}
\begin{algorithmic}[1]
\Input{Vectors $A, B \in \Int^n$ and a set $T \supseteq \supp(A \conv B)$}
\Output{$C = A \conv B$}
\smallskip
\State Let $k = \ceil{\log(n \norm A_\infty \norm B_\infty)}$
\State Compute the smallest $k$ primes $p_1, \dots, p_k$ larger than $\ceil{\log n}$ \label{alg:integer-sparse-conv:line:sieve}
\For{$i \gets 1, \dots, k$} \label{alg:integer-sparse-conv:line:loop}
    \State Find a primitive element $\beta \in \Field_{p_i}$ by brute-force \label{alg:integer-sparse-conv:line:primitive}
    \State Let $q_i = p_i^{p_i - 1}$ and represent $\Field_{q_i}$ as $\Field_{p_i}[X] / \ideal{X^{p_i-1} - \beta}$ \label{alg:integer-sparse-conv:line:field}
    \State Let $\omega = X + 1 \in \Field_{q_i}$ \label{alg:integer-sparse-conv:line:omega}
    \State Reduce the coefficients of $A, B$ modulo $p_i$ to obtain $A^i, B^i \in \Field_{p_i}^n \subseteq \Field_{q_i}^n$ \label{alg:integer-sparse-conv:line:fold}
    \State Compute $C^i \gets A^i \conv B^i$ over $\Field_{q_i}$ using \cref{lem:field-sparse-conv} with $T$ and $\omega$ \label{alg:integer-sparse-conv:line:convolve}
\EndFor
\ForEach{$x \in T$}
    \State Use Chinese Remaindering to recover $C_x \in \Int$ from $C^1_x \in \Field_{p_1}, \dots, C^k_x \in \Field_{p_k}$ \label{alg:integer-sparse-conv:line:recover}
\EndForEach
\State\Return $C$ with entries $C_x$ for $x \in T$ and zeros elsewhere
\end{algorithmic}
\end{algorithm}

\noindent
We remark that our algorithm can be somewhat simplified by exploiting the following result: For any finite field $\Field_{p^m}$, one can construct in time $\poly(p, m)$ a (simple-structured) set which is guaranteed to contain a primitive element~\cite{Shoup90,Shparlinski92}. The drawback is that the running time worsens by a couple of log factors.

\subsection{Recursively Computing the Support} \label{sec:deterministic:sec:support}
We finally remove the assumption that the support of $A \conv B$ is given as part of the input.

\begin{lemma}[Deterministic Sparse Nonnegative Convolution] \label{lem:deterministic}
There is a deterministic algorithm to compute the convolution of two nonnegative vectors $A, B \in \Nat^n$ in time $\Order(t \log^5 n \polyloglog n)$ where $t = \norm{A \conv B}_0$, assuming that $\norm A_\infty, \norm B_\infty \leq \poly(n)$.
\end{lemma}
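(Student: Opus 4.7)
The plan is to realize the recursive folding strategy outlined in Part~3 of the overview, using \cref{alg:integer-sparse-conv} as the workhorse. After padding so that $n$ is a power of two, I would define length-$(n/2)$ folds $A'_i := A_i + A_{i+n/2}$ and $B'_j := B_j + B_{j+n/2}$; both inherit nonnegativity and polynomially bounded entries. I would then recursively compute $C' := A' \conv B'$, form
\begin{equation*}
    T \;:=\; \{\, \ell,\; \ell + n/2,\; \ell + n : \ell \in \supp(C') \,\},
\end{equation*}
and finally invoke \cref{alg:integer-sparse-conv} on $A$, $B$, $T$ to produce $A \conv B$.

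The correctness argument rests on two claims about $T$. For the first, $\supp(A \conv B) \subseteq T$, I would expand
\begin{equation*}
    C'_\ell \;=\; \sum_{\substack{i+j=\ell\\ i,j \in [n/2]}} (A_i + A_{i+n/2})(B_j + B_{j+n/2})
\end{equation*}
into four nonnegative sums, one per ``half-quadrant'' of index pairs; each of them is exactly the portion of $(A \conv B)_{\ell + \kappa}$ (for $\kappa \in \{0, n/2, n\}$) coming from that quadrant. Nonnegativity prevents cancellation, so every $m \in \supp(A \conv B)$ must produce a positive summand in some $C'_\ell$ with $m \in \{\ell, \ell+n/2, \ell+n\}$, i.e.\ $m \in T$. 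For the second claim I would observe that $|\supp(C')| \leq 2t$: each $m \in \supp(A \conv B)$ belongs to at most two sets $\{\ell, \ell+n/2, \ell+n\}$ with $\ell \in [0, n-2]$, so the defining map from $\supp(C')$ to $\supp(A \conv B)$ is at most $2$-to-$1$. Combined with the standard estimate $\norm A_0, \norm B_0 \leq \norm{A \conv B}_0 + 1 \leq t + 1$ (using $\supp(A \conv B) = \supp(A) + \supp(B)$ for nonnegative inputs and $|X + Y| \geq |X| + |Y| - 1$), this shows that the sparsity parameter fed into \cref{lem:integer-sparse-conv-time} at every recursion level is $\Order(t)$.

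The time analysis is then routine: each recursive call contributes $\Order(t \log^4 n \polyloglog n)$ by \cref{lem:integer-sparse-conv-time}; the recursion halves $n$ and so has depth $\Order(\log n)$ with a trivial base case at $n = 1$; summing over levels yields the claimed $\Order(t \log^5 n \polyloglog n)$ bound. The one substantive obstacle is the support-preservation claim, and this is precisely where nonnegativity is essential---with negative entries, contributions from different half-quadrants could cancel in $C'$, making $\supp(C')$ miss elements of $\supp(A \conv B)$ and invalidating $T$ as a superset.
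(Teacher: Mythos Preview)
Your approach matches the paper's (recursive halving plus \cref{alg:integer-sparse-conv} at each level), and your one-step bound $|\supp(C')|\le 2t$ is both correct and necessary: the stronger inequality $|\supp(C')|\le t$ can fail (take $n=8$, $\supp(A)=\{1,2\}$, $\supp(B)=\{3,4\}$; then $t=|\supp(A\conv B)|=3$ while $\supp(A')=\{1,2\}$, $\supp(B')=\{0,3\}$ and $|\supp(C')|=|\{1,2,4,5\}|=4$).

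There is, however, a gap in the step ``this shows that the sparsity parameter \dots\ at every recursion level is $\Order(t)$.'' As written, your argument bounds one fold: $|\supp(C^{(\ell+1)})| \le 2\,|\supp(C^{(\ell)})|$. Iterated over $\ell$ levels this gives only $2^\ell t$, and the resulting time sum is no longer near-linear in $t$. The repair is to compare each level directly to the \emph{original} convolution rather than to its parent. Since the $\ell$-fold fold satisfies $\supp(A^{(\ell)})=\{a\bmod(n/2^\ell):a\in\supp(A)\}$ (and likewise for $B$), every element of $\supp(C^{(\ell)})=\supp(A^{(\ell)})+\supp(B^{(\ell)})$ is congruent modulo $n/2^\ell$ to some element of $\supp(A)+\supp(B)=\supp(A\conv B)$; because $\supp(C^{(\ell)})\subseteq[0,\,n/2^{\ell-1}-2]$ contains at most two representatives of any residue class mod $n/2^\ell$, this yields $|\supp(C^{(\ell)})|\le 2t$ uniformly in $\ell$. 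With that uniform bound in hand, your running-time analysis goes through verbatim.
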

\begin{proof}
We construct a recursive algorithm for computing $C = A \conv B$ using the scaling trick from~\cite{BringmannN21}. In order to apply \cref{alg:integer-sparse-conv}, we first recursively compute a set $T \supseteq \supp(C)$. To this end, let~$A', B'$ be vectors of length $\ceil{\frac n2}$ defined by $A'_i = A_i + A_{i+\ceil{n/2}}$ and~$B'_j = B_j + B_{j + \ceil{n/2}}$ (that is, we fold $A$ and $B$ in half). We call the convolution algorithm recursively to compute $C' = A' \conv B'$, and assign
\begin{equation*}
    T = \big\{\, k', k' + \ceil{\tfrac n2}, k' + 2\cdot \ceil{\tfrac n2} : k' \in \supp(C') \,\big\}.
\end{equation*}
We claim that indeed $T \supseteq \supp(C)$. To see this, let $k \in \supp(C)$ and write $k = i + j$ for some $i \in \supp(A)$ and $j \in \supp(B)$. By construction we have $i' = i \bmod \ceil{\frac n2} \in \supp(A')$ and~$j' = j \bmod \ceil{\frac n2} \in \supp(B')$, and thus $k' = i' + j' \in \supp(C')$. By definition it is immediate that $k' \in \{k, k - \ceil{\frac n2}, k - 2\cdot\ceil{\frac n2}\}$ and therefore $k \in T$.

We finally analyze the running time. It takes time $\Order(|T| \log^4 n \polyloglog n)$ to call \cref{alg:integer-sparse-conv} once, by \cref{lem:integer-sparse-conv-correct}. Note that $|T| \leq 3 |\supp(C')| \leq 3t$, and thus each call takes time $\Order(t \log^4 n \polyloglog n)$. Since the length of all vectors is halved in every step, after $\log n$ recursion levels the problem has reached constant input size. The total running time is bounded by $\Order(t \log^5 n \polyloglog n)$.
\end{proof}

\noindent
The proof of \cref{thm:deterministic} is now immediate from \cref{lem:deterministic}. To analyze the algorithm for vectors with entries of size $\Delta$, simply view the vectors as having length $n' = \max\{n, \Delta\}$.
\section{Las Vegas Algorithms} \label{sec:las-vegas}
The goal of this section is to prove \cref{thm:las-vegas,thm:fast-las-vegas}. We first gather some facts about hash functions (\cref{sec:las-vegas:sec:hashing}) and prove the sparsity testing lemma (\cref{sec:las-vegas:sec:sparsity}). Then we prove \cref{thm:las-vegas} (\cref{sec:las-vegas:sec:simple}) and \cref{thm:fast-las-vegas} (\cref{sec:las-vegas:sec:fast,sec:las-vegas:sec:length-reduction}).

\subsection{Hashing} \label{sec:las-vegas:sec:hashing}
Recall that a linear hash function $h : [n] \to [m]$ is defined by $h(x) = (a x \bmod N) \bmod m$, where $N \geq n$ is fixed and $a \in [N]$ is a random number. Typically $N$ is a prime number, in which case it is easy to prove that the family is $2$-universal. However, this choice is inefficient since precomputing a prime number~$N$ requires time $\polylog N$. In many applications this overhead is negligible---in our case it would incur an additive $\polylog n$ term to the running time which is otherwise independent of $n$. One can remove this overhead and perform linear hashing \emph{without} prime numbers as proven e.g.\ in~\cite{Dietzfelbinger18}. In \cref{sec:hashing} we provide a different self-contained proof.

\begin{restatable}[Linear Hashing without Primes]{lemma}{lemmahashing} \label{lem:hashing}
Let $n \geq m$ be arbitrary. There is a family of linear hash functions $h : [n] \to [m]$ with the following three properties. 
\begin{enumerate}
\itemdesc{Efficiency:} Sampling and evaluating $h$ takes constant time.
\itemdesc{Uniform Differences:} For any distinct keys $x, y \in [n]$ and for any $q \in [m]$, the probability that $h(x) - h(y) \equiv q \mod m$ is at most $\Order(\frac1{m})$.
\itemdesc{Almost-Additiveness:} There exists a constant-size set $\Phi \subseteq [m]$ such that for all keys $x, y \in [n]$ it holds that $h(x) + h(y) \equiv h(x + y) + \phi \mod m$ for some $\phi \in \Phi$.
\end{enumerate}
\end{restatable}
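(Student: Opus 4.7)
My plan is to adopt the Dietzfelbinger-style multiply-mod construction with a power-of-two modulus. Fix $w$ so that $N := 2^w \geq C n m$ for a sufficiently large absolute constant $C$, sample $a$ uniformly from $[N]$, and define $h(x) = (a x \bmod N) \bmod m$. Because $N$ is a power of two, both sampling $a$ and evaluating $h$ reduce to a constant number of word operations, which yields the efficiency claim. Almost-additiveness then follows by unwinding the definition: writing $\tilde A = ax \bmod N$ and $\tilde B = ay \bmod N$, one has $h(x+y) = ((\tilde A + \tilde B) \bmod N) \bmod m$, which agrees with $h(x) + h(y) \bmod m = (\tilde A + \tilde B) \bmod m$ unless a single wrap at $N$ occurs, in which case the two quantities differ by exactly $N \bmod m$. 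Hence the two-element set $\Phi = \{0, N \bmod m\}$ witnesses almost-additiveness.

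The heart of the argument is uniform differences, which I would split into two steps. First, a near-uniformity lemma for a single key: for any nonzero $z$ with $|z| < n$, let $d = \gcd(z, N)$, which is a power of two at most $|z| < n$. As $a$ ranges over $[N]$, the value $az \bmod N$ is uniform over the cyclic subgroup of multiples of $d$ modulo $N$, a set of cardinality $N/d$. Any fixed residue class modulo $m$ contains at most $\lceil (N/d)/m \rceil \leq (N/d)/m + 1$ of these multiples, so its total probability under $a$ is at most $1/m + d/N = \Order(1/m)$ once $N \geq 2nm$. Second, I connect differences back to single-key hashes: since $(ax \bmod N) - (ay \bmod N)$ lies in $(-N, N)$ and is congruent to $a(x-y)$ modulo $N$, it equals either $a(x-y) \bmod N$ or $a(x-y) \bmod N - N$. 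Therefore $(h(x) - h(y)) \bmod m$ takes one of two explicit values in terms of $z = x-y$, and a union bound on the near-uniformity statement applied to $z$ yields $\Pr_a[h(x) - h(y) \equiv q \pmod m] = \Order(1/m)$.

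The principal obstacle is calibrating $N$ so that the slack term $d/N$ stays below $\Order(1/m)$ even in the worst case $d = \Theta(n)$; this forces $N \geq \Omega(nm)$ and hence $w = \Theta(\log n + \log m)$, which still fits in a constant number of machine words. Once $N$ is chosen this way, the two boundary corrections (one in almost-additiveness, one in uniform differences) contribute only constant-factor overheads that are absorbed into the $\Order(\cdot)$ notation, and the remainder of the argument is routine arithmetic bookkeeping.
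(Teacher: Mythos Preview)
Your proposal follows essentially the same construction and outline as the paper: $h(x) = (ax \bmod N) \bmod m$ with $N$ a power of two, the two-element set $\Phi = \{0, N \bmod m\}$ for almost-additiveness, and the reduction of uniform differences to a single-key near-uniformity statement for $z = x - y$.

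There is, however, a genuine gap in your near-uniformity step. You claim that among the $N/d$ multiples of $d$ in $[0,N)$, any fixed residue class modulo $m$ contains at most $\lceil (N/d)/m \rceil$ of them. This is only true when $\gcd(d,m) = 1$. Since $d = \gcd(z,N)$ is always a power of two, the claim can fail badly whenever $m$ is even. Concretely, take $m$ a power of two and $z$ with $d \geq m$ (e.g.\ $m = 2^{10}$, $n \gg m$, $z = 2^{10}$); then every multiple of $d$ is $\equiv 0 \pmod m$, so $\Pr[(az \bmod N) \equiv 0 \pmod m] = 1$, not $\Order(1/m)$. More generally, when $g = \gcd(d,m) > 1$ the count in a hit residue class is about $(N/d)/(m/g)$, so the probability is $\Theta(g/m)$.

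The paper closes this gap by first assuming $m$ is odd, so that the step-widths $d$ (a power of two) and $m$ are coprime; a short coprime-progressions lemma then yields your intended bound. The even case is handled afterwards by hashing to $[m-1]$ and reinterpreting the result in $[m]$, which preserves all three properties at the cost of enlarging $\Phi$ by a constant factor. With that case split added, the remainder of your argument goes through as written.
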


\noindent
To prove \cref{thm:fast-las-vegas} we additionally make use of another family of hash functions: For a random prime number $p$, the hash function $h(x) = x \bmod p$ satisfies similar properties.

\begin{lemma}[Random Prime Hashing] \label{lem:hashing-prime}
Let $n \geq m$ be arbitrary. The family of hash functions $h(x) = x \bmod p$ where $p \in [m, 2m]$ is a random prime satisfies the following three properties.
\begin{enumerate}
\itemdesc{Efficiency:} Sampling $h$ takes time $\polylog m$ and evaluating takes constant time.
\itemdesc{Almost-Universality:} For any distinct keys $x, y \in [n]$, the probability that $h(x) = h(y)$ is at most $\Order(\frac{\log n}{m})$.
\itemdesc{Additiveness:} For all keys $x, y \in [n]$ it holds that $h(x) + h(y) \equiv h(x + y) \mod p$.
\end{enumerate}
\end{lemma}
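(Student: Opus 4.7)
The plan is to verify the three properties one by one, with the substantive work concentrated in almost-universality; the other two items follow from standard facts. Throughout, I use the Prime Number Theorem (PNT) to the effect that the interval $[m, 2m]$ contains $\Theta(m/\log m)$ primes.

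For \emph{efficiency}, I would sample a uniformly random integer from $[m, 2m]$ and test primality using a deterministic $\polylog m$-time primality test (e.g., AKS). By PNT each trial succeeds with probability $\Omega(1/\log m)$, so $\Order(\log m)$ trials suffice in expectation and the total sampling time is $\polylog m$. Evaluating $h(x) = x \bmod p$ is a single integer remainder, which costs constant time in the Word RAM model since keys and $p$ fit in $\Order(1)$ machine words. For \emph{additiveness}, directly from the definition of modular reduction one has $h(x) + h(y) = (x \bmod p) + (y \bmod p) \equiv x + y \equiv h(x+y) \pmod{p}$, which is exactly what is claimed; note that the statement is a congruence, not an equality, since $h(x) + h(y)$ may be as large as $2(p-1)$.

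For \emph{almost-universality}, fix distinct $x, y \in [n]$ and let $d = |x - y|$, so $1 \leq d \leq n$. The event $h(x) = h(y)$ is equivalent to $p \mid d$. The key counting step is that $d$ has at most $\log_m n = \Order(\log n / \log m)$ distinct prime divisors that lie in $[m, 2m]$, because each such prime is at least $m$ and their product divides $d \leq n$. Combining with the PNT count of $\Theta(m/\log m)$ primes in $[m, 2m]$, the probability that our uniformly random prime divides $d$ is at most
\[
\frac{\Order(\log n / \log m)}{\Theta(m/\log m)} = \Order\!\left(\frac{\log n}{m}\right),
\]
as required.

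The only mildly delicate point is ensuring that "random prime in $[m, 2m]$" is close enough to uniform for the ratio argument in almost-universality to go through; rejection sampling as described gives an exactly uniform distribution over primes in $[m, 2m]$, so this is not a genuine obstacle. The one other subtlety to flag is that the efficiency bound is expected polylogarithmic time, which is all the downstream applications in \cref{thm:fast-las-vegas} require; if one wanted worst-case $\polylog m$ sampling, a deterministic search over $[m, 2m]$ combined with AKS would still run in $\polylog m$ and produce \emph{some} prime, but to preserve uniformity for the union bound one simply keeps the randomized rejection-sampling variant.
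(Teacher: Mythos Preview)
Your proof is correct and complete; the paper in fact states this lemma without proof, treating all three items as folklore, so there is nothing to compare against beyond noting that your argument is the standard one.

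One small slip in your closing paragraph: a deterministic linear search over $[m, 2m]$ combined with AKS does \emph{not} run in $\polylog m$ time, since in the worst case you may have to test $\Theta(m)$ candidates before hitting a prime; deterministically finding a prime in $[m, 2m]$ in $\polylog m$ time is a well-known open problem. This does not affect your actual proof, since you correctly rely on the randomized rejection-sampling variant, which gives expected $\polylog m$ sampling time and a uniform distribution over primes in $[m, 2m]$, and that is all the downstream Las Vegas analysis needs.
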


\subsection{Derivatives and Sparsity Testing} \label{sec:las-vegas:sec:sparsity}
Recall that we define the \emph{derivative} $\partial A$ of a vector $A$ coordinate-wise by $(\partial A)_i = i \cdot A_i$, and we define the \emph{$d$-th derivative} $\partial^d A$ by $(\partial^d A)_i = i^d \cdot A_i$. The crucial ingredient for the Las Vegas guarantee is the following lemma about testing $1$-sparsity of a vector, having access to its first and second derivatives.
\lemmaonesparsity*
\begin{proof}
Note that since $V$ is nonnegative, we can rewrite $V_i = \sqrt{V_i} \cdot \sqrt{V_i}$. The proof is a straightforward application of the Cauchy-Schwartz inequality:
\begin{equation*}
    \norm{\partial V}_1^2 = \left(\sum_i i V_i\right)^2 = \left(\sum_i \sqrt{V_i} \cdot i \sqrt{V_i}\right)^2 \leq \left(\sum_i V_i\right)\left(\sum_i i^2 V_i\right) = \norm V_1 \cdot \norm{\partial^2 V}_1.
\end{equation*}
Recall that the Cauchy-Schwartz inequality is tight if and only if the involved vectors $V$ and~$\partial^2 V$ are scalar multiples of each other. This is possible if and only if $\norm V_0 \leq 1$.
\end{proof}

\subsection{Simple Algorithm} \label{sec:las-vegas:sec:simple}
We are finally ready to analyze \cref{alg:las-vegas}. For the ease of presentation, we have extracted the core part of \cref{alg:las-vegas} (\crefrange{alg:las-vegas:line:hash}{alg:las-vegas:line:update}) as \cref{alg:linear-hashing}, and our first goal is a detailed analysis of that core part.

\begin{algorithm}[t]
\caption{} \label{alg:linear-hashing}
\begin{algorithmic}[1]
\Input{Nonnegative vectors $A, B \in \Nat^n$ and a parameter $m$}
\Output{A nonnegative vector $R \leq A \conv B$, for details see \cref{lem:correctness-linear-hashing}}
\smallskip
\State Sample a linear hash function $h : [n] \to [m]$
\State Compute $X \gets h(A) \conv_m h(B)$ \label{alg:linear-hashing:line:X}
\State Compute $Y \gets h(\partial A) \conv_m h(B) + h(A) \conv_m h(\partial B)$ \label{alg:linear-hashing:line:Y}
\State Compute $Z \gets h(\partial^2 A) \conv_m h(B) + 2 h(\partial A) \conv_m h(\partial B) + h(A) \conv_m h(\partial^2 B)$ \label{alg:linear-hashing:line:Z}
\State Initialize $R \gets (0, \dots, 0)$
\ForEach{$k \in [m]$} \label{alg:linear-hashing:line:loop}
    \If{$X_k \neq 0$ \AND{} $Y_k^2 = X_k \cdot Z_k$} \label{alg:linear-hashing:line:condition}
        \State $z \gets Y_k / X_k$ \label{alg:linear-hashing:line:position}
        \State $R_z \gets R_z + X_k$ \label{alg:linear-hashing:line:update}
    \EndIf
\EndForEach
\State\Return $R$
\end{algorithmic}
\end{algorithm}

To increase clarity we shall adopt the following naming convention for the rest of this section: The indices $x, y, z \in [n]$ exclusively denote coordinates of large vectors, whereas $i, j, k \in [m]$ denote coordinates of the hashed vectors, or equivalently, buckets of a hash function $h$. The first lemma analyzes the vectors $X, Y, Z$ computed by the algorithm.

\begin{lemma}
Let $h, X, Y, Z$ be as in \cref{alg:linear-hashing}. Moreover, for a bucket $k \in [m]$ define the nonnegative vector $V^k \in \Nat^n$ by
\begin{equation*}
    V^k_z = \sum_{\substack{x + y = z\\h(x) + h(y) \equiv k \bmod m}} A_x \cdot B_y.
\end{equation*}
Then $X_k = \norm{V^k}_1$, $Y_k = \norm{\partial V^k}_1$ and $Z_k = \norm{\partial^2 V^k}_1$.
\end{lemma}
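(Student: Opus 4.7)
The claim is a direct algebraic identity, so the plan is to expand each of $X_k, Y_k, Z_k$ using the definitions of cyclic convolution and hashing, and then to match the resulting sums with $\|\partial^d V^k\|_1$ for $d = 0, 1, 2$. The single nontrivial observation is that the derivative structure on the right-hand side is produced by the binomial identity $(x+y)^d = \sum_{a+b=d} \binom{d}{a} x^a y^b$, which is exactly the combination of summands defining $X, Y, Z$ in \cref{alg:linear-hashing}.

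\textbf{Key step for $X_k$.} First I would unfold
\begin{equation*}
    X_k = \sum_{i+j \equiv k \bmod m} h(A)_i \, h(B)_j
    = \sum_{\substack{x, y \in [n]\\ h(x) + h(y) \equiv k \bmod m}} A_x B_y,
\end{equation*}
by pushing the definition $h(A)_i = \sum_{x : h(x) = i} A_x$ (and similarly for $h(B)_j$) into the cyclic convolution. On the other hand, summing $V^k_z$ over $z$ collapses the constraint $x + y = z$, giving
\begin{equation*}
    \|V^k\|_1 = \sum_z \sum_{\substack{x + y = z\\ h(x) + h(y) \equiv k \bmod m}} A_x B_y
    = \sum_{\substack{x, y \in [n]\\ h(x) + h(y) \equiv k \bmod m}} A_x B_y,
\end{equation*}
matching $X_k$.

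\textbf{Key step for $Y_k$ and $Z_k$.} The same expansion applied to $Y_k$ yields $\sum (x A_x)(B_y) + \sum (A_x)(y B_y) = \sum (x+y) A_x B_y$ over pairs with $h(x) + h(y) \equiv k \bmod m$, which equals $\sum_z z V^k_z = \|\partial V^k\|_1$ by the same collapsing argument with the extra weight $z = x+y$. For $Z_k$, the three summands of \cref{alg:linear-hashing:line:Z} unfold to $x^2 A_x B_y$, $2 x y A_x B_y$, and $y^2 A_x B_y$ respectively; they combine into $(x+y)^2 A_x B_y = z^2 A_x B_y$, giving $\|\partial^2 V^k\|_1$. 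The $\binom{2}{a}$ coefficients in $Z$ (the factor $2$ in front of the middle term) are precisely what is needed to match the binomial expansion of $z^2$.

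\textbf{Obstacle.} There is essentially no conceptual obstacle — this is a routine identity, and my presentation would mainly be a careful bookkeeping exercise so that the index-set constraint $h(x) + h(y) \equiv k \bmod m$ is preserved through every reindexing. Notably, I would emphasize that no property of the hash function $h$ (linearity, almost-additivity, etc.) is used here; the lemma holds for an arbitrary $h : [n] \to [m]$, which is why those properties reappear only later in the isolation analysis.
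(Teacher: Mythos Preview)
Your proposal is correct and matches the paper's own proof essentially line for line: the paper also unfolds the cyclic convolution and the hashing, showcases the $Y_k$ case via $\sum (x+y)A_xB_y$, and declares the other two cases analogous. Your additional remarks (the binomial viewpoint for $Z_k$, and the observation that no property of $h$ is used) are correct and not in the paper's write-up, but they do not change the argument.
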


\noindent
Note that $A \conv B = \sum_k V^k$. Intuitively, the vector $V^k$ is that part of $A \conv B$ which is hashed into the $k$-th bucket.

\begin{proof}
We merely showcase that $Y_k = \norm{\partial V^k}_1$; the other proofs are very similar. For convenience, let us denote equality modulo $m$ by $\equiv$. It holds that:
\begin{align*}
    Y_k
    &= \big(\, h(\partial A) \conv_m h(B) + h(A) \conv_m h(\partial B) \,\big)_k \\
    &= \sum_{i + j \equiv k} h(\partial A)_i \cdot h(B)_j + h(A)_i \cdot h(\partial B)_j \\
    &= \sum_{\substack{x, y\\h(x) + h(y) \equiv k}} (\partial A)_x \cdot B_y + A_x \cdot (\partial B)_y \\
    &= \sum_{\substack{x, y\\h(x) + h(y) \equiv k}} (x + y) \cdot A_x \cdot B_y \\
    &= \sum_z z \cdot \sum_{\substack{x + y = z\\h(x) + h(y) \equiv k}} A_x \cdot B_y \\
    &= \sum_z z \cdot V^k_z \\
    &= \norm{\partial V^k}_1. \qedhere
\end{align*}
\end{proof}

\noindent
Next, we will prove that in every iteration the algorithm computes a feasible approximation~$R$ to the target vector $A \conv B$.

\begin{lemma}[Correctness and Running Time of \cref{alg:linear-hashing}] \label{lem:correctness-linear-hashing}
Given nonnegative vectors~$A, B$ and any parameter~$m$, \cref{alg:linear-hashing} runs in time $\Order(m \log m)$ and computes a vector $R$ such that for every~$z \in [n]$:
\begin{itemize}
\item $R_z \leq (A \star B)_z$ (always), and
\item $R_z < (A \star B)_z$ with probability at most $c \cdot \norm{A \star B}_0 / m$ for some constant $c$.
\end{itemize}
\end{lemma}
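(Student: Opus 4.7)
The running-time bound is the easy part: lines~\ref{alg:linear-hashing:line:X}--\ref{alg:linear-hashing:line:Z} perform a constant number of cyclic length-$m$ convolutions (each in time $O(m \log m)$ by FFT), hashing into length-$m$ buckets takes $O(n)$ preprocessing folded into the convolution cost, and the loop on \cref{alg:linear-hashing:line:loop} runs in $O(m)$. For the correctness part I will separately establish the two pointwise guarantees on $R$, using the preceding lemma (identifying $X_k, Y_k, Z_k$ with $\lVert V^k\rVert_1$, $\lVert\partial V^k\rVert_1$, $\lVert\partial^2 V^k\rVert_1$) together with \cref{lem:1-sparse} and the hashing guarantees from \cref{lem:hashing}.

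\textbf{The always-true upper bound.} Fix a bucket $k$ passing the test on \cref{alg:linear-hashing:line:condition}. Since $V^k$ is nonnegative and $X_k = \lVert V^k\rVert_1 \neq 0$, \cref{lem:1-sparse} forces $\lVert V^k\rVert_0 = 1$: write its unique nonzero coordinate as $z^\star$, so $V^k_{z^\star} = X_k$ and hence $Y_k/X_k = \lVert\partial V^k\rVert_1/\lVert V^k\rVert_1 = z^\star$. Thus the algorithm increments $R_{z^\star}$ by $V^k_{z^\star}$. Since $(A\conv B)_{z} = \sum_k V^k_z$ for every $z$ and every $V^k_z \geq 0$, summing these contributions over the buckets $k$ that pass the test yields $R_z \le \sum_k V^k_z = (A\conv B)_z$.

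\textbf{The probabilistic lower bound.} Fix $z \in [n]$ and let $\Phi$ be the constant-size set from the almost-additiveness clause of \cref{lem:hashing}. By almost-additivity, every pair $(x,y)$ with $x+y=z$ satisfies $h(x)+h(y) \equiv h(z)+\phi \pmod{m}$ for some $\phi \in \Phi$, so the mass of $(A\conv B)_z$ is distributed entirely among the $O(1)$ buckets $K_z := \{h(z)+\phi \bmod m : \phi \in \Phi\}$. By the previous paragraph, if every bucket $k \in K_z$ with $V^k_z > 0$ satisfies $\lVert V^k\rVert_0 = 1$ (necessarily with the nonzero at $z$), then $R_z = \sum_{k \in K_z} V^k_z = (A\conv B)_z$. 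Call the complementary event \emph{bad}: some bucket $k \in K_z$ contains a contribution from a position $z' \ne z$. Such a contribution comes from a pair $(x',y')$ with $x'+y'=z'$ and $h(x')+h(y') \equiv k \equiv h(z)+\phi \pmod m$; using almost-additivity again, $h(x')+h(y') \equiv h(z')+\phi' \pmod m$ for some $\phi' \in \Phi$, so the bad event implies $h(z)-h(z') \equiv \phi'-\phi \pmod m$ for some $\phi,\phi' \in \Phi$. The uniform-differences clause of \cref{lem:hashing} gives probability $O(1/m)$ per fixed right-hand side; union-bounding over the $O(1)$ values $\phi'-\phi$ and over the at most $\lVert A\conv B\rVert_0$ candidates $z' \in \supp(A\conv B)\setminus\{z\}$ bounds the bad probability by $c\cdot\lVert A\conv B\rVert_0/m$ for an absolute constant $c$.

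\textbf{Main obstacle.} The genuine subtlety is that the linear hash is only \emph{almost} additive, so the mass destined for coordinate $z$ is not concentrated in a single bucket $h(z)$ but spread over the $O(1)$-sized cloud $K_z$; this is what forces the union bound over $\phi,\phi' \in \Phi$ and enlarges the colliding-difference set from $\{0\}$ (as in the perfectly additive case) to $\Phi-\Phi$. Once one sets up the cloud $K_z$ correctly, both the upper bound and the failure bound follow cleanly from \cref{lem:1-sparse} and the two nontrivial properties in \cref{lem:hashing}.
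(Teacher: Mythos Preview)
Your proof is correct and follows essentially the same route as the paper: you use the preceding lemma plus \cref{lem:1-sparse} to show that each passing bucket contributes exactly one $V^k_{z^\star}$ to $R_{z^\star}$ (giving $R \le A\conv B$), and for the failure bound you reduce to a collision event of the form $h(z)-h(z')\equiv \phi'-\phi \pmod m$ and union-bound over $\phi,\phi'\in\Phi$ and $z'\in\supp(A\conv B)$ via the uniform-differences property. The paper phrases this using ``isolated'' indices and passes through an expectation-plus-Markov step where you do a direct union bound, but the arguments are the same; your observation that the mass of $(A\conv B)_z$ lives entirely in the cloud $K_z$ is exactly the paper's second bullet dressed differently.
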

\begin{proof}
Fix an iteration $k \in [m]$ of the loop (\cref{alg:linear-hashing:line:loop}) and suppose that the condition in \cref{alg:linear-hashing:line:condition} is satisfied. Defining $V^k$ as in the previous lemma, we claim that $V^k$ is exactly $1$-sparse. Indeed, on the one hand, $V^k$ is not the all-zeros vector as $\norm{V^k}_1 = X_k > 0$. On the other hand, since $\norm{V^k}_1 \cdot \norm{\partial^2 V^k}_1 = X_k \cdot Z_k = Y_k^2 = \norm{\partial V^k}_1^2$ we have that $\norm{V^k}_0 \leq 1$ by \cref{lem:1-sparse}. Given that~$V^k$ is $1$-sparse, it is easy to check that the value $z := Y_k / X_k$ as computed in \cref{alg:linear-hashing:line:position} is the unique nonzero coordinate in $V^k$, i.e., $\supp(V^k) = \{ z \}$. It follows that the update in \cref{alg:linear-hashing:line:update} is in fact an update of the form ``$R \gets R + V^k$''. Recall that $\sum_k V^k = A \star B$, and thus the first item follows directly.

Next, we focus on the second item. We can assume that $z \in \supp(A \conv B)$ as otherwise the statement is trivial given the previous paragraph. Let $\Phi \subseteq [m]$ be the set from \cref{lem:hashing}. We say that~$z$ \emph{collides} with another index $z'$ if there are $\phi, \phi \in \Phi$ such that $h(z) + \phi \equiv h(z') + \phi' \mod m$. If $z$ does not collide with any other $z' \in \supp(A \conv B)$ then we say that~$z$ is \emph{isolated}. The remaining proof splits into the following two statements:
\begin{itemize}
\item \emph{Each index $z \in \supp(A \conv B)$ is isolated with probability $1 - \Order(\norm{A \conv B}_0 / m)$.} If $z$ collides with another index $z'$ then we have $h(z) - h(z') \equiv q \mod m$ for some $q = \phi - \phi'$, $\phi, \phi' \in \Phi$. For any fixed $q$ this event occurs with probability at most $\Order(\frac1m)$ by the uniform difference property of linear hashing (\cref{lem:hashing}). Taking a union bound over the constant number of elements $q$, we conclude that~$z$ collides with $z'$ with probability at most $\Order(\frac1m)$. Hence the expected number of collisions is $\Order(\norm{A \conv B} / m)$. Using Markov's inequality we finally obtain that a collision occurs with probability at most $\Order(\norm{A \conv B} / m)$ and only in that case $z$ fails to be isolated.
\item \emph{Whenever $z$ is isolated we have $R_z = (A \conv B)_z$.} To see this, it suffices to argue that for all~$k$ of the form $k \equiv h(z) + \phi \mod m$, for some $\phi \in \Phi$, the vectors $V^k$ are at most $1$-sparse. In that case the corresponding iterations $k$ each perform the update ``$R \gets R + V^k$'' in \cref{alg:linear-hashing:line:update} and the claim follows since $R_z = \sum_k V^k_z = (A \conv B)_z$. So suppose that some vector~$V^k$ is at least $2$-sparse, i.e., there exist $x, x' \in \supp(A)$ and~$y, y' \in \supp(B)$ such that $h(x) + h(y) \equiv h(x') + h(y') \equiv k \mod m$ and $x + y \neq x' + y'$. Then either $z' := x + y$ or $z' := x' + y'$ differs from $z$, and we have witnessed a collision between $z$ and $z'$. This contradicts the assumption that~$z$ is isolated.
\end{itemize}
Finally, note that the running time is dominated by the six calls to FFT in \crefrange{alg:linear-hashing:line:X}{alg:linear-hashing:line:Z} taking time $\Order(m \log m)$. The loop (\cref{alg:linear-hashing:line:loop}) only takes linear time.
\end{proof}

\noindent
Recall that \cref{alg:las-vegas} simply calls \cref{alg:linear-hashing} several times and returns the coordinate-wise maximum~$C$ of all computed vectors $R$ as soon as $\norm C_1 = \norm A_1 \cdot \norm B_1$. The bucket size~$m$ increases from iteration to iteration. Given the analysis of \cref{alg:linear-hashing} it remains to prove that \cref{alg:las-vegas} is correct and fast, thereby proving \cref{thm:las-vegas}.

\begin{lemma}[Correctness of \cref{alg:las-vegas}] \label{lem:correctness-las-vegas}
Whenever \cref{alg:las-vegas} outputs a vector $C$, then $C = A \conv B$ (with error probability $0$).
\end{lemma}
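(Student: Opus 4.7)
The plan is to show that the algorithm only ever produces an output when it has in fact reconstructed $A \conv B$ exactly, and that this is a deterministic property (independent of the random coins), which yields the claimed zero error probability.

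First, I would observe that by \cref{lem:correctness-linear-hashing}, every vector $R$ produced by the inner call (\crefrange{alg:las-vegas:line:hash}{alg:las-vegas:line:update}) satisfies $R \leq A \conv B$ coordinate-wise, \emph{always} (i.e., regardless of the random hash function sampled). Taking the coordinate-wise maximum in \cref{alg:las-vegas:line:max} preserves this inequality, so the vector $C$ computed at the end of each outer iteration satisfies $C \leq A \conv B$ coordinate-wise, deterministically.

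Next, I would use the fact that both $C$ and $A \conv B$ are nonnegative. Since $\norm{A \conv B}_1 = \sum_k (A \conv B)_k = \left(\sum_i A_i\right)\left(\sum_j B_j\right) = \norm A_1 \cdot \norm B_1$, the check on \cref{alg:las-vegas:line:return} is exactly the condition $\norm{C}_1 = \norm{A \conv B}_1$. Combined with the coordinate-wise inequality $C \leq A \conv B$, this forces equality: if $C_k < (A \conv B)_k$ at any coordinate, then summing would give $\norm C_1 < \norm{A \conv B}_1$, contradicting the guard. Therefore, whenever the algorithm returns, $C = A \conv B$.

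There is essentially no obstacle here, since the entire argument is ``deterministic given \cref{lem:correctness-linear-hashing}'': randomness only influences whether and when the guard $\norm C_1 = \norm A_1 \cdot \norm B_1$ fires (that will be handled by the termination/running-time analysis), but never whether a returned vector is correct. The only small subtlety worth spelling out is that nonnegativity of $A$ and $B$ is used both to conclude $\norm{A \conv B}_1 = \norm A_1 \cdot \norm B_1$ and to turn the coordinate-wise inequality plus matching $\ell_1$-norm into equality; without nonnegativity either step could fail.
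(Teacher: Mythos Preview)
Your proposal is correct and follows essentially the same argument as the paper: use \cref{lem:correctness-linear-hashing} to get $R \leq A \conv B$ always, pass this through the coordinate-wise maximum to obtain $C \leq A \conv B$, and then combine with the $\ell_1$-norm check $\norm C_1 = \norm A_1 \cdot \norm B_1 = \norm{A \conv B}_1$ and nonnegativity to force equality. The paper's proof is slightly terser but identical in substance.
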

\begin{proof}
In \cref{alg:las-vegas:line:update}, $C$ is computed as the coordinate-wise maximum of several vectors~$R$ computed by \cref{alg:linear-hashing}. The previous lemma asserts that $R \leq A \conv B$ (coordinate-wise) and therefore also $C \leq A \conv B$. Moreover, since $C$ was returned by the algorithm we must have $\norm C_1 = \norm A_1 \cdot \norm B_1$ (\cref{alg:las-vegas:line:return}). In conjunction, these facts imply that $C = A \conv B$, since both $C$ and $A \conv B$ are nonnegative vectors.
\end{proof}

\begin{lemma}[Running Time of \cref{alg:las-vegas}] \label{lem:time-las-vegas}
The expected running time of \cref{alg:las-vegas} is $\Order(t \log^2 t)$, where $t = \norm{A \conv B}_0$.
\end{lemma}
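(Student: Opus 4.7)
The plan is to split the expected running time into a deterministic ``ramp-up'' cost contributed by small values of $m$ and a randomized tail cost contributed by large values. Let $c$ be the constant from \cref{lem:correctness-linear-hashing}, and let $m^\ast$ denote the smallest power of~$2$ with $m^\ast \geq 2ct$. Each iteration of the outer loop at bucket size~$m$ invokes \cref{alg:linear-hashing} a total of $2\log m$ times at per-call cost $\Order(m \log m)$, giving a per-level cost of $\Order(m \log^2 m)$. Summing across $m = 1, 2, 4, \dots, m^\ast$ yields a geometric series dominated by its last term, contributing $\Order(m^\ast \log^2 m^\ast) = \Order(t \log^2 t)$ to the running time regardless of random outcomes.

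Next, I would argue that termination is likely as soon as $m \geq m^\ast$. By \cref{lem:correctness-linear-hashing}, each fixed nonzero coordinate $z \in \supp(A \conv B)$ is correctly recovered in any given trial with probability at least $1 - ct/m \geq 1/2$, so the probability that $z$ is missed across all $2 \log m$ independent trials at that level is at most $2^{-2 \log m} = m^{-2}$. A union bound over the $t$ nonzero coordinates shows that the coordinate-wise maximum $C$ computed on \cref{alg:las-vegas:line:max} fails to equal $A \conv B$ with probability at most $t/m^2 \leq 1/m$. Since \cref{lem:correctness-linear-hashing} also ensures $R \leq A \conv B$ coordinate-wise, the termination test $\norm{C}_1 = \norm{A}_1 \norm{B}_1$ on \cref{alg:las-vegas:line:return} is exactly equivalent to $C = A \conv B$; hence the algorithm proceeds past level $m$ with probability at most $1/m$.

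Finally, I would assemble the expected tail cost using independence of the fresh hashes across distinct outer iterations: the probability that the algorithm ever reaches level $2^k m^\ast$ is at most $\prod_{j=0}^{k-1} 1/(2^j m^\ast) = (m^\ast)^{-k} \cdot 2^{-\binom{k}{2}}$, while the work at that level is $\Order(2^k m^\ast \cdot (\log m^\ast + k)^2)$. Multiplying and summing over $k \geq 1$ yields a series dominated by the $k = 1$ term, which contributes $\Order(\log^2 m^\ast) = \Order(\log^2 t)$; the $k \geq 2$ tail decays super-geometrically thanks to the $2^{-\binom{k}{2}}$ factor. Combined with the ramp-up, the total expected running time is $\Order(t \log^2 t)$, as claimed. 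The main subtlety is the precise calibration of the $2 \log m$ inner repetitions: fewer would leave a per-coordinate failure probability only inverse-polynomial in $m$, breaking the union bound over $t$ coordinates and preventing the tail from being summable, whereas more repetitions would inflate the deterministic ramp-up beyond $\Order(t \log^2 t)$.
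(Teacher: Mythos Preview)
Your proof is correct and follows essentially the same approach as the paper's: split into a deterministic ramp-up over $m \leq m^\ast = \Theta(t)$ and an expected tail, use \cref{lem:correctness-linear-hashing} together with a union bound over coordinates to show each post-threshold level fails with small probability, and sum the resulting series. The only cosmetic difference is that the paper uses a coarser uniform per-level failure bound of $t^{-1}$ (so the tail is merely geometric, with $4^{-\eta}$ weights), whereas you keep the sharper level-dependent bound $t/m^2 \leq 1/m$ and obtain a super-geometric tail; the structure and conclusion are identical.
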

\begin{proof}
We first prove that the algorithm terminates with high probability as soon as the outer loop (\cref{alg:las-vegas:line:guess}) reaches a sufficiently large value. More precisely, let $c$ be the constant from \cref{lem:hashing} and fix any iteration of the outer loop with value $m \geq 2ct$. We claim that the algorithm terminates within this iteration with probability at least $1 - t^{-1}$. To this end we analyze the probability of the event~$C_z = (A \conv B)_z$, for any fixed index $z$. Recall that~$C$ is the coordinate-wise maximum of $2 \log m \geq 2 \log t$ vectors $R$ computed by \cref{alg:linear-hashing}. For any such vector $R$, \cref{lem:correctness-linear-hashing} guarantees that $R_z = (A \conv B)_z$ with probability at least $1 - ct / m \geq \frac12$. Hence, the probability that $C_z = (A \conv B)_z$ is at least $1 - 2^{-2\log t} = 1 - t^{-2}$. By a union bound over the $t$ nonzero entries $z$, the probability that algorithm correctly computes $C = A \conv B$ in this iteration is at least $1 - t^{-1}$.

The running time of a single iteration with value $m$ is dominated by the $\Order(\log m)$ calls to \cref{alg:linear-hashing} taking time $\Order(m \log m)$. Sampling the hash functions $h$ has negligible cost (by \cref{lem:hashing}) and so does running the inner-most loop (\cref{alg:las-vegas:line:buckets}). The previous paragraph in particular shows that the algorithm terminates before the $\eta$-th iteration after crossing the critical threshold $m \geq 2ct$, with probability at least $1 - t^{-\eta} \geq 1 - 4^{-\eta}$. Hence, we can bound the expected running time by the total time before this threshold ($m < 2ct$) plus the expected time after ($m = 2^\eta \cdot 2ct$) which can be bounded by a geometric series:
\begin{equation*}
    \sum_{\mu=0}^{\log(2ct)} \Order(2^\mu \log^2(2^\mu)) + \sum_{\eta=0}^\infty 4^{-\eta} \cdot \Order((2^\eta \cdot t) \cdot \log^2(2^\eta \cdot t)) = \Order(t \log^2 t). \qedhere
\end{equation*}
\end{proof}

\noindent
This finished the analysis of \cref{alg:las-vegas}, but not yet the proof of \cref{thm:las-vegas} which additionally claims a tail bound on the running time. To get this additional guarantee, we can modify \cref{alg:las-vegas} to increase $m$ more carefully; see the pseudocode in \cref{alg:las-vegas-high-prob}.

\begin{algorithm}[t]
\caption{} \label{alg:las-vegas-high-prob}
\begin{algorithmic}[1]
\Input{Nonnegative vectors $A, B \in \Nat^n$ and a real paramater $\epsilon > 0$}
\Output{$C = A \conv B$}
\smallskip
\State $C \gets (0, \dots, 0)$
\For{$\mu \gets 0, 1, 2, \dots, \infty$} \label{alg:las-vegas-high-prob:line:mu}
    \For{$\nu \gets 0, 1, 2, \dots, \mu$} \label{alg:las-vegas-high-prob:line:nu}
        \RepeatTimes{$\mu \cdot 2^{\nu / (1 + \epsilon)}$}
            \State Compute $R$ by \cref{alg:linear-hashing} with parameter $m = 2^{\mu-\nu}$
            \State Update $C \gets \max\{C, R\}$ (coordinate-wise) \label{alg:las-vegas-high-prob:line:update}
            \If{$\norm C_1 = \norm A_1 \cdot \norm B_1$} \Return $C$ \label{alg:las-vegas-high-prob:line:return}
            \EndIf
        \EndRepeatTimes
    \EndFor
\EndFor
\end{algorithmic}
\end{algorithm}

\begin{lemma}[Correctness and Running Time of \cref{alg:las-vegas-high-prob}] \label{lem:correctness-las-vegas-high-prob}
Given nonnegative vectors~$A, B \in \Nat^n$ and any parameter $\epsilon > 0$, \cref{alg:las-vegas-high-prob} correctly computes their convolution~$A \conv B$ in expected time $\Order(t \log^2 t)$, where $t = \norm{A \conv B}_0$. Moreover, with probability~$1 - \delta$ it terminates in time
\begin{equation*}
    \Order\!\left(t \log^2(t) \cdot \left(\frac{\log(t / \delta)}{\log t}\right)^{1+\epsilon+\order(1)}\right).
\end{equation*}
\end{lemma}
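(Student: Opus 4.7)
The plan is to follow the same outline as the analysis of Algorithm~\ref{alg:las-vegas} in \cref{lem:correctness-las-vegas,lem:time-las-vegas}, but with a sharper tail argument tailored to the two-parameter schedule. The argument splits into three parts: correctness, expectation, and the tail bound.

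\emph{Correctness} is essentially \cref{lem:correctness-las-vegas}: every vector $R$ produced by Algorithm~\ref{alg:linear-hashing} satisfies $R \leq A \conv B$ coordinate-wise by \cref{lem:correctness-linear-hashing}, so the running maximum $C$ does as well, and when the test in \cref{alg:las-vegas-high-prob:line:return} triggers, nonnegativity forces $C = A \conv B$.

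\emph{Expected running time.} Let $c$ be the constant from \cref{lem:correctness-linear-hashing} and set $\mu^* := \lceil \log(4ct) \rceil$, so that whenever $m = 2^{\mu-\nu} \geq 2^{\mu^*}$, a single call fails at a fixed coordinate with probability at most $\tfrac12$. The total work in outer iteration $\mu$ is
\begin{equation*}
    W(\mu) \;=\; \sum_{\nu=0}^{\mu} \mu \cdot 2^{\nu/(1+\epsilon)} \cdot O\!\big(2^{\mu-\nu}(\mu-\nu)\big) \;=\; O_\epsilon\!\big(\mu^2 \cdot 2^\mu\big),
\end{equation*}
where the factor $2^{-\nu\epsilon/(1+\epsilon)}$ makes the sum over $\nu$ geometric for every $\epsilon>0$. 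Summing over $\mu \leq \mu^*$ gives $O(t \log^2 t)$ work deterministically. For $\mu > \mu^*$, already using only the $\nu=0$ calls within iteration $\mu^*+k-1$ together with a union bound over the $t$ nonzero coordinates shows that the algorithm has not terminated at the start of outer iteration $\mu^*+k$ with probability at most $t^{-\Omega(k)}$; multiplying by $W(\mu^*+k) = O\!\big(t \cdot 2^k (\log t + k)^2\big)$ and summing in $k$ yields only $O(t \log^2 t)$ additional expected runtime.

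\emph{Tail bound.} Set $L := \log(t/\delta)$ and consider the \emph{critical iteration} $(\mu^*+s, s)$, where $s \geq 0$ is the smallest integer with $(\mu^* + s) \cdot 2^{s/(1+\epsilon)} \geq L + \log t$. At this iteration $m = 2^{\mu^*} \geq 4ct$, so per-call failure is at most $\tfrac12$, and after the prescribed $r = (\mu^*+s) \cdot 2^{s/(1+\epsilon)}$ repetitions, a union bound over the $t$ nonzero coordinates gives a total failure probability of at most $t \cdot 2^{-r} \leq \delta$. Hence the algorithm terminates by iteration $(\mu^*+s, s)$ with probability at least $1-\delta$. Solving the defining inequality yields $s = (1+\epsilon) \log_2(L/\log t) + O(1)$ for large $L/\log t$ (and $s = O(1)$ otherwise). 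The cumulative work up to and including this iteration is $O_\epsilon\!\big((\mu^*+s)^2 \cdot 2^{\mu^*+s}\big)$, and plugging in $\mu^* = \Theta(\log t)$ together with $2^{\mu^*+s} = O(t \cdot (L/\log t)^{1+\epsilon})$ gives a total of $O_\epsilon\!\big(t \log^2 t \cdot (L/\log t)^{1+\epsilon}\big)$. The remaining $O_\epsilon(1)$ overhead together with the $\polylog(L/\log t)$ overhead from $(\mu^*+s)^2$ is bounded by $(L/\log t)^{o(1)}$, and absorbing it into the exponent produces the claimed bound.

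The main technical subtlety is the calibration of the two-parameter schedule: increasing $\nu$ shrinks the bucket size (lowering per-call success) while simultaneously boosting the number of trials by $2^{\nu/(1+\epsilon)}$, and it is precisely this $(1+\epsilon)$-scaling that balances failure probability against accumulated work at the critical iteration, producing the exponent $1+\epsilon$. The remaining care is verifying that the $\epsilon$- and $\log\log$-dependent overheads are all absorbable into the $o(1)$ in the exponent, including the extreme regime where $\delta$ is very small and $s$ may exceed $\log t$, so that the $(\mu^*+s)^2$ factor needs to be accounted for by the $(L/\log t)^{o(1)}$ slack.
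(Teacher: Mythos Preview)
Your proof is correct and follows essentially the same approach as the paper: identify a critical pair $(\mu,\nu)$ at which $m=2^{\mu-\nu}\geq\Theta(t)$ and the repeat count reaches $\log(t/\delta)$, bound the termination probability there by a union bound over coordinates, and sum the work $W(\mu)=O_\epsilon(\mu^2 2^\mu)$ up to that point. The only structural difference is that the paper proves the tail bound first and then derives the expected-time bound from it, whereas you argue the expectation directly via the $t^{-\Omega(k)}$ tail past $\mu^*$; both routes are equally short.
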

\begin{proof}
The correctness proof is exactly as in \cref{lem:correctness-las-vegas} and can therefore be omitted. We prove the improved running time bound. Let $c$ be the constant from \cref{lem:correctness-linear-hashing} and focus on the iterations of the outer loops (\cref{alg:las-vegas-high-prob:line:mu,alg:las-vegas-high-prob:line:nu}) with values $\mu = M$ and $\nu = N$, where
\begin{equation*}
    N = \left\lceil (1+\epsilon)\log\left(\frac{\log(t/\delta)}{\log t}\right)\right\rceil~~\text{and}~~M = \ceil{\log(2ct)} + N.
\end{equation*}
In that case we have $m = 2^{\mu-\nu} \geq 2ct$. We claim that the algorithm terminates in this iteration with probability at least $1 - \delta$. To prove this, we again analyze the probability of the event $C_z = (A \conv B)_z$ for any fixed index $z$. The vector $C$ is the coordinate-wise maximum of all vectors $R$ computed in the inner loop (\cref{alg:las-vegas:line:repeat}) and \cref{lem:correctness-linear-hashing} proves that the event $R_z = (A \conv B)_z$ happens with probability at least $1 - c t / m \geq \frac12$. Since the inner loop is repeated $\mu \cdot 2^{\nu/(1 + \epsilon)}$ times, the event $C_z = (A \conv B)_z$ happens with probability at least
\begin{equation*}
    1 - 2^{-\mu \cdot 2^{\nu/(1 + \epsilon)}} \geq 1 - 2^{-\log(t) \cdot \log(t/\delta) / \log(t)} = 1 - \frac{\delta}t.
\end{equation*}
By a union bound over the $t$ nonzero coordinates~$z$, the algorithm computes $C = A \conv B$ (and consequently terminates) with probability at least $1 - \delta$.

Now, to analyze the running time, we have to bound the running time until the algorithm reaches the required values $\mu = M$ and $\nu = N$. The running time of a single execution of the inner-most loop is dominated by the call to \cref{alg:linear-hashing} which takes time $\Order(m \log m)$ by \cref{lem:correctness-linear-hashing}. Thus, with probability $1 - \delta$ the total running time is bounded by
\begin{equation*}
    \sum_{\mu=0}^M \sum_{\nu=0}^\mu \mu \cdot 2^{\nu / (1+\epsilon)} \cdot \Order(2^{\mu-\nu} \log(2^{\mu-\nu})) \leq \Order\!\left(M^2 \sum_{\mu=0}^M 2^\mu \sum_{\nu=0}^\mu 2^{-\epsilon\nu/(1+\epsilon)}\right) \leq \Order(2^M \cdot M^2).
\end{equation*}
Plugging in the definition of $M$ this becomes
\begin{equation*}
    \Order(2^M \cdot M^2) = \Order\!\left(t \cdot \left(\frac{\log(t / \delta)}{\log t}\right)^{1+\epsilon} \cdot M^2\right) = \Order\!\left(t \log^2(t) \cdot \left(\frac{\log(t / \delta)}{\log t}\right)^{1+\epsilon+\order(1)}\right).
\end{equation*}

Finally, we derive from the previous paragraph that expected running time is bounded by~$\Order(t \log^2 t)$. Indeed, the total running time exceeds $\ell \cdot t \log^2 t$ with probability $\ll \Order(\ell^{-3})$, and thus the expected running time is $t \log^2 t \cdot \sum_{\ell=1}^\infty (\ell+1) \cdot \Order(\ell^{-3}) \leq \Order(t \log^2 t)$.
\end{proof}

\noindent
This completes the proof of \cref{thm:las-vegas}; it suffices to plug in some constant $0 < \epsilon < 1$ into \cref{lem:correctness-las-vegas-high-prob} to obtain the claimed running time $\Order(t \log^2 (t / \delta))$.

\subsection{Accelerated Algorithm} \label{sec:las-vegas:sec:fast}
We now speed up \cref{alg:las-vegas-high-prob} in expectation. The crucial subroutine in that algorithm is \cref{alg:linear-hashing} which computes a good approximation $R$ of $A \conv B$. For the improvement we design a similar subroutine which instead computes a good approximation of $A \conv B - C$; see \cref{alg:prime-hashing}. 

\begin{lemma}[Correctness and Running Time of \cref{alg:prime-hashing}] \label{lem:correctness-prime-hashing}
Given vectors \mbox{$A, B, C \in \Int^n$} such that $A \conv B - C$ is nonnegative, and any parameter $m$, \cref{alg:prime-hashing} runs in time $\Order(m \log m)$ and computes a vector $R$ such that for every $z \in [n]$:
\begin{itemize}
\item $R_z \leq (A \conv B - C)_z$ (always), and
\item $R_z < (A \conv B - C)_z$ with probability at most $c \log n \cdot \norm{A \conv B - C}_0 / m$ for some constant~$c$.
\end{itemize}
\end{lemma}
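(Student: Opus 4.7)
The plan is to mirror the analysis of \cref{lem:correctness-linear-hashing}, with three adjustments: we hash the residual vector $D := A \conv B - C$ (nonnegative by assumption) rather than $A \conv B$, we use the random prime hash family of \cref{lem:hashing-prime} in place of the linear family, and we exploit the \emph{exact} additivity $h(x+y) \equiv h(x)+h(y) \pmod p$ of prime hashing to dispense with the offset set $\Phi$. For each bucket $k \in [p]$ let $V^k$ denote the restriction of $D$ to coordinates $z$ with $h(z) = k$.

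The first step is to identify the sketch quantities with the moments of $V^k$. Here \cref{alg:prime-hashing} is the natural prime-hash analogue of \cref{alg:linear-hashing}, computing $X = h(A) \conv_p h(B) - h(C)$ and the corresponding first- and second-derivative combinations for $Y$ and $Z$. Expanding $X_k$ and using additivity,
\begin{equation*}
    X_k = \sum_{x,y : h(x+y) \equiv k} A_x B_y - \sum_{z : h(z) = k} C_z = \sum_{z : h(z) = k} D_z = \norm{V^k}_1,
\end{equation*}
where the last equality uses $D \geq 0$. The identities $Y_k = \norm{\partial V^k}_1$ and $Z_k = \norm{\partial^2 V^k}_1$ follow by the same calculation as in \cref{sec:las-vegas:sec:simple}, now without cross terms introduced by $\Phi$.

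Given these identities, the first item is immediate. Whenever the test $X_k \neq 0$ and $Y_k^2 = X_k Z_k$ fires, \cref{lem:1-sparse}, applicable because $V^k$ is nonnegative, forces $\norm{V^k}_0 \leq 1$, so $V^k$ has a unique nonzero entry at $z = Y_k / X_k$ with value $X_k$; the algorithm therefore updates $R \gets R + V^k$ in that bucket. Summing over triggered buckets and using that each coordinate of $D$ falls in exactly one bucket (again by additivity) yields $R \leq \sum_k V^k = D$ coordinate-wise.

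For the probability bound, fix $z \in \supp(D)$ and call $z$ \emph{isolated} if $h(z') \neq h(z)$ for every $z' \in \supp(D) \setminus \{z\}$. An isolated $z$ gives $\supp(V^{h(z)}) = \{z\}$, so the test succeeds and $R_z$ receives the full value $D_z$. By the almost-universality of \cref{lem:hashing-prime}, $\Pr[h(z) = h(z')] = \Order(\log n / m)$ for any fixed $z' \neq z$; a union bound over the at most $\norm{D}_0 - 1$ competitors together with Markov's inequality yield the claimed failure probability $\Order(\log n \cdot \norm{D}_0 / m)$. The running time is dominated by the constantly many FFTs of length $p = \Theta(m)$ producing $X, Y, Z$, costing $\Order(m \log m)$; the bucket sweep contributes only $\Order(m)$. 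The \emph{main subtlety} is that cancellations within $D$ must survive the hashed computation: this is precisely where the true additivity of \cref{lem:hashing-prime} is essential, since any nonzero offset $\phi$ as in linear hashing would mix contributions from distinct coordinates and destroy the identity $X_k = \norm{V^k}_1$.
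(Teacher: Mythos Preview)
Your proof is correct and follows essentially the same route as the paper's: both define the bucketed residual vectors $V^k$, use the exact additivity of the random-prime hash to identify $X_k,Y_k,Z_k$ with $\norm{V^k}_1,\norm{\partial V^k}_1,\norm{\partial^2 V^k}_1$, invoke \cref{lem:1-sparse} for the first item, and bound the failure probability via almost-universality and a union bound. Your explicit remark that cancellations in $D$ survive the hashing precisely because $\Phi$ collapses to $\{0\}$ is a useful gloss the paper leaves implicit; the mention of Markov's inequality is harmless but unnecessary, as the union bound already gives the stated probability directly.
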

\begin{proof}
Recall that by \cref{lem:hashing-prime} the family of hash functions $h(x) = x \bmod p$ is truly additive, i.e., satisfies $h(x) + h(y) \equiv h(x + y) \mod p$ for all keys~$x, y$. As a consequence, it holds that $X = h(A) \conv_p h(B) - h(C) = h(A \conv B - C)$ and similarly $Y = h(\partial(A \conv B - C))$ and~$Z = h(\partial^2(A \conv B - C))$; the proofs of these statements are straightforward calculations.

The rest of the proof is very similar to \cref{lem:correctness-linear-hashing} and we merely sketch the differences. We analogously define vectors $V^k$ by $V^k_z = (A \conv B - C)_z$ if $z \equiv k \bmod p$ and $V^k_z = 0$ otherwise. Then, by the previous paragraph we have $X_k = \norm{V^k}_1$, $Y_k = \norm{\partial V^k}_1$ and $Z_k = \norm{\partial^2 V^k}_1$. It follows by the same argument, using the sparsity tester (\cref{lem:1-sparse}), that the recovered vector $R$ is exactly $R = \sum_k V^k$, where the sum is over all vectors $V^k$ which are at most $1$-sparse. The first item is immediate since $\sum_{k \in [p]} V^k = A \conv B - C$.

To prove the second item, it suffices to argue that with good probability each nonzero entry~$z$ does not collide with any other nonzero entry $z'$ under $h$. In that case, the vector~$V^k$ for $k = h(z)$ is $1$-sparse and the algorithm correctly computes $R_z = (A \conv B - C)_z$. To see that each index $z$ is likely isolated, we apply the $\Order(\log n)$-universality of $h$ (\cref{lem:hashing-prime}): The probability that~$z$ collides with some fixed index $z'$ is at most $\Order(\log(n) / p) \leq \Order(\log(n) / m)$. Taking a union bound over the $\norm{A \conv B - C}_0$ nonzero entries $z'$ yields the claimed bound.

Finally, observe that the running time is again dominated by the six calls to FFT in \crefrange{alg:prime-hashing:line:X}{alg:prime-hashing:line:Z}, which take time $\Order(m \log m)$. Sampling $h$ takes time $\polylog(m)$ and the loop in \cref{alg:prime-hashing:line:loop} takes linear time.
\end{proof}

\begin{algorithm}[t]
\caption{} \label{alg:prime-hashing}
\begin{algorithmic}[1]
\Input{Vectors $A, B, C \in \Int^n$ such that $A \conv B - C$ is nonnegative, and a parameter $m$}
\Output{A nonnegative vector $R \leq A \conv B - C$, for details see \cref{lem:correctness-prime-hashing}}
\smallskip
\State Sample a random prime $p \in [m, 2m]$ and let $h(x) = x \bmod p$
\State Compute $X \gets h(A) \conv_p h(B) - h(C)$ \label{alg:prime-hashing:line:X}
\State Compute $Y \gets h(\partial A) \conv_p h(B) + h(A) \conv_p h(\partial B) - h(\partial C)$ \label{alg:prime-hashing:line:Y}
\State Compute $Z \gets h(\partial^2 A) \conv_p h(B) + 2 h(\partial A) \conv_p h(\partial B) + h(A) \conv_p h(\partial^2 B) - h(\partial^2 C)$ \label{alg:prime-hashing:line:Z}
\State Initialize $R \gets (0, \dots, 0)$
\ForEach{$k \in [p]$} \label{alg:prime-hashing:line:loop}
    \If{$X_k \neq 0$ \AND{} $Y_k^2 = X_k \cdot Z_k$}
        \State $z \gets Y_k / X_k$
        \State $R_z \gets R_z + X_k$
    \EndIf
\EndForEach
\State\Return $R$
\end{algorithmic}
\end{algorithm}

\begin{algorithm}[t]
\caption{} \label{alg:las-vegas-fast}
\begin{algorithmic}[1]
\Input{Nonnegative vectors $A, B \in \Nat^n$}
\Output{$C = A \conv B$}
\smallskip
\State $C \gets (0, \dots, 0)$
\For{$m \gets 1, 2, 4, \dots, \infty$} \label{alg:las-vegas-fast:line:guess}
    \RepeatTimes{$3 \log \log n$} \label{alg:las-vegas-fast:line:first-repeat}
        \State Compute $R$ by \cref{alg:linear-hashing} with inputs $A, B$ and parameter $m$
        \State Update $C \gets \max\{C, R\}$ (coordinate-wise) \label{alg:las-vegas-fast:line:first-update}
    \EndRepeatTimes
    \RepeatTimes{$2 \log m$} \label{alg:las-vegas-fast:line:second-repeat}
        \State Compute $R$ by \cref{alg:prime-hashing} with inputs $A, B, C$ and parameter $m' = \ceil{\frac{m}{\log n}}$
        \State Update $C \gets C + R$ \label{alg:las-vegas-fast:line:second-update}
    \EndRepeatTimes
    \If{$\norm C_1 = \norm A_1 \cdot \norm B_1$} \Return $C$ \label{alg:las-vegas-fast:line:return}
        \EndIf
\EndFor
\end{algorithmic}
\end{algorithm}

\noindent
Next, to obtain the speed-up over \cref{alg:las-vegas-high-prob}, we combine \cref{alg:linear-hashing,alg:prime-hashing}. The rough idea is that \cref{alg:las-vegas-high-prob} reaches a good (but imperfect) approximation $C$ of $A \conv B$ after only $\log\log n$ iterations of the inner-most loop; after that point $A \conv B - C$ is sufficiently sparse so that a few iterations with \cref{alg:prime-hashing} can correct the remaining errors. The resulting algorithm is summarized in \cref{alg:las-vegas-fast}.

\begin{lemma}[Correctness of \cref{alg:las-vegas-fast}] \label{lem:correctness-las-vegas-fast}
Whenever \cref{alg:las-vegas-fast} outputs a vector $C$, then $C = A \conv B$ (with error probability $0$).
\end{lemma}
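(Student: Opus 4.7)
The plan is to establish the invariant $C \leq A \conv B$ (coordinate-wise) throughout the execution of \cref{alg:las-vegas-fast}. Once this invariant holds at the moment of return, the combination with $\norm C_1 = \norm A_1 \cdot \norm B_1 = \norm{A \conv B}_1$ (which is the return condition in \cref{alg:las-vegas-fast:line:return}) and the nonnegativity of both $C$ and $A \conv B$ will force $C = A \conv B$, exactly as in the proof of \cref{lem:correctness-las-vegas}.

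I would prove the invariant by induction over the updates to $C$. Initially $C = (0,\dots,0) \leq A \conv B$ since $A \conv B$ is nonnegative. There are exactly two types of updates:

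\begin{enumerate}
\item In \cref{alg:las-vegas-fast:line:first-update}, we assign $C \gets \max\{C, R\}$ where $R$ is produced by \cref{alg:linear-hashing} on inputs $A, B$. By \cref{lem:correctness-linear-hashing} we have $R \leq A \conv B$ coordinate-wise. Combined with the inductive hypothesis $C \leq A \conv B$, the coordinate-wise maximum also satisfies $\max\{C, R\} \leq A \conv B$.
\item In \cref{alg:las-vegas-fast:line:second-update}, we assign $C \gets C + R$ where $R$ is produced by \cref{alg:prime-hashing} on inputs $A, B, C$. Here we must first check that the precondition of \cref{lem:correctness-prime-hashing} is met, namely that $A \conv B - C$ is nonnegative; but this is exactly the inductive hypothesis. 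The lemma then guarantees $R \leq A \conv B - C$ coordinate-wise, so $C + R \leq A \conv B$, preserving the invariant.
\end{enumerate}

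The main (and only) subtlety is making sure the precondition for invoking \cref{lem:correctness-prime-hashing} is satisfied at the point of the call; everything else is a direct application of the subroutine correctness lemmas combined with the order of updates. Once the invariant is verified, the proof concludes exactly as before: if the algorithm returns $C$ on \cref{alg:las-vegas-fast:line:return}, then $C \leq A \conv B$ and $\norm C_1 = \norm A_1 \cdot \norm B_1 = \norm{A \conv B}_1$, and since $A \conv B - C$ is a nonnegative vector with $\ell_1$-norm zero it must be identically zero, i.e.\ $C = A \conv B$. Note that this argument is deterministic given the outputs of the subroutines—the ``always'' clauses of \cref{lem:correctness-linear-hashing,lem:correctness-prime-hashing} hold with probability $1$—so the error probability is indeed $0$.
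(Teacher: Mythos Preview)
Your proof is correct and follows essentially the same approach as the paper: maintain the invariant $0 \leq C \leq A \conv B$ across both types of updates (via \cref{lem:correctness-linear-hashing,lem:correctness-prime-hashing}) and conclude from the return condition. The only minor point is that you invoke the nonnegativity of $C$ in the final step without explicitly tracking $C \geq 0$ as part of the invariant; this is trivially preserved since both subroutines return $R \geq 0$, and the paper does include it in its stated invariant.
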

\begin{proof}
We first prove that the algorithm maintains the invariant $0 \leq C \leq A \conv B$. There are two types of updates. First, for a vector $R$ computed by \cref{alg:linear-hashing}, the algorithm updates ``$C \gets \max\{C, R\}$''. Since $R$ satisfies $0 \leq R \leq A \conv B$ by \cref{lem:correctness-linear-hashing}, this update maintains the invariant. Second, for a vector $R$ computed by \cref{alg:prime-hashing}, the algorithm update ``$C \gets C + R$''. Since $R$ satisfies $0 \leq R \leq A \conv B - C$ by \cref{lem:correctness-prime-hashing}, this update also upholds the invariant.

It is easy to conclude that the algorithm outputs the correct solution $C = A \conv B$, as this is the only vector $0 \leq C \leq A \conv B$ which also satisfies $\norm C_1 = \norm A_1 \cdot \norm B_1$.
\end{proof}

\begin{lemma}[Running Time of \cref{alg:las-vegas-fast}] \label{lem:time-las-vegas-fast}
The expected running time of \cref{alg:las-vegas-fast} is $\Order(t \log t \log\log n)$, where $t = \norm{A \conv B}_0$.
\end{lemma}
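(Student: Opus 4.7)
The plan is to bound the cost per outer-loop iteration and then show that, once $m$ exceeds $\Theta(t)$, the algorithm terminates in that iteration with high probability; the expected time then falls out as a geometric sum.

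First I would analyze the cost per iteration at parameter $m$. The first phase makes $3\log\log n$ calls to Algorithm~\ref{alg:linear-hashing} with parameter $m$, each taking $\Order(m\log m)$ by \cref{lem:correctness-linear-hashing}, contributing $\Order(m\log m\log\log n)$. The second phase makes $2\log m$ calls to Algorithm~\ref{alg:prime-hashing} with parameter $m' = \lceil m/\log n\rceil$, each taking $\Order(m'\log m')$ by \cref{lem:correctness-prime-hashing}, contributing $\Order((m/\log n)\log^2 m)$; in the regime $\log m = \Order(\log n)$ this simplifies to $\Order(m\log m)$. So a single iteration costs $\Order(m\log m\log\log n)$, and iterations with $m \leq \Order(t)$ contribute $\Order(t\log t\log\log n)$ in total by a geometric sum.

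Next I would argue that once $m \geq m^* := 2ct$, where $c$ is the constant from \cref{lem:correctness-linear-hashing}, the algorithm terminates within this iteration with probability at least $1 - \Order(1/\log n)$. Writing $m = 2^\mu m^*$, the per-coordinate failure of one call to Algorithm~\ref{alg:linear-hashing} is at most $ct/m = 2^{-\mu-1}$, so taking the coordinate-wise max across the $3\log\log n$ rounds of phase~1 drives the per-coordinate failure below $(\log n)^{-3(\mu+1)} \leq (\log n)^{-3}$. By Markov, the residual support $S := \supp(A\conv B - C)$ after phase~1 satisfies $|S| \leq t/\log^2 n$ with probability at least $1 - 1/\log n$. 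Conditioned on this, each round of phase~2 succeeds per surviving coordinate with probability at least $1 - c\log n\cdot|S|/m' \geq 1 - ct/m \geq \tfrac12$ by \cref{lem:correctness-prime-hashing}; since $|S|$ is non-increasing across rounds (as $C$ only grows and stays below $A\conv B$), the expected residual after $2\log m$ rounds is at most $|S|\cdot 2^{-2\log m} \leq 1/(m\log^2 n)$, so Markov gives termination with probability $\geq 1 - 1/(m\log^2 n)$.

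Putting these two probability bounds together, the probability of still running at the iteration with parameter $2^\mu m^*$ is at most $\Order(1/\log n)^\mu$, while the cost of that iteration is $\Order(2^\mu \cdot t\log t\log\log n)$. The expected running time is therefore
\begin{equation*}
    \Order(t\log t\log\log n) + \sum_{\mu \geq 1}\Order(1/\log n)^\mu \cdot \Order(2^\mu \, t\log t\log\log n) = \Order(t\log t\log\log n),
\end{equation*}
where the geometric series converges once $\log n$ exceeds a constant (small $n$ is trivial). The main obstacle is calibrating the two phases: phase~2 hashes into only $m/\log n$ buckets using an $\Order(\log n)$-universal family, so it succeeds only if phase~1 reduces the residual support to $\Order(m/\log^2 n)$. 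The $3\log\log n$ repetitions of phase~1 are tuned precisely to achieve this via Markov's inequality, and the $\log n$ factors then cancel cleanly in the phase~2 failure bound.
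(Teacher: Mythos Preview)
Your proposal is correct and follows essentially the same two-phase argument as the paper: bound the per-iteration cost, use Markov after the $3\log\log n$ rounds of \cref{alg:linear-hashing} to reduce the residual support to $t/\log^2 n$, then argue the $2\log m$ rounds of \cref{alg:prime-hashing} finish it off, and sum geometrically over iterations past the threshold $m \geq 2ct$. The only cosmetic differences are that the paper uses a union bound in phase~2 (obtaining termination probability $1 - t^{-1}$) whereas you use expectation plus Markov, and the paper contents itself with a constant $\tfrac34$ success probability per iteration while you carry the sharper $1 - \Order(1/\log n)$ bound.
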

\begin{proof}
For the analysis, we split the execution of the algorithm into two \emph{phases}: The first and initial phase ends as soon as $\norm{A \conv B - C}_0 \leq t / \log^2 n$, and the second phase ends when the algorithm terminates. To analyze the expected running times of both phases, we assume that the outer loop (\cref{alg:las-vegas-fast:line:guess}) has reached a value $m \geq 2ct$, where $c$ is the maximum of the constants in \cref{lem:correctness-linear-hashing,lem:correctness-prime-hashing}. In this case we claim that a single execution of the loop body terminates both phases with probability at least $\frac34$.
\begin{enumerate}
\item For the first phase we analyze the pseudocode in \crefrange{alg:las-vegas-fast:line:first-repeat}{alg:las-vegas-fast:line:first-update}. Fix an arbitrary index $z \in \supp(A \conv B)$. For a vector $R$ computed by \cref{alg:linear-hashing} we have $R_z = (A \conv B)_z$ with probability at least $1 - ct / m \geq \frac12$, by \cref{lem:correctness-linear-hashing}. If any of the vectors $R$ computed in \crefrange{alg:las-vegas-fast:line:first-repeat}{alg:las-vegas-fast:line:first-update} satisfies $R_z = (A \conv B)_z$, then we correctly assign ``$C_z \gets \max\{C_z, R_z\}$'' in \cref{alg:las-vegas-fast:line:first-update} (and we never change that entry for the remaining execution of the algorithm). Since the loop runs for $3 \log \log n$ iterations, the probability that $C_z$ remains incorrect is at most $2^{-3 \log\log n} = (\log n)^{-3}$. Therefore, the expected number of incorrectly assigned coordinates is at most $t / \log^3 n$ and by Markov's inequality that number exceeds $t / \log^2 n$ with probability at most $1 / \log n$. This is less than~$\frac18$ for sufficiently large~$n$.
\item For the second phase we analyze the pseudocode in \crefrange{alg:las-vegas-fast:line:second-repeat}{alg:las-vegas-fast:line:second-update}. Assuming that the first phase is finished, we have $\norm{A \conv B - C}_0 \leq t / \log^2 n$. The argument is similar to the first phase: A vector $R$ computed by \cref{alg:prime-hashing} satisfies $R_z = (A \conv B - C)_z$ with probability at least $1 - c \log n \cdot \norm{A \conv B - C}_0 / m' \geq 1 - ct / m \geq \frac12$. If any vector of the vectors~$R$ computed in \crefrange{alg:las-vegas-fast:line:second-repeat}{alg:las-vegas-fast:line:second-update} satisfies $R_z = (A \conv B - C)_z$ then we correctly update ``$C_z \gets C_z + R_z$'' (and this entry is unchanged for the remaining execution). The probability that $C_z$ is still incorrect after $2 \log m \geq 2 \log t$ iterations is $2^{-2\log t} = t^{-2}$. By a union bound over the $t$ nonzero entries $z$, we have correctly computed $C = A \conv B$ after finishing the loop with probability at least $1 - t^{-1}$. For sufficiently large $t$, this is at least~$\frac78$.
\end{enumerate} 
In combination, with probability $\frac34$ both phases finish and therefore the algorithm terminates within a single iteration of the outer loop. Each iteration takes time $\Order(m \log m \cdot \log\log n)$ (\cref{lem:correctness-linear-hashing}) plus $\Order(m' \log m' \cdot \log m) = \Order(m \log^2 m / \log n)$ (\cref{lem:correctness-prime-hashing}). To bound the total running time, we use that only with probability $4^{-\eta}$ the algorithm continues for another $\eta$ iterations of the outer loop after crossing the critical threshold $m \geq 2ct$. Hence, the expected running time is bounded by $\Order(t \log t \log\log n)$ before that threshold and by
\begin{equation*}
    \sum_{\eta = 1}^\infty 4^{-\eta} \cdot \Order\left((2^\eta \cdot t) \log(2^\eta \cdot t) \log\log n + (2^\eta \cdot t) \frac{\log^2(2^\eta \cdot t)}{\log n}\right) = \Order(t \log t \log\log n)
\end{equation*}
after. In total, the expected time is $\Order(t \log t \log\log n)$ as claimed.
\end{proof}

\subsection{Las Vegas Length Reduction} \label{sec:las-vegas:sec:length-reduction}
As a final step, we can reduce the running time of \cref{lem:time-las-vegas-fast} by replacing the $\log\log n$ factor with $\log\log t$. To this end, we implement a length reduction which reduces the convolution of arbitrary-length vectors to a small number of convolutions of length-$\poly(t)$ vectors. The pseudocode is given in \cref{alg:length-reduction}.

\begin{algorithm}[t]
\caption{} \label{alg:length-reduction}
\begin{algorithmic}[1]
\Input{Nonnegative vectors $A, B \in \Nat^n$}
\Output{$C = A \conv B$}
\smallskip
\RepeatInf
    \State Let $m = \norm A_0^3 \cdot \norm B_0^3$
    \State Sample a linear hash function $h : [n] \to [m]$ \label{alg:length-reduction:line:hash}
    \State Compute $X \gets h(A) \conv_m h(B)$ by \cref{alg:las-vegas-fast}
    \State Compute $Y \gets h(\partial A) \conv_m h(B) + h(A) \conv_m h(\partial B)$ by \cref{alg:las-vegas-fast}
    \State Compute $Z \gets h(\partial^2 A) \conv_m h(B) + 2 h(\partial A) \conv_m h(\partial B) + h(A) \conv_m h(\partial^2 B)$ by Alg.~\ref{alg:las-vegas-fast}
    \State Initialize $C \gets (0, \dots, 0)$
    \ForEach{$k \in [m]$} \label{alg:prime-hashing:line:buckets}
        \If{$X_k \neq 0$ \AND{} $Y_k^2 = X_k \cdot Z_k$} \label{alg:prime-hashing:line:condition}
            \State $z \gets Y_k / X_k$ \label{alg:prime-hashing:line:position}
            \State $C_z \gets C_z + X_k$ \label{alg:prime-hashing:line:update}
        \EndIf
    \EndForEach
    \If{$\norm C_1 = \norm A_1 \cdot \norm B_1$} \Return $C$
    \EndIf
\EndRepeatInf
\end{algorithmic}
\end{algorithm}

\begin{lemma}[Correctness and Running Time of \cref{alg:length-reduction}] \label{lem:correctness-length-reduction}
Given nonnegative vectors~$A, B$, \cref{alg:length-reduction} correctly computes their convolution $A \conv B$. The expected running time is $\Order(t \log t \log\log t)$, where $t = \norm{A \conv B}$.
\end{lemma}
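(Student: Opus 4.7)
The plan is to mirror the correctness analysis of \cref{lem:correctness-linear-hashing}. For each bucket $k \in [m]$ I would define $V^k_z = \sum_{x+y=z,\, h(x)+h(y) \equiv k \bmod m} A_x B_y$ and, by the same calculation as in that proof, verify that $X_k = \norm{V^k}_1$, $Y_k = \norm{\partial V^k}_1$, $Z_k = \norm{\partial^2 V^k}_1$, and $\sum_k V^k = A \conv B$. Whenever the test $Y_k^2 = X_k Z_k$ passes with $X_k \neq 0$, \cref{lem:1-sparse} forces $\norm{V^k}_0 = 1$, so the update contributes exactly $V^k$ to~$C$. This preserves the invariant $0 \leq C \leq A \conv B$, and the termination check $\norm C_1 = \norm A_1 \norm B_1$ then forces $C = A \conv B$.

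The more interesting part is the running time, where I would first show that the hashed convolutions have output sparsity only $\Order(t)$ despite the length $m = \norm A_0^3 \norm B_0^3$ being much larger. Since $A, B$ are nonnegative we have $\norm A_0, \norm B_0 \leq t$ (each nonzero of $A$ paired with the smallest element of $\supp(B)$ yields a distinct output index) while $\norm A_0 \norm B_0 \geq |\supp(A) + \supp(B)| = t$, so $t^3 \leq m \leq t^6$. By the almost-additiveness property of linear hashing (\cref{lem:hashing}), every nonzero bucket of $h(A) \conv_m h(B)$ lies in $\{h(z) + \phi \bmod m : z \in \supp(A \conv B),\, \phi \in \Phi\}$, a set of size $|\Phi| \cdot t = \Order(t)$; since $\supp(\partial^d A) \subseteq \supp(A)$, the same bound applies to $Y$ and $Z$. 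Each call to \cref{alg:las-vegas-fast} therefore operates on length-$m$ inputs with output sparsity $\Order(t)$ and runs in expected time $\Order(t \log t \log\log m) = \Order(t \log t \log\log t)$.

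It remains to bound the expected number of iterations of the outer loop. Following the collision analysis of \cref{lem:correctness-linear-hashing}, the algorithm succeeds in a single iteration provided no two distinct $z_1, z_2 \in \supp(A \conv B)$ satisfy $h(z_1) - h(z_2) \equiv q \bmod m$ for any $q \in \Phi - \Phi$. By the uniform-differences property each such event has probability $\Order(1/m)$, and a union bound over the $\Order(1)$ choices of $q$ and the at most $t^2$ pairs of positions gives failure probability $\Order(t^2/m) \leq \Order(1/t)$ (using $m \geq t^3$). Hence the expected number of iterations is $1 + \Order(1/t)$, and the total expected running time is $\Order(t \log t \log\log t)$. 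The main obstacle in this argument is the sparsity bound on $X, Y, Z$: the target running time leaves essentially no slack, so the fact that linear hashing is \emph{almost} additive---collapsing the $m$-sized cyclic output to $\Order(t)$ active buckets---is what keeps each call to \cref{alg:las-vegas-fast} within budget.
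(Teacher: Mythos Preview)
Your proposal is correct and follows essentially the same route as the paper: correctness via the $V^k$ decomposition and \cref{lem:1-sparse}, a union bound over the $\Order(t^2)$ collision events to show each iteration succeeds with high probability, and the per-call time bound from \cref{lem:time-las-vegas-fast}. You are actually more explicit than the paper on the key point that the hashed convolutions have output sparsity $\Order(t)$ via almost-additiveness (the paper simply writes $\Ex(T_{i,j}) \leq \Order(t \log t \log\log m)$ without justification). The one place where the paper is slightly more careful is in combining the two expectations: it introduces the indicator $S_i$ that iteration $i$ occurs, observes that $S_i$ depends only on the hash functions of earlier iterations while $T_{i,j}$ depends on fresh randomness, and then uses $\Ex[S_i T_{i,j}] = \Ex[S_i]\,\Ex[T_{i,j}]$; your deterministic $\Order(t)$ sparsity bound makes this independence argument go through, but you should state it rather than silently multiplying ``expected iterations'' by ``expected time per iteration''.
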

\begin{proof}
We skip the correctness part since the proof is exactly like the correctness argument of \cref{alg:las-vegas}; the only difference here is that $X, Y, Z$ are computed by \cref{alg:las-vegas-fast} instead of FFT, however, \cref{alg:las-vegas-fast} is a Las Vegas algorithm and therefore also always correct.

To analyze the running, we start by lower bounding the probability that any iteration terminates the algorithm. We say that a linear hash function $h$ as sampled in \cref{alg:length-reduction:line:hash} is \emph{good} if for all distinct $z, z' \in \supp(A \conv B)$ and all $\phi, \phi' \in \Phi$ it holds that $h(z) + \phi \not\equiv h(z') + \phi' \mod m$; here $\Phi$ is the set in \cref{lem:hashing}. Following the same arguments as in \cref{sec:las-vegas:sec:simple} one can prove that \cref{alg:length-reduction} terminates as soon as a good hash function is sampled. Therefore, we now lower bound the probability that a random linear hash function $h$ is good. For fixed~$z, z', \phi, \phi'$, the probability that $h(z) + \phi \equiv h(z') + \phi' \mod m$ is at most $\Order(\frac1m)$. We take a union bound over the $\Order(t^2)$ choices of $z, z', \phi, \phi'$ and conclude that a random function~$h$ is good with probability at least $1 - \Order(t^2 / m)$. Observe that $\norm A_0 + \norm B_0 - 1 \leq t \leq \norm A_0 \cdot \norm B_0$, and thus $t^3 \leq m \leq t^6$. Therefore, for sufficiently large $t$ each iteration of the loop terminates the algorithm with probability at least $\frac12$.

The running time of each iteration $i$ is dominated by the six convolutions computed by \cref{alg:las-vegas-fast}. Let $T_{i,1}, \dots, T_{i,6}$ denote the running times of these calls, respectively. Moreover, let $S_i$ denote the random variable which indicates whether the $i$-th iteration takes place (or whether the algorithm has terminated before). By the previous paragraph we have that $\Pr(S_i = 1) \leq 2^{-i}$. The total running time is bounded by
\begin{equation*}
    \sum_{i=1}^\infty S_i \cdot \sum_{j=1}^6 T_{i,j}.
\end{equation*}
Hence, by linearity of expectation and since the random variables $S_i$ and $T_{i,j}$ are independent, the expected running time is at most
\begin{equation*}
    \sum_{i=1}^\infty \Ex(S_i) \cdot \sum_{j=1}^6 \Ex(T_{i,j}) \leq \sum_{i=1}^\infty 2^{-i} \cdot \Order(t \log t \log\log m) = \Order(t \log t \log\log t).
\end{equation*}
Here, we used the expected time bound from \cref{lem:time-las-vegas-fast} to bound $\Ex(T_{i,j})$.
\end{proof}

\noindent
\cref{lem:correctness-length-reduction} completes the proof of \cref{thm:fast-las-vegas}.

\bibliography{refs}

\begin{thebibliography}{10}

\bibitem{AfshaniFKL19}
Peyman Afshani, Casper~B. Freksen, Lior Kamma, and Kasper~G. Larsen.
\newblock Lower bounds for multiplication via network coding.
\newblock In {\em Proceedings of the 46th International Colloquium Automata,
  Languages, and Programming}, ICALP '19, pages 10:1--10:12. Schloss
  Dagstuhl--Leibniz-Zentrum f\"ur Informatik, 2019.

\bibitem{Ailon13}
Nir Ailon.
\newblock A lower bound for {Fourier} transform computation in a linear model
  over $2 \times 2$ unitary gates using matrix entropy.
\newblock {\em Chicago Journal of Theoretical Computer Science}, 19, 05 2013.

\bibitem{AmirBP14}
Amihood Amir, Ayelet Butman, and Ely Porat.
\newblock On the relationship between histogram indexing and block-mass
  indexing.
\newblock {\em Philosophical Transactions of the Royal Society A:
  {Mathematical}, Physical and Engineering Sciences}, 372, 2014.

\bibitem{AmirKP07}
Amihood Amir, Oren Kapah, and Ely Porat.
\newblock Deterministic length reduction: {Fast} convolution in sparse data and
  applications.
\newblock In {\em Proceedings of the 18th Symposium on Combinatorial Pattern
  Matching}, CPM '07, pages 183--194. Springer, 2007.

\bibitem{ArnoldR15}
Andrew Arnold and Daniel~S. Roche.
\newblock Output-sensitive algorithms for sumset and sparse polynomial
  multiplication.
\newblock In {\em Proceedings of the 40th International Symposium on Symbolic
  and Algebraic Computation}, ISSAC '15, pages 29--36. {ACM}, 2015.

\bibitem{AxiotisABJNTW21}
Kyriakos Axiotis, Arturs Backurs, Karl Bringmann, Ce~Jin, Vasileios Nakos,
  Christos Tzamos, and Hongxun Wu.
\newblock Fast and simple modular subset sum.
\newblock In {\em Proceedings of the 4th Symposium on Simplicity in
  Algorithms}, SOSA '21, pages 57--67. {SIAM}, 2021.

\bibitem{BenOrT88}
Michael Ben-Or and Prasoon Tiwari.
\newblock A deterministic algorithm for sparse multivariate polynomial
  interpolation.
\newblock In {\em Proceedings of the 20th {ACM} Symposium on Theory of
  Computing}, STOC '88, pages 301--309. {ACM}, 1988.

\bibitem{Bluestein70}
Leo~I. Bluestein.
\newblock A linear filtering approach to the computation of discrete {Fourier}
  transform.
\newblock {\em IEEE Transactions on Audio and Electroacoustics},
  18(4):451--455, 1970.

\bibitem{Bringmann17}
Karl Bringmann.
\newblock A near-linear pseudopolynomial time algorithm for subset sum.
\newblock In {\em Proceedings of the 28th {ACM-SIAM} Symposium on Discrete
  Algorithms}, SODA '17, pages 1073--1084. {SIAM}, 2017.

\bibitem{BringmannFHSW20}
Karl Bringmann, Nick Fischer, Danny Hermelin, Dvir Shabtay, and Philip
  Wellnitz.
\newblock Faster minimization of tardy processing time on a single machine.
\newblock In {\em Proceedings of the 47th International Colloquium Automata,
  Languages, and Programming}, ICALP '20, pages 19:1--19:12. Schloss
  Dagstuhl--Leibniz-Zentrum f{\"u}r Informatik, 2020.

\bibitem{BringmannFN21}
Karl Bringmann, Nick Fischer, and Vasileios Nakos.
\newblock Sparse nonnegative convolution is equivalent to dense nonnegative
  convolution.
\newblock In {\em Proceedings of the 53rd {ACM} Symposium on Theory of
  Computing}, STOC '21, pages 1711--1724. {ACM}, 2021.

\bibitem{BringmannN20}
Karl Bringmann and Vasileios Nakos.
\newblock Top-$k$-convolution and the quest for near-linear output-sensitive
  subset sum.
\newblock In {\em Proceedings of the 52nd {ACM} Symposium on Theory of
  Computing}, STOC '20, pages 982--995. {ACM}, 2020.

\bibitem{BringmannN21}
Karl Bringmann and Vasileios Nakos.
\newblock Fast $n$-fold {Boolean} convolution via additive combinatorics.
\newblock In {\em Proceedings of the 48th International Colloquium Automata,
  Languages, and Programming}, ICALP '21, 2021.
\newblock To appear.

\bibitem{BringmannN21b}
Karl Bringmann and Vasileios Nakos.
\newblock A fine-grained perspective on approximating subset sum and partition.
\newblock In {\em Proceedings of the 32th {ACM-SIAM} Symposium on Discrete
  Algorithms}, SODA '21, pages 1797--1815. {SIAM}, 2021.

\bibitem{ChanL15}
Timothy~M. Chan and Moshe Lewenstein.
\newblock Clustered integer {3SUM} via additive combinatorics.
\newblock In {\em Proceedings of the 47th {ACM} Symposium on Theory of
  Computing}, STOC '15, pages 31--40. {ACM}, 2015.

\bibitem{Cheng05}
Qi~Cheng.
\newblock On the construction of finite field elements of large order.
\newblock {\em Finite Fields and Their Applications}, 11(3):358--366, 2005.

\bibitem{Cheng07}
Qi~Cheng.
\newblock Constructing finite field extensions with large order elements.
\newblock {\em SIAM J. Discret. Math.}, 21(3):726--730, 2007.

\bibitem{ColeH02}
Richard Cole and Ramesh Hariharan.
\newblock Verifying candidate matches in sparse and wildcard matching.
\newblock In {\em Proceedings of the 34th {ACM} Symposium on Theory of
  Computing}, STOC '02, pages 592--601. {ACM}, 2002.

\bibitem{Prony1795}
Gaspard~R. de~Prony.
\newblock Essai \'experimental et analytique: {Sur} les lois de la
  dilatabilit\'e de fluides \'elastique et sur celles de la force expansive de
  la vapeur de l'alkool, \`a diff\'erentes temp\'eratures.
\newblock {\em Journal de l'\'Ecole Polytechnique Flor\'eal et Plairial},
  1:24--76, 1795.

\bibitem{Dietzfelbinger18}
Martin Dietzfelbinger.
\newblock {\em Universal Hashing via Integer Arithmetic Without Primes,
  Revisited}, pages 257--279.
\newblock Springer, 2018.

\bibitem{FischerP74}
Michael~J. Fischer and Michael~S. Paterson.
\newblock String matching and other products.
\newblock {\em Complexity of Computation}, 7:113--125, 1974.

\bibitem{FoucartR13}
Simon Foucart and Holger Rauhut.
\newblock {\em A Mathematical Introduction to Compressive Sensing}.
\newblock Birkh\"{a}user Basel, 2013.

\bibitem{GilbertI10}
Anna Gilbert and Piotr Indyk.
\newblock Sparse recovery using sparse matrices.
\newblock {\em Proceedings of the {IEEE}}, 98(6):937--947, 2010.

\bibitem{GiorgiGC20}
Pascal Giorgi, Bruno Grenet, and Armelle Perret~du Cray.
\newblock Essentially optimal sparse polynomial multiplication.
\newblock In {\em Proceedings of the 45th International Symposium on Symbolic
  and Algebraic Computation}, ISSAC '20, pages 202--209. {ACM}, 2020.

\bibitem{HassaniehAKI12}
Haitham Hassanieh, Fadel Adib, Dina Katabi, and Piotr Indyk.
\newblock Faster {GPS} via the sparse {Fourier} transform.
\newblock In {\em Proceedings of the 18th Annual International Conference on
  Mobile Computing and Networking}, MobiCom '12, pages 353--364, 2012.

\bibitem{HassaniehIKP12}
Haitham Hassanieh, Piotr Indyk, Dina Katabi, and Eric Price.
\newblock Nearly optimal sparse {Fourier} transform.
\newblock In {\em Proceedings of the 44th {ACM} Symposium on Theory of
  Computing}, STOC '12, pages 563--578. {ACM}, 2012.

\bibitem{Indyk98}
Piotr Indyk.
\newblock Faster algorithms for string matching problems: matching the
  convolution bound.
\newblock In {\em Proceedings of the 39th {IEEE} Annual Symposium on
  Foundations of Computer Science}, FOCS '98, pages 166--173. {IEEE} Computer
  Society, 1998.

\bibitem{KaltofenL88}
Erich Kaltofen and Yagati~N. Lakshman.
\newblock Improved sparse multivariate polynomial interpolation algorithms.
\newblock In {\em Proceedings of the 13th International Symposium on Symbolic
  and Algebraic Computation}, ISAAC '88, pages 467--474. Springer, 1988.

\bibitem{Kaltofen10}
Erich~L. Kaltofen.
\newblock Fifteen years after dsc and wlss2 what parallel computations i do
  today.
\newblock In {\em Proceedings of the 4th International Workshop on Parallel and
  Symbolic Computation}, PASCO '10, pages 10--17. {ACM}, 2010.

\bibitem{KoiliarisX19}
Konstantinos Koiliaris and Chao Xu.
\newblock Faster pseudopolynomial time algorithms for subset sum.
\newblock {\em {ACM} Trans. Algorithms}, 15(3):40:1--40:20, 2019.

\bibitem{Li00}
Lei Li.
\newblock On the arithmetic operational complexity for solving {Vandermonde}
  linear equations.
\newblock {\em Japan Journal of Industrial and Applied Mathematics}, 17, 2000.

\bibitem{Lindsay89}
Bruce~G. Lindsay.
\newblock On the determinants of moment matrices.
\newblock {\em The Annals of Statistics}, 17(2):711--721, 1989.

\bibitem{MonaganP09}
Michael Monagan and Roman Pearce.
\newblock Parallel sparse polynomial multiplication using heaps.
\newblock In {\em Proceedings of the 34th International Symposium on Symbolic
  and Algebraic Computation}, ISSAC '09, pages 263--270. {ACM}, 2009.

\bibitem{Muthukrishnan95}
Shanmugavelayutham Muthukrishnan.
\newblock New results and open problems related to non-standard stringology.
\newblock In {\em Proceedings of the 6th Symposium on Combinatorial Pattern
  Matching}, CPM '95, pages 298--317. Springer, 1995.

\bibitem{Nakos20}
Vasileios Nakos.
\newblock Nearly optimal sparse polynomial multiplication.
\newblock {\em {IEEE} Trans. Inf. Theory}, 66(11):7231--7236, 2020.

\bibitem{Pan01}
Victor Pan.
\newblock {\em Structured matrices and polynomials: {Unified} superfast
  algorithms}.
\newblock 2001.

\bibitem{Roche08}
Daniel~S. Roche.
\newblock Adaptive polynomial multiplication.
\newblock {\em Proceedings of Milestones in Computer Algebra}, pages 65--72,
  2008.

\bibitem{Roche18}
Daniel~S. Roche.
\newblock What can (and can't) we do with sparse polynomials?
\newblock In {\em Proceedings of the 43rd International Symposium on Symbolic
  and Algebraic Computation}, ISSAC '18, pages 25--30. {ACM}, 2018.

\bibitem{Shoup88}
Victor Shoup.
\newblock New algorithms for finding irreducible polynomials over finite
  fields.
\newblock In {\em Proceedings of the 29th {IEEE} Annual Symposium on
  Foundations of Computer Science}, FOCS '88, pages 283--290. {IEEE} Computer
  Society, 1988.

\bibitem{Shoup90}
Victor Shoup.
\newblock Searching for primitive roots in finite fields.
\newblock In {\em Proceedings of the 22nd {ACM} Symposium on Theory of
  Computing}, STOC '90, pages 546--554. {ACM}, 1990.

\bibitem{Shparlinski92}
Igor~E. Shparlinski{\u{\i}}.
\newblock On primitive elements in finite fields and on elliptic curves.
\newblock {\em Mathematics of the {USSR}-Sbornik}, 71(1):41--50, 1992.

\bibitem{VanDerHoevenL12}
Joris Van Der~Hoeven and Gr{\'e}goire Lecerf.
\newblock On the complexity of multivariate blockwise polynomial
  multiplication.
\newblock In {\em Proceedings of the 37th International Symposium on Symbolic
  and Algebraic Computation}, ISSAC '12, pages 211--218. {ACM}, 2012.

\bibitem{vonzurGathenG13}
Joachim von~zur Gathen and Jürgen Gerhard.
\newblock {\em Modern Computer Algebra}.
\newblock Cambridge University Press, 3 edition, 2013.

\end{thebibliography}

\appendix
\section{Linear Hashing without Primes} \label{sec:hashing}
In this section we provide the proof for \cref{lem:hashing}.

\lemmahashing*

\noindent
For the proof we need another result. We say that a set $A = \{ r + ia : i \in [|A|] \} \subseteq \Int$ is an \emph{arithmetic progression} with \emph{step-width} $a$. The following lemma proves that two arithmetic progressions with coprime step-widths are as uncorrelated as possible.

\begin{lemma} \label{lem:coprime-progressions}
Let $A$ and $B$ be arithmetic progressions with coprime step-widths $a$ and $b$, respectively. Then $|A \cap B| \leq \min\{\frac{|A|-1}b, \frac{|B|-1}a\} + 1$.
\end{lemma}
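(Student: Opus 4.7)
The plan is to parameterize both progressions and reduce the intersection to a counting problem about solutions of a linear Diophantine equation in a box. Write $A = \{r + ia : i \in [|A|]\}$ and $B = \{s + jb : j \in [|B|]\}$. An element of $A \cap B$ is witnessed by a pair $(i,j) \in [|A|] \times [|B|]$ satisfying
\begin{equation*}
    ia - jb = s - r.
\end{equation*}
Since the step-widths $a, b$ are positive, each value in $A \cap B$ corresponds to exactly one such pair, so it suffices to bound the number of pairs.

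Next, I would use coprimality to describe the solution set. If $(i_1, j_1)$ and $(i_2, j_2)$ are two such pairs, then $(i_1 - i_2) a = (j_1 - j_2) b$, and since $\gcd(a,b) = 1$ this forces $b \mid (i_1 - i_2)$ and $a \mid (j_1 - j_2)$. In other words, fixing any one solution, all other solutions arise by simultaneously shifting $i$ by a multiple of $b$ and $j$ by a multiple of $a$. Hence the set of valid $i$-coordinates itself forms an arithmetic progression with common difference $b$, and the set of valid $j$-coordinates forms an arithmetic progression with common difference $a$.

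Finally, I would apply the length constraints. The valid $i$-coordinates live in $\{0, 1, \dots, |A|-1\}$, so an arithmetic progression with common difference $b$ inside this interval contains at most $\lfloor (|A|-1)/b \rfloor + 1$ terms, hence at most $(|A|-1)/b + 1$. The symmetric argument in terms of $j$ gives the bound $(|B|-1)/a + 1$. Taking the minimum yields the claimed inequality. The proof is essentially mechanical once the coprimality step is in place; I do not expect any real obstacle, only the minor care of reading off the two bounds in parallel.
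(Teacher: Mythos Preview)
Your proof is correct and follows essentially the same approach as the paper's: both arguments use coprimality to show that consecutive elements of $A\cap B$ differ by a multiple of $ab$ (equivalently, that the admissible $i$-indices lie in a single residue class modulo $b$), and then count how many such values fit in the index range. The only cosmetic difference is that the paper first shifts so that both progressions start at the first common element $0$, whereas you work directly with the Diophantine equation $ia - jb = s - r$; the counting step and the final bound are identical.
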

\begin{proof}
We may assume that $A = \{0, a, \dots, (|A| - 1)a\}$ and $B = \{0, b, \dots, (|B| - 1)b\}$ (remove all points before the first common element of $A$ and $B$ and shift such that the first common element becomes zero). Since $a$ and $b$ are coprime, the intersection $A \cap B$ consists only of multiples of $ab$ and thus \raisebox{0pt}[0pt][0pt]{$|A \cap B| \leq \floor{\frac{(|A| - 1)a}{ab}} + 1$}. The same bound holds symmetrically for~$B$.
\end{proof}

\begin{proof}[Proof of \cref{lem:hashing}]
First, assume that $m$ is odd. Let $N$ be the smallest power of two larger than $n \cdot m$. We then define the family of hash functions as $h(x) = (a x \bmod N) \bmod m$, where $a \in [N]$ is a random \emph{odd} number. We will now prove the three claimed properties for this family.
\begin{enumerate}
\itemdesc{Efficiency:} Sampling $h$ only involves constructing $N$ and sampling a random odd number. Both operations take constant time in the Word RAM model. Evaluating $h$ is also in constant time.
\itemdesc[3]{Almost-Additiveness:} Fix any keys $x, y \in [n]$. Then for one of the two choices $\phi \in \{0, N\}$ it holds that $(a x \bmod N) + (a y \bmod N) = (a (x + y) \bmod N) + \phi$. By reducing this equation modulo $m$, it follows that $\Phi = \{0, N \bmod m\}$ is a suitable choice.
\itemdesc[2]{Uniform Differences:} To prove that $h$ satisfies the uniform difference property, it suffices to prove that $h$ is $\Order(1)$-uniform, that is, $\Pr(h(z) = \psi) \leq \Order(\frac1m)$ for all~$z \in [n]$ and~$\psi \in [m]$. Indeed, by the previous paragraph we have $h(x) - h(y) \equiv q \mod m$ only if $h(x - y) \equiv q-\phi \mod m$ for some $\phi \in \Phi$. Taking a union bound over the constant number of elements $\phi$ then yields the claim.

To check that $h$ is $\Order(1)$-uniform, we write $z = 2^k \cdot w$ where $w$ is odd. Then $a z \bmod N$ is uniformly distributed in $A = \{ 2^k \cdot i : i \in [2^{-k} \cdot N] \}$. Indeed, by identifying $[N]$ with the finite ring~$\Int / N \Int$, $A$ is the smallest additive subgroup of $\Int / N \Int$ which contains $z$, and thus multiplying with a random unit $a \in (\Int / N \Int)^\times$ randomly permutes $z$ within that subgroup. It follows that
\begin{equation*}
    \Pr(h(z) = \psi) = \frac{2^k \cdot |A \cap B|}{N},
\end{equation*}
where $B \subseteq [N]$ consists of all numbers equal to $\psi$ modulo $m$. Observe that $A$ and $B$ are both arithmetic progressions with step-widths $2^k$ and $m$, respectively. Recall that $m$ is odd, therefore $2^k$ and $m$ are coprime and \cref{lem:coprime-progressions} applies and yields $|A \cap B| \leq \frac{N}{2^k m} + 1$. We finally obtain $\Pr(h(z) = \psi) \leq \frac1m + \frac nN \leq \Order(\frac1m)$.
\end{enumerate}
Finally, we remove the assumption that $m$ is odd. If $m$ is even, then we simply apply the previous construction for $m-1$ to obtain a linear hash function $h : [n] \to [m-1]$ and reinterpret this as a function $h : [n] \to [m]$. It is easy to see that this preserves efficiency and uniform differences, and we claim that is also preserves almost-additiveness. Indeed, for arbitrary keys~$x, y$ we know that $h(x) + h(y) \equiv h(x + y) + \phi \mod{(m-1)}$ for some~$\phi \in \Phi$. Both sides of the equation are integers less than $2m-2$ and hence their images modulo $m-1$ and $m$, respectively, differ by at most $1$. Therefore, we have $h(x) + h(y) \equiv h(x + y) + \phi' \mod m$ for some $\phi' \in \Phi' = \{ \phi + \sigma : \phi \in \Phi, \sigma \in \{-1, 0, 1\}\}$.
\end{proof}

\end{document}